\documentclass[11pt,oneside,english,reqno]{amsart}
\usepackage{geometry}
\usepackage[normalem]{ulem} 
\geometry{verbose,tmargin = 2.5cm,bmargin = 2.5cm,lmargin = 2.5cm,rmargin = 2.5cm,footskip = 0cm}
\usepackage{color}
\usepackage{float}
\usepackage{amsthm}
\usepackage{amstext}
\usepackage{amssymb}
\usepackage{graphicx}
\usepackage[pdfusetitle, bookmarks = true,bookmarksnumbered = false,bookmarksopen = false,
 breaklinks = false,pdfborder = {0 0 1},backref = false,colorlinks = true]{hyperref}

\makeatletter
\numberwithin{equation}{section}
\numberwithin{figure}{section}
  \theoremstyle{definition}
  \newtheorem{defn}{\protect\definitionname}
  \theoremstyle{plain}
  \newtheorem{thm}{\protect\theoremname}
  \theoremstyle{plain}
  \newtheorem{lem}{\protect\lemmaname}
  \theoremstyle{plain}
  \newtheorem{cor}{\protect\corollaryname}
 \ifx\proof\undefined\
   \newenvironment{proof}[1][\proofname]{\par
     \normalfont\topsep6\p@\@plus6\p@\relax
     \trivlist
     \itemindent\parindent
     \item[\hskip\labelsep
           \scshape
       #1]\ignorespaces
   }{%
     \endtrivlist\@endpefalse
   }
   \providecommand{\proofname}{Proof}
 \fi


\usepackage[foot]{amsaddr}
\usepackage{amsthm}
\makeatother

\makeatother

\usepackage{subfigure}
\makeatother
\usepackage{tikz}
\usepackage{scalefnt}
\usepackage[english]{babel}
 \makeatletter
   \renewcommand{\subsection}{\@startsection{subsection}{1}{0mm}
   {\baselineskip}%
   {0.5\baselineskip}{\normalfont\normalsize\bfseries}}%
   \makeatother

\makeatother

\providecommand{\corollaryname}{Corollary}
\providecommand{\definitionname}{Definition}
\providecommand{\lemmaname}{Lemma}
\providecommand{\theoremname}{Theorem}


\newcommand{\iii}{\mathbb{I}}
\newcommand{\ket}[1]{\vert #1 \rangle}
\newcommand{\bra}[1]{\langle #1 \vert}


\begin{document}
\title{Quantum 3-SAT is $\text{QMA}_1$-Complete}
\author{David Gosset$^{\dagger}$}
\author{Daniel Nagaj$^{\star}$}
\address{$^\dagger$ Department of Combinatorics \& Optimization and Institute for Quantum Computing, University of Waterloo}
\address{$^\star$ Faculty of Physics, University of Vienna}
\email{dngosset@gmail.com, daniel.nagaj@univie.ac.at}
\begin{abstract}
Quantum satisfiability is a constraint satisfaction problem that generalizes classical boolean satisfiability. In the quantum $k$-SAT problem, each constraint is specified by a $k$-local projector and is satisfied by any state in its nullspace. Bravyi showed that quantum $2$-SAT can be solved efficiently on a classical computer and that quantum $k$-SAT with $k\geq 4$ is QMA$_1$-complete \cite{Bravyi06}. Quantum $3$-SAT was known to be contained in QMA$_1$ \cite{Bravyi06}, but its computational hardness was unknown until now. We prove that quantum $3$-SAT is QMA$_1$-hard, and therefore complete for this complexity class. 
\end{abstract}
\maketitle


\section{Introduction \label{sec:Intro}}

Satisfiability is a widely studied constraint satisfaction problem.
In this problem one is given a list of constraints on $n$ boolean variables and asked
to determine if there is an $n$-bit string which satisfies each of
them. When each constraint involves at most $k$ variables this problem
is called $k$-satisfiability, or $k$-SAT. Although $2$-SAT can
be solved efficiently on a classical computer, $3$-SAT is NP-complete.

Bravyi introduced a quantum generalization of satisfiability \cite{Bravyi06}.
In the quantum $k$-SAT problem the constraints act on an $n$-qubit
Hilbert space and we are asked to determine if there is a state which
satisfies all of them. Each constraint is specified by a $k$-local%
\footnote{ A $k$-local operator acts nontrivially on at most $k$ qubits and
as the identity on all other qubits.%
} projector and is satisfied by any state in its nullspace. 

Bravyi proved that quantum $2$-SAT is in P, quantum $k$-SAT is contained
in QMA$_{1}$ for any constant $k$, and quantum $k$-SAT for $k\geq4$
is QMA$_{1}$-complete. The computational complexity of quantum $3$-SAT
has been an open question. In this paper, we resolve this question
by proving that quantum 3-SAT is QMA$_{1}$-complete.

Less is known about quantum $k$-SAT than is known about its classical
counterpart. Random instances of quantum $k$-SAT have been studied
by Laumann et. al. as a function of the clause density $\alpha$ \cite{Laumann10,LaumannProd}.
As in the classical case, it is conjectured that a satisfiability
threshold $\alpha_{c}(k)$ exists, above which the probability that
a random instance is satisfiable approaches zero as $n\rightarrow\infty$
and below which this probability approaches one \cite{Laumann10}.
Some bounds on this threshold value have been proven using a quantum
version of the Lov{\'{a}}sz  local lemma \cite{quantumLovasz} and by using
graph-theoretic techniques \cite{BravyiMoore} but only the case $k = 2$
is fully understood \cite{Laumann10,JZZfrustrationfree}. 
Other previous work has focused on quantum satisfiability
with qudit variables of dimension $d>2$ \cite{Ramisline,New3local,Eldar08,CriticalityWithoutFrustration}
or in restricted geometries \cite{Ramisline,Treesqudits}.

Quantum satisfiability is relevant to
the study of \emph{frustration-free} Hamiltonians. A frustration-free
Hamiltonian is a local Hamiltonian (a sum of $k$-local Hermitian operators for some constant $k$) with groundstates that minimize the energy of each local term individually. Such Hamiltonians naturally arise in the study of quantum error correction  and play a central role in the field of Hamiltonian complexity, which touches upon condensed matter physics, quantum computation and computational complexity theory \cite{OsborneReview}.  We can view quantum $k$-SAT as the problem where one is asked to determine if a sum of $k$-local
projectors is frustration-free, that is to say, if its ground energy
is zero. 

The computational complexity of quantum $k$-SAT is naturally compared
with that of the $k$-Local Hamiltonian problem, which can be viewed
as the quantum analogue of MAX $k$-SAT. In this problem one is given 
a Hamiltonian which is a sum of $k$-local operators, along with constants
$a,b$ such that $a<b$. One is asked to determine if the ground energy 
of the given Hamiltonian  is less than $a$ (yes instance) or greater 
than $b$ (no instance), promised that one of these conditions holds. 
Note that, for yes instances of this problem, a ground state of the 
Hamiltonian need not have minimal energy for each $k$-local term; 
such a system can be frustrated. Because of the possibility of frustration,
the $k$-Local Hamiltonian problem can be computationally more difficult
than quantum $k$-SAT. Indeed, the $k$-local Hamiltonian problem
is QMA-complete for $k\geq2$ \cite{LocalHam}. On the other hand,
our result that quantum $3$-SAT is QMA$_{1}$-complete shows that
these problems are of comparable difficulty for $k\geq3$ (putting
aside the subtle differences between the definitions of QMA and QMA$_{1}$). 

In the next Section, we provide the basic definitions, describe our
results in more detail and give an overview of the remainder of the
paper.


\section{Definitions and Overview\label{sec:Preliminaries-and-Overview}}

We first define the complexity class QMA, or Quantum Merlin-Arthur.
This class gets its name from a scenario involving Merlin and Arthur,
who at the outset are both given an instance of a promise problem
encoded as a bit-string $X$. Arthur wishes to know the correct answer
to this problem (which is either yes or no) but his time and space
resources are bounded as polynomial functions of $|X|$. Merlin has
unbounded computational power and can easily obtain the correct answer.
Merlin wants to convince Arthur the answer is ``yes'', but Arthur
doesn't trust Merlin, so he asks for proof. Merlin hands over an $n$-qubit
quantum state $|W\rangle$ (called a witness) that Arthur uses to
verify the claim in the following way. He adjoins some number $n_{a}$
of ancilla qubits each in the state $|0\rangle$ to produce 
\[
|W\rangle|0\rangle^{\otimes n_{a}}
\]
 (the total number $n + n_{a}$ of qubits in this state must be upper
bounded by a polynomial in $|X|$), then applies a polynomial sized
verification circuit $U_{X}$ and then measures the first ancilla
qubit in the computational basis. If the measurement outcome is $1$,
he accepts Merlin's claim that $X$ is a yes instance. Arthur's acceptance
probability given the state $|W\rangle$ is therefore 
\begin{equation}
	\text{AP}\left(U_{X},|W\rangle\right)
	 =  \left\Vert 
		\left(\iii ^{\otimes n} \otimes |1\rangle\langle 1|_{(n + 1)}
			\otimes \iii^{\otimes(n_{a}-1)}\right)
			U_{X}|W\rangle|0\rangle^{\otimes n_{a}}
	\right\Vert^{2}.\label{eq:acceptanceprobability}
\end{equation}
For problems in the class QMA, if Merlin is being truthful he can
convince Arthur with probability at least $\frac{2}{3}$. On the other
hand, if Merlin is lying (i.e., the answer is actually ``no'') then
he can only fool Arthur with probability at most $\frac{1}{3}$. 
\begin{defn}
[\textbf{QMA}]

\noindent A promise problem $L_{\text{yes}}\cup L_{\text{no}}\subset\{0,1\}^{*}$
is contained in QMA if and only if there exists a uniform polynomial-size
quantum circuit family $U_{X}$ such that

\medskip{}

If $X\in L_{\text{yes}}$ there exists a state $|W\rangle$ such that
$\text{AP}\left(U_{X},|W\rangle\right)\geq\frac{2}{3}$ (completeness).

\medskip{}

If $X\in L_{\text{no}}$ then $\text{AP}\left(U_{X},|W\rangle\right)\leq\frac{1}{3}$
for any state $|W\rangle$ (soundness).

\smallskip{}

\end{defn}
As indicated above, the first property is called \emph{completeness}
and the second property \emph{soundness}. Here we have defined QMA
with constant completeness $\frac{2}{3}$ and soundness $\frac{1}{3}$.
Kitaev showed that these parameters can be amplified: we obtain an
equivalent definition with soundness $2^{-\Omega(|X|^{\alpha})}$
and completeness $1-2^{-\Omega(|X|^{\alpha})}$ for any constant $\alpha$
\cite{Kitaev:2002:CQC:863284} (see also \cite{MarriottWatrous,fastQMAamp}). 

QMA$_{\text{1}}$ is defined in a similar way to QMA with two modifications.
The first is ``perfect'' completeness -- for $X\in L_{\text{yes}}$,
Merlin can convince Arthur with probability exactly equal to $1$.
The second difference is that Arthur's verification circuit must consist
of a sequence of gates from a fixed universal gate set $\mathcal{G}$.
The definition of QMA$_{1}$ is not known to be independent of the
gate set used. In this paper we use the standard choice
\begin{equation}
	\mathcal{G} = \{\widehat{H},T,\text{CNOT}\},\label{eq:Gset}
\end{equation}
 where 
\[
	\widehat{H} = \frac{1}{\sqrt{2}}\left(\begin{array}{cc}
		1 & 1\\
		1 & -1
	\end{array}\right),\quad 
	T = \left(\begin{array}{cc}
		1 & 0\\
		0 & e^{\frac{i\pi}{4}}
	\end{array}\right),\quad
	\text{CNOT} = \left(\begin{array}{cccc}
		1 & 0 & 0 & 0\\
		0 & 1 & 0 & 0\\
		0 & 0 & 0 & 1\\
		0 & 0 & 1 & 0
	\end{array}\right).
\]

\begin{defn}
[\textbf{QMA}$_\textbf{1}$]\label{QMA_one}

A promise problem $L_{\text{yes}}\cup L_{\text{no}}\subset\{0,1\}^{*}$
is contained in QMA$_{\text{1}}$ if and only if there exists a uniform
polynomial-size quantum circuit family $U_{X}$ over the gate set
$\mathcal{G}$ such that

\medskip{}

If $X\in L_{\text{yes}}$ there exists a state $|W\rangle$ such that
$\text{AP}\left(U_{X},|W\rangle\right) = 1$ (perfect completeness).

\medskip{}

If $X\in L_{\text{no}}$ then $\text{AP}\left(U_{X},|W\rangle\right)\leq\frac{1}{3}$
for any state $|W\rangle$ (soundness).

\smallskip{}

\end{defn}

Just as with QMA, the soundness of a QMA$_{1}$ verification procedure
(taken to be $\frac{1}{3}$ in the above) can be amplified so that
it is very close to zero \cite{Kitaev:2002:CQC:863284}. 

Note that we could modify the definition of QMA$_{1}$ so that Arthur's
verification procedure may use intermediate measurements in the computational
basis as well as operations conditioned on these measurement outcomes.
However, expanding the set of allowed verification protocols in this
way we obtain an equivalent definition of QMA$_{1}$. To see this,
note that the controlled version of any unitary written as a product
of gates from $\mathcal{G}$ can also be written as a product of gates
from $\mathcal{G}$. Given a verification procedure which has some
intermediate measurements we can use controlled unitaries in place
of all conditional operations. This is the standard trick for deferring
measurements until the end of an algorithm; in this case we need not
ever measure any of the qubits except the ancilla which gives the
output of the computation.

We think of QMA$_{1}$ as being very similar to QMA, although the
precise relationship between these two classes has yet to be determined.
One might expect that QMA$ = $QMA$_{1}$ since the analogous classical
complexity classes MA and MA$_{1}$ are known to be equal \cite{Zachos87}.
This is an interesting open question in quantum complexity theory
but we do not explore it further here (see \cite{AaronsonQMA, JordanQCMA, QMAconstEPR} for recent developments). 
For us, the one-sided error in QMA$_{1}$ is primarily
an annoyance that prevents the use of standard techniques such as
converting between different universal gate sets.

Let us now turn our attention to quantum $3$-SAT. In this problem
we are given a Hamiltonian
\[
H = \sum_{i = 1}^{r}\Pi_{i}
\]
 that is a sum of $3$-local projectors $\Pi_{i}$ acting on an $n$-qubit
Hilbert space. We are promised that either $H$ has ground state energy
zero, or else its ground state energy is greater than a constant (which
we take without loss of generality to be $1$) and we are asked to
decide which is the case. 

Note that the matrix elements of a projector $\Pi_{i}$ in an instance
of quantum $3$-SAT cannot be specified as arbitrary complex numbers
with unlimited precision. In our definition of quantum $3$-SAT we
must constrain the set of allowed projectors in some way. One possible
choice would be to consider projectors where each matrix element is
given to some number of decimal places, but this choice is not well
suited to the one-sided error setting we are working in. In particular,
it is not clear whether quantum $3$-SAT, defined in this way, is
contained in QMA$_{1}$ (although it is contained in QMA). On the other hand, if instead we consider projectors where each matrix element is expressed as a rational number with denominator $2^s$ (for some fixed integer $s$) then the problem is contained in QMA$_{1}$. More generally, let us consider defining quantum $3$-SAT with some set of allowed projectors $\mathcal{P}$. Bravyi proved
that quantum $3$-SAT is contained in QMA$_{1}$ as long as $\mathcal{P}$
is chosen to satisfy the following condition: there is an efficient
algorithm which can be used to measure the eigenvalue of a projector
$\Pi\in\mathcal{P}$ in a given state $\ket{\psi}$ using the gate set
$\mathcal{G}$ \cite{Bravyi06,Bravyi_communication}. Of course, since
$\mathcal{G}$ is universal, such a measurement can always be \textit{approximated}
using gates from this set, but here we require an {\em exact} measurement
because of the one-sided error. We discuss this issue in more detail 
in  Appendix \ref{sec:containmentINqma1}, but we mention this here 
so the reader is aware of the subtlety. The situation is further 
complicated by the fact that this technical point was treated incorrectly in reference \cite{Bravyi06}
due to an error in the proof of Lemma 5 \cite{Bravyi_communication}.

In this work we define quantum $3$-SAT with a restricted set of projectors
$\mathcal{P}$ given below. While quantum $3$-SAT remains in QMA$_{1}$
for larger classes of projectors, restricting to a smaller set makes
our QMA$_{1}$-hardness result stronger. The specific set $\mathcal{P}$
that we use arises from technical considerations.
\begin{defn}
\label{Delta_proj}Let $\mathcal{{P}}$ be the set of $3$-local projectors
$\Pi$ which satisfy one of the following two conditions:

1. Every matrix element of $\Pi$ in the computational basis has the
form 
\begin{equation}
\frac{1}{4}\left(a + i\, b + \sqrt{2}\, c + i\sqrt{2}\, d\right)\label{eq:mat_el_ring}
\end{equation}
for $a,b,c,d\in\mathbb{Z}.$ 

2. There is a $3$-qubit unitary $U$ with matrix elements of the
form \eqref{eq:mat_el_ring} (in the computational basis) such that
$U\Pi U^{\dagger}$ is equal to 
\[
\left(\sqrt{\frac{1}{3}}|000\rangle-\sqrt{\frac{2}{3}}|001\rangle\right)\left(\sqrt{\frac{1}{3}}\langle 000|-\sqrt{\frac{2}{3}}\langle 001|\right)
\]
on $3$ of the qubits tensored with the identity on the remaining
qubits.
\end{defn}
We define quantum $3$-SAT with projectors from $\mathcal{{P}}.$
\begin{defn}
[\textbf{Quantum 3-SAT}]Given a collection $\{\Pi_{i}\::\: i = 1,\dots,r\}\subset\mathcal{P}$
of $3$-local projectors acting on $n$ qubits, we are asked to decide
if they correspond to a yes instance or a no instance (promised that
one is the case), where \medskip{}

Yes: There exists an $n$-qubit state $\ket{\psi}$ satisfying $\Pi_{i}\ket{\psi} = 0$
for all $i = 1,\dots,r$.\medskip{}

No: $\sum_{i}\langle\psi|\Pi_{i}\ket{\psi}\geq1$ for all $\ket{\psi}$.
\end{defn}
With the definitions given above, we prove that quantum $3$-SAT is
QMA$_{1}$-complete. In Appendix \ref{sec:containmentINqma1} we
show (following Bravyi \cite{Bravyi06,Bravyi_communication}) that
quantum $3$-SAT is contained in QMA$_{1}$. Our main result in this
paper is QMA$_{1}$-hardness of quantum $3$-SAT. To prove this, we
exhibit an efficiently computable mapping from a $g$-gate, $(n + n_{a})$-qubit
verification circuit that implements a unitary $U_{X}$ to a Hamiltonian
\[
H_{X} = \sum_{i}\Pi_{i,X}
\]
 which is a sum of $\Theta\left(n_{a} + g\right)$ 3-local projectors
$\Pi_{i,X}\in\mathcal{P}$ acting on $\Theta(n + n_{a} + g)$ qubits. Moreover, we prove
\begin{thm}
[\textbf{Completeness}]\label{thm:Completeness}
	$H_{X}$ has ground energy $0$ if and only if there exists 
	$|W\rangle$ such that $\text{AP}\left(U_{X},|W\rangle\right) = 1,$ 
\end{thm}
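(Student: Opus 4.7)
The plan is to follow the standard Feynman--Kitaev circuit-to-Hamiltonian template, specialized to whatever structure $H_X$ ends up having. Although the excerpt does not yet spell out the projectors $\Pi_{i,X}$, experience with Kitaev-style constructions suggests they will split into four families: clock-validity projectors that force the time register into a legal encoding of the values $0,1,\ldots,g$; an input projector that forces the ancilla block to be $|0\rangle^{\otimes n_a}$ at time $0$; propagation projectors that couple time $t{-}1$ to time $t$ via the $t$-th gate of the circuit $U_X$; and an output projector that demands the designated answer qubit equal $|1\rangle$ at time $g$.

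For the ``if'' direction, given a witness $\ket{W}$ with $\text{AP}(U_X,\ket{W})=1$, I would exhibit the explicit history state
\[
\ket{\Psi_W} \;=\; \frac{1}{\sqrt{g+1}} \sum_{t=0}^{g} \bigl(U_t \cdots U_1 \ket{W}\ket{0}^{\otimes n_a}\bigr)\otimes \ket{t}
\]
(with $U_0=\iii$) as a ground state. The clock projectors and the input projector annihilate $\ket{\Psi_W}$ by inspection, the propagation projectors annihilate it by the usual telescoping cancellation between adjacent clock terms, and the output projector annihilates it precisely because $\text{AP}(U_X,\ket{W})=1$ means $U_X\ket{W}\ket{0}^{\otimes n_a}$ has no amplitude on the ``reject'' subspace.

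For the ``only if'' direction, I would take any zero-energy state $\ket{\Phi}$ and decompose it over the legal clock basis as $\ket{\Phi}=\sum_t \ket{\phi_t}\otimes\ket{t}$. The propagation constraints iteratively force $\ket{\phi_t}=U_t\cdots U_1\ket{\phi_0}$, the input constraint forces $\ket{\phi_0}=\ket{\xi}\otimes\ket{0}^{\otimes n_a}$ for some state $\ket{\xi}$ on the witness register, and the output constraint then yields $\text{AP}(U_X,\ket{\xi})=1$, so $\ket{W}:=\ket{\xi}$ is the desired witness.

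I expect the main obstacle to lie not in the high-level history-state argument, which is essentially routine once the Hamiltonian is in hand, but in the compatibility issue that the theorem statement quietly glosses over: every projector produced by the construction must belong to the restricted class $\mathcal{P}$ from Definition \ref{Delta_proj}. The matrix-element ring in \eqref{eq:mat_el_ring} contains $\tfrac{1}{\sqrt{2}}$ and $i$ and therefore accommodates $\widehat{H}$, $T$ and CNOT exactly, which is what keeps perfect completeness honest: the propagation projectors are realized without any approximation. Justifying that each of the four families above is built from $3$-local projectors expressible in $\mathcal{P}$ --- and in particular explaining why the peculiar rank-one projector onto $\sqrt{1/3}\,\ket{000}-\sqrt{2/3}\,\ket{001}$ in the second clause of Definition \ref{Delta_proj} is needed --- is where the real work of the construction will reside and would be the part I would expect to spend the most time verifying carefully.
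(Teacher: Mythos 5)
Your proposal does not match the paper's construction. You assume $H_X$ is a standard single-clock Feynman--Kitaev Hamiltonian whose ground states are the usual one-dimensional history states $\frac{1}{\sqrt{g+1}}\sum_{t} (U_t\cdots U_1\ket{W}\ket{0}^{\otimes n_a})\otimes\ket{t}$. But the paper explicitly rejects this approach, and for a concrete reason: with a clock construction of the paper's type, a propagation term $h_{t,t+1}(U)$ is $(k+2)$-local when $U$ is $k$-local, so a two-qubit gate $U$ would produce a $4$-local projector. The paper states this outright at the end of Section~\ref{sub:Feynman's-circuit-to-Hamiltonian}: the circuit-to-Hamiltonian mapping ``is not based on Feynman's Hamiltonian, and in particular only ever uses the transition operators $h_{t,t+1}(U)$ for single-qubit unitaries $U$.'' The actual $H_X$ of equation~\eqref{eq:H_C} lives on a Hilbert space with \emph{two} clock registers, $\mathcal{H}_{\text{comp}}\otimes\mathcal{H}^{(9M+3)}_{\text{clock}}\otimes\mathcal{H}^{(9M+3)}_{\text{clock}}$, and its ground states are superpositions over a two-dimensional grid of clock configurations $\ket{C_i}\ket{C_j}$ (the connected components in Figure~\ref{fig:groundstates}), not over a line of times $t=0,\dots,g$.

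This matters substantively for the proof, not just cosmetically. The two-qubit gate $V$ is encoded indirectly: the block $H_V^j$ from equations~\eqref{eq:H_horizontalj}--\eqref{eq:H_verticalj} uses only single-qubit unitaries inside the $h_{\cdot,\cdot}$ terms ($B$ and $\sigma^z$) and controlled clock transitions, so every projector is $3$-local; that a $V$ gate nonetheless emerges in the ground state requires Lemma~\ref{lem:H_V}, which characterizes the ground space of the building-block Hamiltonian $H_{2q}$ by explicitly verifying (numerically, for a $324\times324$ matrix) that the states $\ket{\psi^{xy}_V}$ span it. The paper's proof of Theorem~\ref{thm:Completeness} then proceeds by adding the terms of \eqref{eq:H_C} one at a time, characterizing the ground space after each addition: first $\iii\otimes H^{(M)}_{\text{diag}}$, then $\sum_j H_V^j$ (via Lemma~\ref{lem:H_V}), then $\sum_j H_U^j$ (via a graph-Laplacian argument showing the zero eigenvector must be a uniform superposition over the graph of Figure~\ref{fig:laplaciangraph}, yielding the history state $\ket{\text{Hist}(\psi)}$ of \eqref{eq:history_states_Hc}), and finally $H_{\text{init}}$ and $H_{\text{end}}$. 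Your ``only if'' argument, which iteratively forces $\ket{\phi_t}=U_t\cdots U_1\ket{\phi_0}$ from propagation constraints, has no analogue in this two-clock framework, where the propagation structure is captured by connected components of a 2D grid and the inductive step must go through the Laplacian/connected-components argument instead. Finally, your instinct about the ``real work'' is also off: membership of each projector in the class $\mathcal{P}$ is handled in the paper with a short verification at the end of Section~\ref{sub:Completeness_Section} and Section~\ref{sec:Details-of-the}; the genuine difficulty is the two-clock construction itself and the Lemma~\ref{lem:H_V} characterization.
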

and
\begin{thm}
[\textbf{Soundness}]\label{thm:Soundness}
	If $\text{AP}\left(U_{X},|W\rangle\right)\leq\frac{1}{3}$
	for all $|W\rangle$, then $H_{X}$ has ground energy $\Omega\left(\frac{1}{g^{6}}\right)$. 
\end{thm}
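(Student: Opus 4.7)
The plan is to follow Kitaev's circuit-to-Hamiltonian soundness strategy, adapted to the $3$-local, perfectly-complete setting of this construction. I would begin by decomposing
$H_X = H_{\mathrm{clock}} + H_{\mathrm{prop}} + H_{\mathrm{init}} + H_{\mathrm{out}}$,
where $H_{\mathrm{clock}}$ enforces a valid clock encoding (most likely a domain-wall encoding on $\Theta(g)$ qubits, with time $t\in\{0,\dots,T\}$ and $T=\Theta(g)$), $H_{\mathrm{prop}}$ enforces the propagation constraint that between steps $t-1$ and $t$ the computation register is evolved by $U_t$, $H_{\mathrm{init}}$ penalizes any ancilla qubit that is not $\ket{0}$ at time $0$, and $H_{\mathrm{out}}$ penalizes the designated output qubit for carrying the value $0$ at time $T$.

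The bulk of the argument consists of iterated applications of Kitaev's Projection Lemma (as refined by Kempe--Kitaev--Regev) to successively restrict to the nullspaces of these pieces. First, restrict to the nullspace of $H_{\mathrm{clock}}$; its spectral gap above zero is $\Omega(1)$ by construction, so this step is essentially lossless. Inside the legal-clock subspace, the restriction $\tilde H_{\mathrm{prop}}$ of $H_{\mathrm{prop}}$ is a frustration-free hopping Hamiltonian on the $T+1$ clock slices whose nullspace consists of Feynman history states $\ket{\eta_\psi} = \frac{1}{\sqrt{T+1}}\sum_{t=0}^{T}(U_t\cdots U_1 \ket{\psi})\otimes \ket{t}_{\mathrm{clock}}$ indexed by arbitrary computation-register states $\ket{\psi}$. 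A standard spectral-gap analysis (e.g., via Nagaj-style quantum-walk / path bounds) gives a promotion gap of $\Omega(1/g^2)$, and I would project onto the history subspace using this gap.

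Inside the history subspace, $H_{\mathrm{init}}$ reduces (up to a $1/(T+1)$ normalization factor) to a local penalty on the initial state $\ket{\psi}$ which enforces the $n_a$ ancillas to be $\ket{0}$. Its nullspace is spanned by the history states $\ket{\eta_W}$ (shorthand for $\ket{\eta_\psi}$ with $\ket{\psi} = \ket{W}\ket{0}^{\otimes n_a}$) for arbitrary $n$-qubit witnesses $\ket{W}$, with promotion gap $\Omega(1/g)$. A final projection and direct computation give
\[
    \bra{\eta_W} H_{\mathrm{out}} \ket{\eta_W} \;=\; \frac{1-\text{AP}(U_X,\ket{W})}{T+1} \;\geq\; \frac{2}{3(T+1)} \;=\; \Omega(1/g),
\]
using the soundness hypothesis. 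Combining the three Projection-Lemma steps, whose error terms are of order $\|H_j\|^2/\mathrm{gap}$, and accounting for the norm $\|H_{\mathrm{prop}}\| = O(g)$ together with the two inverse-polynomial gaps computed above, yields a net lower bound of $\Omega(1/g^6)$ on the ground energy of $H_X$.

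The main obstacle I foresee is establishing the $\Omega(1/g^2)$ spectral gap for $\tilde H_{\mathrm{prop}}$ on the legal-clock subspace, which is very sensitive to the particular $3$-local form of the propagation projectors. In particular, the exotic projector $\sqrt{1/3}\ket{000}-\sqrt{2/3}\ket{001}$ from Definition \ref{Delta_proj} almost certainly appears here, in order to force a specific unitary step while keeping matrix elements inside the restricted ring \eqref{eq:mat_el_ring} needed for perfect completeness; showing that the resulting hopping Hamiltonian has the claimed inverse-polynomial gap is a delicate quantum-walk computation. A secondary difficulty is the careful bookkeeping of norms and gaps through the nested Projection Lemma so that the clean $\Omega(1/g^6)$ bound actually emerges from the composition rather than being washed out by polynomial factors coming from $\|H_{\mathrm{prop}}\|$.
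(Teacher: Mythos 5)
Your overall plan — decompose $H_X$ into clock/propagation/initialization/output pieces, restrict to nested nullspaces, and accumulate inverse-polynomial gap losses — is the right shape of argument, and your final computation $\bra{\eta_W}H_{\mathrm{out}}\ket{\eta_W}\sim (1-\text{AP})/\Theta(g)$ does match what the paper obtains (cf.\ equation~\eqref{eq:H_end_Hist}). However, there are two substantive problems with the route you propose.

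First, the Kempe--Kitaev--Regev Projection Lemma is the wrong tool in the quantum-SAT setting. That lemma requires the promotion gap $J$ of the piece being projected away to dominate the norm of the remaining piece (roughly $J > 2\|H_{\mathrm{rest}}\|$), and in practice this is achieved by \emph{scaling up} the constraint Hamiltonians by a polynomial factor. Here $\|H_{\mathrm{prop}}\|=\Theta(g)$ while the gaps you are working with are $O(1)$ or inverse-polynomial, so no such domination holds, and scaling is forbidden: $H_X$ must remain an unweighted sum of projectors for the conclusion to apply to quantum $3$-SAT. The paper instead uses Kitaev's geometric Lemma (Lemma~\ref{lem:geom_lemma_rephrased}), which bounds $\gamma(H_A+H_B)$ by $\min\{\gamma(H_A),\gamma(H_B)\}(1-\sqrt{c})$ with $c$ an overlap parameter, and crucially involves \emph{no} perturbative error term in $\|H_B\|$. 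The nested application in Appendix~\ref{sec:Proof-of-Theorem} relies on Corollaries~\ref{cor:corollary_bound} and~\ref{cor:F_corollary}, both derived from the geometric Lemma, precisely because they tolerate unweighted projectors.

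Second, your model of the propagation term as a one-dimensional Feynman clock on slices $t=0,\dots,T$ does not match the construction: the paper's Hamiltonian acts on \emph{two} clock registers, and the groundspace of $\iii\otimes H_{\mathrm{diag}}^{(M)}$ lives on a two-dimensional grid of clock pairs $(i,j)$ (Figure~\ref{fig:groundstates}). The propagation piece splits further into $\sum_j H_U^j + \sum_j H_V^j$, and the paper's gap bound is established through a cascade of sublemmas (Lemmas~\ref{lem:basiclemma}--\ref{lem:lemdiagplusHv}, yielding $\gamma(H_A)=\Omega(1/M^5)$) together with a graph-Laplacian argument on the $(3M+1)$-vertex graph of Figure~\ref{fig:laplaciangraph}. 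Your proposed $\Omega(1/g^2)$ quantum-walk gap for $\tilde H_{\mathrm{prop}}$ understates how involved this step is; the actual exponent in the propagation gap already eats $M^5$, and the final $\Omega(1/g^6)$ comes not from chaining three Projection-Lemma steps but from one application of the geometric Lemma with $H_B=H_{\mathrm{init}}+H_{\mathrm{end}}$, where the overlap $c$ is bounded via Cauchy--Schwarz (equations~\eqref{eq:c_equation}--\eqref{eq:psi_Ux_psi}) rather than by projecting $H_{\mathrm{init}}$ away separately.

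Finally, your guess about the exotic projector $\sqrt{1/3}\ket{000}-\sqrt{2/3}\ket{001}$ appearing in the propagation terms is not quite right: in the paper it appears only in the internal clock Hamiltonian $H_{\text{triplet}}^{(2N)}$ (via $\ket{\gamma}$ in~\eqref{eq:H3}) and in endpoint terms $g_{1,2}(\sigma_1^x), g_{2N-1,2N}(\sigma_N^x)$; the transition operators $h_{i,i+1}(U)$ themselves have matrix elements satisfying condition~1 of Definition~\ref{Delta_proj} whenever $U\in\{\iii,\widehat{H},T\}$.
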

Note that in our definition of quantum $3$-SAT we require that in
the ``no'' case the ground energy is greater than or equal to $1$,
whereas Theorem \ref{thm:Soundness} gives a bound of $\Omega\left(g^{-6}\right)$.
To form an instance of quantum $3$-SAT as defined above we repeat
each projector $\Pi_{i,X}$ in the instance a suitable number of times
(i.e., $\Theta(g^{6})$ times). This shows that any promise problem
in QMA$_{1}$ can be reduced to quantum $3$-SAT. Since quantum $3$-SAT
is also contained in QMA$_{1}$ we have proven that it is complete
for this complexity class. 

Our mapping from the verification circuit $U_{X}$ to the Hamiltonian
$H_{X}$ relies on two technical innovations. Like many previous works
in the field of Hamiltonian complexity, we use a ``clock construction''.
In this work we introduce a new one which has some special properties. Most previous
QMA- or QMA$_{1}$-hardness results use a circuit-to-Hamiltonian mapping
which is an immediate and simple application of the clock construction (the standard approach uses a Hilbert space with a computational
register and a clock register). In contrast, in this work we define a novel circuit-to-Hamiltonian mapping where the Hamiltonian
$H_{X}$ acts on a Hilbert space with a computational register along
with \textit{two }clock registers.

The paper is organized as follows. In Section \ref{sec:A-new-clock}
we introduce our new clock construction. In Section \ref{sec:Hamiltonians-with-two}
we describe Hamiltonians acting on two clock registers which illustrate
the main ideas of our proof that quantum $3$-SAT is QMA$_{1}$-hard.
Then in Section \ref{sub:Completeness_Section} we define the Hamiltonian
$H_{X}$ and in Section \ref{sub:The-zero-energy} we characterize
its zero energy eigenspace, establishing Theorem \ref{thm:Completeness}.
We prove Theorem \ref{thm:Soundness} in Section \ref{sub:Soundness_Section},
using a spectral bound due to Kitaev \cite{Kitaev:2002:CQC:863284}.


\section{A new clock construction\label{sec:A-new-clock}}

A clock construction is a local Hamiltonian along with a set of local
operators which act on its groundspace in a certain way. It can be used as a set of building blocks to
define more complicated Hamiltonians while keeping track of the groundspace. The novel clock construction that we introduce in this Section is a central part of our QMA$_1$ hardness result. In this Section we focus on the construction itself and discuss its properties. Then in Section \ref{sec:Hamiltonians-with-two} we show how to build interesting Hamiltonians using two clock registers. This idea is developed further in Section \ref{sec:Circuit-to-Hamiltonian-mapping} where we define our circuit-to-Hamiltonian mapping.

In Section \ref{sub:Feynman's-circuit-to-Hamiltonian} we give some
historical context, in order to motivate the notion of a clock construction.
Then in Section \ref{sub:Summary-of-the} we summarize the properties
of the new one that we introduce in this work. Finally,
in Section \ref{sec:Details-of-the} we provide all the details of
our construction. On a first read through this paper the reader may
wish to skip Section \ref{sec:Details-of-the} since most of our results
(with the exception of those presented in Appendix \ref{sec:Proof-of-Theorem})
are derived using only the properties described in Section \ref{sub:Summary-of-the}.


\subsection{Feynman's circuit-to-Hamiltonian mapping and Kitaev's clock construction\label{sub:Feynman's-circuit-to-Hamiltonian}}

Feynman showed how to map a sequence of unitary operators into a Hamiltonian
in an interesting way \cite{Feynman}. This mapping is often used
to convert statements about quantum circuits into statements about
Hamiltonians. Let us start with a quantum circuit $\mathcal{C}$ that
acts on $n$ qubits and consists of a product of $m$ one- and two-qubit
unitaries 
\[
W_{m-1}W_{m-2}\dots W_{0}.
\]
Now consider a Hilbert space with two registers. The first register
holds $n$ qubits and the second register holds an $(m + 1)$-level
system with orthonormal basis states $|t\rangle$ for $t = 0,\dots,m$.
Feynman's Hamiltonian is 
\begin{equation}
H_{\text{F}}(\mathcal{C}) = \sum_{t = 0}^{m-1}\frac{1}{2}\left(\iii \otimes |t\rangle\langle t| + \iii \otimes |t + 1\rangle\langle t + 1|-W_{t}^{\dagger}\otimes|t\rangle\langle t + 1|-W_{t}\otimes|t + 1\rangle\langle t|\right).\label{eq:Feynman's Ham}
\end{equation}
(Strictly speaking, this definition differs slightly from Feynman's original Hamiltonian--we have added the first two terms in parentheses). Note that each of the terms in the sum is a projector. Zero eigenvectors
of the $t$th term which have all of their support on the clock states
$|t\rangle$ and $|t + 1\rangle$ have the form 
\[
\ket{\psi}|t\rangle + W_{t}\ket{\psi}|t + 1\rangle
\]
where $\ket{\psi}$is an $n$-qubit state. Using this fact, one
can show that every state in the zero energy groundspace of $H_{F}^{\text{}}(\mathcal{C})$
is of the form
\[
|\text{Hist}_{\text{F}}(\phi)\rangle = \frac{1}{\sqrt{m + 1}}\left(|\phi\rangle|0\rangle + \sum_{t = 1}^{m}W_{t-1}W_{t-2}\dots W_{0}|\phi\rangle|t\rangle\right)
\]
where $|\phi\rangle$ is an $n$-qubit state. The state $|\text{Hist}_{\text{F}}\mbox{(\ensuremath{\phi})}\rangle$
is sometimes called a history state because it contains every intermediate
state $W_{t-1}W_{t-2}\dots W_{0}|\phi\rangle$ obtained during the course
of applying the circuit $\mathcal{C}$ to the initial state $|\phi\rangle.$
The second register is called the clock register and keeps track of
the number of unitaries that have been applied. 

The clock register in Feynman's construction is an $(m + 1)$-level
system. However, Kitaev noticed that a modification of Feynman's idea
can be used to map the circuit $\mathcal{C}$ into a\emph{ }local
Hamiltonian acting on a system made only from qubits. Again the Hilbert
space has two registers; the first holds $n$ qubits as before but
now the second register has $m$ qubits. The idea is to encode Feynman's
$m + 1$ clock states as $m$-bit unary strings
\begin{eqnarray*}
	|t\rangle_{\text{u}} = |\underbrace{1\cdots1}_{t}\underbrace{0\cdots0}_{m-t}\rangle.
\end{eqnarray*}
The $(m + 1)$-dimensional span of these clock states is called the
clock subspace. It is the groundspace of the 2-local, $m$-qubit
Hamiltonian 
\[
H_{\text{unary}}^{(m)} = \sum_{i = 1}^{m-1}|01\rangle\langle 01|_{i,i + 1}.
\]
 This is the first part of Kitaev's clock construction: a local Hamiltonian
acting on a system of qubits which has the desired number of groundstates.
The second part of the clock construction is a set of local operators
which act in a special way on states that have all of their support
in the clock subspace. For any unitary $U$ acting on the first register,
define projectors\footnote{For $t = 0$ and $t = m-1$ these operators should be defined in a slightly
different but self-explanatory way.%
} 
\begin{align}
	h_{t,t + 1}^{\text{u}}(U)
	 = 	\frac{1}{2} &\left( 
			\iii \otimes |100\rangle\langle 100|_{t,t + 1,t + 2} + \iii \otimes |110\rangle\langle 110|_{t,t + 1,t + 2} 
		\right) \label{eq:kitaev_transition_ops}\\
		 - \frac{1}{2} &\left(
			U^{\dagger}\otimes|100\rangle\langle 110|_{t,t + 1,t + 2} + U\otimes|110\rangle\langle 100|_{t,t + 1,t + 2}
		\right). \nonumber
\end{align}
These are called the {\em transition operators} for the clock.
Note that on the LHS of this expression the subscript labels the operator,
whereas on the RHS the subscript indicates which qubits the operator
acts on nontrivially. Writing $\Pi_{\text{unary}}^{(m)}$ for the
projector onto the clock subspace, we have 
\begin{align}
	\left(\iii \otimes \Pi_{\text{unary}}^{(m)}\right)
			h_{t,t + 1}^{\text{u}}(U)
			\left(\iii \otimes \Pi_{\text{unary}}^{(m)}\right)
	 =  &\frac{1}{2}\left(
			\iii \otimes |t\rangle\langle t|_{\text{u}} + \iii \otimes |t + 1\rangle\langle t + 1|_{\text{u}}
		\right) \label{eq:kitaev_in_subspace}\\
	- & \frac{1}{2} \left( U^{\dagger}\otimes|t\rangle\langle t + 1|_{\text{u}}
			+ U\otimes|t + 1\rangle\langle t|_{\text{u}}\right), \nonumber
\end{align}
which should be compared with \eqref{eq:Feynman's Ham}. Kitaev's Hamiltonian
is 
\begin{equation}
H_{\text{K}}(\mathcal{C}) = \iii \otimes  H_{\text{unary}}^{(m)} + \sum_{t = 0}^{m-1}h_{t,t + 1}^{\text{u}}(W_{t})\label{eq:KitaevHam}
\end{equation}
Since both terms in \eqref{eq:KitaevHam} are positive semidefinite,
every state in its zero energy groundspace is a zero eigenvector of
the first term. Within the nullspace of $\iii \otimes  H_{\text{unary}}^{(m)}$,
$H_{K}(\mathcal{C})$ acts in the same way as $H_{\text{F}}(\mathcal{C})$.
Thus, every state in the groundspace of \eqref{eq:KitaevHam} is a history
state 
\[
	\frac{1}{\sqrt{m + 1}}
	\left(
			|\phi\rangle|0\rangle_{\text{u}} + \sum_{t = 1}^{m}W_{t-1}W_{t-2}\dots W_{0}|\phi\rangle|t\rangle_{\text{u}}
	\right)
\]
 for some $n$-qubit state $|\phi\rangle.$ Furthermore, the Hamiltonian
$H_{\text{K}}(\mathcal{C})$ is $5$-local since each unitary $W_{j}$
is either one- or two-local. This circuit-to-Hamiltonian mapping was
used by Kitaev in his proof that $5$-local Hamiltonian is QMA-complete \cite{Kitaev:2002:CQC:863284}.

Let us now view Kitaev's clock construction as a mathematical object
of independent interest. In our view, the clock construction is specified by the
Hamiltonian $H_{\text{unary}}^{(m)}$ along with the transition operators $h_{t,t + 1}^{\text{u}}(U)$
which act as in equation \eqref{eq:kitaev_in_subspace}. 
This is the set of operators which Kitaev used to instantiate Feynman's 
Hamiltonian as a $5$-local Hamiltonian acting on a system of qubits. 
In Kitaev's construction the Hamiltonian $H_{\text{unary}}^{(m)}$
is a sum of 2-local projectors and the transition operators are $(k + 3)$-local
projectors where $k$ is the locality of the unitary $U$. In contrast,
in this Section we introduce a clock construction where the clock states are
groundstates of a Hamiltonian that is a sum of 3-local projectors
and where the transition operators are $\left(k + 2\right)$-local projectors.
If we na{\"{i}}vely replace Kitaev's clock construction with ours,  it allows us to instantiate Feynman's Hamiltonian as a $4$-local Hamiltonian. This is not good enough for our purposes since we are interested in quantum $3$-SAT where the projector terms are $3$-local. The circuit-to-Hamiltonian mapping that we define in this paper is not based on Feynman's Hamiltonian, and in particular only ever uses the transition operators $h_{t,t + 1}^{\text{u}}(U)$ for single-qubit unitaries $U$. Nevertheless we are able to encode a computation consisting of $1$- and $2$-qubit gates in the ground state of the resulting Hamiltonian. This nontrivial fact is one of the main contributions of our paper, and is the subject of Sections \ref{sec:Hamiltonians-with-two} and \ref{sec:Circuit-to-Hamiltonian-mapping}.


\subsection{Summary of the new clock construction \label{sub:Summary-of-the}}

Our construction, including explicit expressions for all the operators
mentioned below, is presented in full detail in Section \ref{sec:Details-of-the}.
Here we summarize its features. As discussed above, the crucial difference between our clock construction and Kitaev's is the locality of the transition operators. 

For any $N\in\{2,3,\dots\}$, we present a Hamiltonian 
\[
H_{\text{clock}}^{(N)}
\]
 which acts on the Hilbert space 
\begin{equation}
\mathcal{H}_{\text{clock}}^{(N)} = \left(\mathbb{C}^{2}\right)^{7N-3}\label{eq:clockspace}
\end{equation}
 of $7N-3$ qubits and which is a sum of 3-local projectors from the
set $\mathcal{{P}}$ given in Definition \ref{Delta_proj}. The zero
energy groundspace of $H_{\text{clock}}^{(N)}$ is spanned by orthonormal
states 
\[
|C_{i}\rangle,\quad i = 1,\dots,N.
\]
Now let $\mathcal{H}_{\text{comp}}$ be a computational register containing
some (arbitrary) number of qubits, and let $U$ be a unitary acting
on this register. We exhibit projectors 
\[
	h_{i,i + 1}(U)
\]
 for $i = 1,\dots,N-1,$ which act on 
\[
	\mathcal{H}_{\text{comp}}\otimes\mathcal{H}_{\text{clock }}^{(N)}.
\]
These are the transition operators \eqref{eq:kitaev_in_subspace}
for the clock. Specifically, they satisfy
\begin{align}
	\left(\iii \otimes \Pi_{\text{clock}}^{(N)}\right)h_{i,i + 1}(U)\left(\iii \otimes \Pi_{\text{clock}}^{(N)}\right)
		 = & \frac{1}{8}\left(
		 		\iii \otimes |C_{i}\rangle\langle 	C_{i}| + \iii \otimes |C_{i + 1}\rangle\langle C_{i + 1}| 
		 	\right) \nonumber\\
		-& \frac{1}{8}\left( 
			U^{\dagger}\otimes|C_{i}\rangle\langle C_{i + 1}|
			+ U\otimes|C_{i + 1}\rangle\langle C_{i}|
		\right),
	 \label{eq:hi_iplus1}
\end{align}
where $\Pi_{\text{clock}}^{(N)} = \sum_{i = 1}^{N}|C_{i}\rangle\langle C_{i}|$
projects onto the clock subspace. Comparing these operators with Kitaev's,
we see that the prefactor of $\frac{1}{2}$ in \eqref{eq:kitaev_in_subspace}
has been replaced with $\frac{1}{8}$ in \eqref{eq:hi_iplus1}. For
our purposes only the nullspaces of these operators are relevant,
and this prefactor does not affect the null space. 

A much more important difference is the locality. In our case, the
operator $h_{i,i + 1}(U)$ is a $(k + 2)$-local projector where $k$
is the locality of the unitary $U$. When $U$ is the identity the
projector $h_{i,i + 1}(1)$ acts nontrivially only on two qubits of
$\mathcal{H}_{\text{clock }}^{(N)}$ and we write 
\[
	h_{i,i + 1}\doteq h_{i,i + 1}(\iii).
\]
Thus, for a single-qubit unitary, $h_{i,i + 1}(U)$ is only 3-local
(in contrast with Kitaev's clock where $h_{i,i + 1}^{\text{u}}(U)$
would be $4$-local). Our circuit-to-Hamiltonian mapping, presented
in Sections \ref{sec:Hamiltonians-with-two} and \ref{sec:Circuit-to-Hamiltonian-mapping}
exploits this feature (it is partly inspired by the railroad switch idea
from \cite{railroad_switch}).

Finally, we also exhibit 1-local (single-qubit) projectors 
\begin{align}
		C_{\geq i}\qquad\text{and}\qquad C_{\leq i} \label{eq:CIS}
\end{align}
 for $i = 1,\dots,N$, whose role is to ``pick out'' clock states $|C_{j}\rangle$
with $j\geq i$ or $j\leq i$ respectively. They act on the Hilbert space \eqref{eq:clockspace}
of the clock and satisfy
\begin{align}
	\Pi_{\text{clock}}^{(N)}C_{\geq i}\Pi_{\text{clock}}^{(N)} &  = \frac{1}{2}|C_{i}\rangle\langle C_{i}| + \sum_{i<j\leq N}|C_{j}\rangle\langle C_{j}|\label{eq:1localproj}\\
\Pi_{\text{clock}}^{(N)}C_{\leq i}\Pi_{\text{clock}}^{(N)} &  = \sum_{1\leq j<i}|C_{j}\rangle\langle C_{j}| + \frac{1}{2}|C_{i}\rangle\langle C_{i}|\label{eq:1localproj2}
\end{align}
with the understanding that when $i = 1$ the first term in \eqref{eq:1localproj2}
is zero and when $i = N$ the second term in \eqref{eq:1localproj} is
zero. As in \eqref{eq:hi_iplus1}, only the nullspaces of the operators on the RHS of \eqref{eq:1localproj} and \eqref{eq:1localproj2} are important for our purposes. In particular, it is not significant that the $|C_{i}\rangle\langle C_{i}|$
terms have different prefactors, since the (positive) value of these
coefficients do not affect the nullspace.


\subsection{Details of the new clock construction\label{sec:Details-of-the}}

Here we present the details of the clock construction described in
Section \ref{sub:Summary-of-the}. The reader may safely skip this
Section on a first read through this paper. Subsequent Sections can
be understood using the summary presented in Section \ref{sub:Summary-of-the}. 

Our goal is to define $H_{\text{clock}}^{(N)}$ along
with the operators $C_{\geq i},C_{\leq i},h_{i,i + 1}(U)$ satisfying
the properties specified in Section \ref{sub:Summary-of-the}.
We present our construction in two steps. First, we define a Hamiltonian $H_{\text{triplet}}^{(2N)}$ 
that we call the ``triplet'' Hamiltonian. It already has many (but not all\footnote{We can't exhibit the type of operators in \eqref{eq:CIS}.}) of the properties that we desire
for our clock construction. 
In the second step, we construct $H_{\text{clock}}^{(N)}$ from the ``triplet'' Hamiltonian by appending another register and adding terms which act between the two registers. 

Let $N\in\{2,3,4,\dots\}$. We now define the triplet Hamiltonian $H_{\text{triplet}}^{(2N)}$
which is a sum of 3-local projectors and acts on the Hilbert space of $3(2N-1)$
qubits: 
\begin{equation}
	H_{\text{triplet}}^{(2N)} = H_{1} + H_{2} + H_{3},
	\label{eq:H_triplet}
\end{equation}
 where 
\begin{align}
	H_{1} &  =  \sum_{i = 0}^{2N-2}\left(|111\rangle\langle 111|
			 +  |010\rangle\langle 010| + |001\rangle\langle 001|
	 		 +  |\theta^-\rangle\langle\theta^-|\right)_{3i + 1,3i + 2,3i + 3}\label{eq:H1}\\
 &  + |000\rangle\langle 000|_{123}
  	 +  \left(|100\rangle\langle 100|
   	 +  |011\rangle\langle 011|\right)_{3(2N-2) + 1,3(2N-2) + 2,3(2N-2) + 3},\nonumber \\
	H_{2} &  =  \sum_{i = 0}^{2N-3}\left(|10\rangle\langle 10|
		 +  |01\rangle\langle 01|\right)_{3i + 2,3i + 3}
				\otimes\left(|1\rangle\langle 1|_{3i + 4}
		 +  |1\rangle\langle 1|_{3i + 5}
		 +  |1\rangle\langle 1|_{3i + 6}\right) \label{eq:H2}\\
 &  + \sum_{i = 0}^{2N-3}|00\rangle\langle 00|_{3i + 1,3i + 2}
 		\otimes\left(|1\rangle\langle 1|_{3i + 4}
 		 +  |1\rangle\langle 1|_{3i + 5} + |1\rangle\langle 1|_{3i + 6}\right), \nonumber \\
	H_{3} &  = \sum_{i = 0}^{2N-3}|\gamma\rangle\langle\gamma|_{3i + 3,3i + 4,3i + 5}\label{eq:H3}
\end{align}
act on triplets of qubits as in Figure~\ref{fig:H123}, with $\ket{\theta^\pm}$ and $\ket{\gamma}$ given by
\begin{align*}
	|\theta^\pm\rangle &  = \frac{1}{\sqrt{2}}\left(|100\rangle\pm|011\rangle\right),
	\qquad
	|\gamma\rangle = \frac{1}{\sqrt{3}}|100\rangle-\sqrt{\frac{2}{3}}|011\rangle.
\end{align*}

\begin{figure}
	\includegraphics[width=8cm]{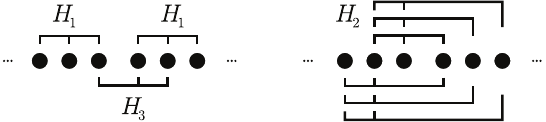}
	\caption{The terms $H_1$, $H_2$ and $H_3$ in the Hamiltonian $H_{\text{triplet}}^{(2N)}$ \eqref{eq:H_triplet}
	involve consecutive triplets of qubits.\label{fig:H123}}
\end{figure}

We now construct $2N$ orthonormal basis states which span the ground space of $H_{\text{triplet}}^{(2N)}$. 
First, consider $H_{1}.$ Observe that the groundspace of $H_{1}$
is spanned by the set of all products of three qubit states of the
form 
\[
	|\chi_{1}\rangle|\chi_{2}\rangle\dots|\chi_{2N-1}\rangle,
\]
 where 
\begin{align*}
	|\chi_{1}\rangle & \in\{|110\rangle,|101\rangle,|\theta^+ \rangle\},\\
	|\chi_{j}\rangle & \in\{|000\rangle,|110\rangle,|101\rangle,|\theta^+ \rangle\}
			\; \text{for}\; j = 2,\dots,2N-2,\\
	|\chi_{2N-1}\rangle & \in\{|000\rangle,|110\rangle,|101\rangle\}.
\end{align*}
 Let us now consider the action of $H_{2}$ on the groundspace of
$H_{1}$. The terms in $H_{2}$ assign an energy penalty to certain
products of three qubit states. The first term penalizes any state
where $|\chi_{i}\rangle\in\{|110\rangle,|101\rangle\}$ but $|\chi_{i + 1}\rangle\neq|000\rangle.$
The second term penalizes states where $|\chi_{i}\rangle = |000\rangle$
but $|\chi_{i + 1}\rangle\neq|000\rangle.$ The projector onto the
zero energy groundspace of $H_{1} + H_{2}$ is therefore given by 
\begin{equation}
	\Pi_{12} = 
	\sum_{i = 1}^{4N-2}|\hat{i}\rangle\langle\hat{i}|
	 + \sum_{k = 1}^{2N-2}|\tilde{k}\rangle\langle\tilde{k}|,\label{eq:Pi_12}
\end{equation}
 where 
\begin{align}
	|\widehat{1}\rangle &  = |110\rangle|000\rangle|000\rangle
			\dots |000\rangle,\label{eq:1_hat}\\
	|\widehat{2}\rangle &  = |101\rangle|000\rangle|000\rangle
			\dots |000\rangle\nonumber\\
	|\widehat{3}\rangle &  = |\theta^+\rangle\;|110\rangle|000\rangle
		\dots |000\rangle,\nonumber\\
	|\widehat{4}\rangle &  = |\theta^+\rangle\;|101\rangle|000\rangle
		\dots |000\rangle,\nonumber\\
 	& \;\,\vdots\nonumber\\
	|\widehat{4N-3}\rangle &  = |\theta^+ \rangle\,|\theta^+\rangle\;
			\dots \; |\theta^+\rangle\; |110\rangle,\nonumber\\
	|\widehat{4N-2}\rangle &  = |\theta^+\rangle\;|\theta^+\rangle\;
			\dots \; |\theta^+\rangle\;|101\rangle
	\label{eq:2R-2hat}
\end{align}
and
\begin{equation}
	|\tilde{k}\rangle = |\theta^+ \rangle^{\otimes k}|000\rangle^{\otimes2N-1-k} \label{eq:ktilde}
\end{equation}
for $k = 1,\dots,2N-2$.
Note that the states $\ket{\widehat{i}}$ each contain an ``active'' triplet $|110\rangle$
or $|101\rangle$ separating $|\theta^+ \rangle$ to the left and $|000\rangle$
to the right. In contrast, the states $|\tilde{k}\rangle$ do not have an active triplet separating the two
regions.

The role of $H_{3}$ is twofold. 
It assigns an energy penalty to each of the states $\{|\tilde{k}\rangle\}$
and to certain linear combinations of the states $\{|\hat{i}\rangle\}$. This can be seen by looking at the restriction of $H_{3}$ to the groundspace of $H_{1} + H_{2}$,
\begin{equation}
\Pi_{12}H_{3}\Pi_{12} = \frac{1}{3}\sum_{i = 2}^{2N-1}\left(|\widehat{2i-2}\rangle-|\widehat{2i-1}\rangle\right)\left(\langle\widehat{2i-2}|-\langle\widehat{2i-1}|\right) + \frac{1}{6}\sum_{k = 1}^{2N-2}|\tilde{k}\rangle\langle\tilde{k}|.\label{eq:H3_insubspace}
\end{equation}
We are now able to fully characterize the zero energy groundspace of
$H_{\text{triplet}}^{(2N)}$. It is spanned by the $2N$ states
\begin{align}
	|c_{1}\rangle &  = |\widehat{1}\rangle,\label{eq:c1}\\
	|c_{i}\rangle & 
		 = \frac{1}{\sqrt{2}}\left(|\widehat{2i-2}\rangle + |\widehat{2i-1}\rangle\right)
		\; \text{for} \; i = 2,\dots,2N-1,\label{eq:ci}\\
	|c_{2N}\rangle &  = |\widehat{4N-2}\rangle.\label{eq:cR}
\end{align}

We now consider some local operators and their actions in the zero
energy subspace of $H_{\text{triplet}}^{(2N)}.$ We will see that 
they have properties similar to those described in Section \ref{sub:Summary-of-the}.

Define $2$-local projectors
\begin{align}
	q_{1} &  = |10\rangle\langle 10|_{23},\nonumber \\
	q_{i} &  = |01\rangle\langle 01|_{3i-4,3i-3}\; \text{for}\; i = 2,\dots,2N,\label{eq:q_j}
\end{align}
and note that, projected to the groundspace of $H_{\text{triplet}}^{(2N)}$, we have
\begin{equation}
	\Pi^{(2N)}q_{i}\Pi^{(2N)} = \alpha_{i}|c_{i}\rangle\langle c_{i}|, \label{eq:action_of_qi}
\end{equation}
where $\Pi^{(2N)} = \sum_{i = 1}^{2N}|c_{i}\rangle\langle c_{i}|$ and
\begin{equation}
	\alpha_{i} = \begin{cases}
	1, & \text{ if }i\in\{1,2N\}\\
	\frac{1}{2}, & \text{ otherwise. }
	\end{cases}\label{eq:alpha_i}
\end{equation}
Now adjoin another register so the Hilbert space is 
\[
	\mathcal{H}_{\text{comp}}\otimes\left(\mathbb{C}^{2}\right)^{6N-3}
\]
and let $U$ be a unitary acting on $\mathcal{H}_{\text{comp}}$.
Define 
\begin{align}
	g_{i,i + 1}(U) =& 
		\frac{1}{\alpha_{i} + \alpha_{i + 1}}
		\left(
				\alpha_{i + 1}\,\iii \otimes |10\rangle\langle 10|_{3i-1,3i}
		 	+ \alpha_{i}\,\iii \otimes |01\rangle\langle 01|_{3i-1,3i}
		 \right) \label{eq:g(U)}\\
		-& \frac{\sqrt{\alpha_{i + 1}\alpha_{i}}}{\alpha_{i} + \alpha_{i + 1}} \left(
		 U^{\dagger}\otimes|10\rangle\langle 01|_{3i-1,3i}
		+ U\otimes|01\rangle\langle 10|_{3i-1,3i}
		\right) \nonumber 
\end{align}
for $i = 1,\dots,2N-1$. Note that $g_{i,i + 1}(U)$ is a $(k + 2)$-local
projector if $U$ is a $k$-qubit unitary, and furthermore 
\begin{align}
	\left(\iii \otimes \Pi^{(2N)}\right)g_{i,i + 1}(U)\left(\iii \otimes \Pi^{(2N)}\right) & 	
		 = \beta_{i}\left(
		 		\iii \otimes |c_{i}\rangle\langle c_{i}| 
		 		+ \iii \otimes |c_{i + 1}\rangle\langle c_{i + 1}| 
		 	\right)\nonumber \\
 	& - \beta_i \left(
 			U^{\dagger}\otimes|c_{i}\rangle\langle c_{i + 1}|
  		+ U\otimes|c_{i + 1}\rangle\langle c_{i}|
  \right),
  \label{eq:g_in_subspace}
\end{align}
 where 
\begin{equation}
	\beta_{i} = \frac{\alpha_{i + 1}\alpha_{i}}{\alpha_{i} + \alpha_{i + 1}} 
		= \begin{cases}
			\frac{1}{3}, & \text{ if }i\in\{1,2N-1\},\\
			\frac{1}{4}, & \text{ otherwise.}
	\end{cases}\label{eq:beta_i}
\end{equation}
Equation \eqref{eq:g_in_subspace} should be compared to \eqref{eq:hi_iplus1}.

Now we are ready to define $H_{\text{clock}}^{(N)}$ and the associated
operators which make up our clock construction. 
Consider the Hilbert space 
\[
	\left(\mathbb{C}^{2}\right)^{6N-3}\otimes\left(\mathbb{C}^{2}\right)^{N}
\]
and the Hamiltonian
\begin{equation}
	H_{\text{clock}}^{(N)}
		 =  H_{\text{triplet}}^{(2N)}\otimes \iii  + \iii\otimes 	H_{\text{unary}}^{(N)}
		 +  H_{\text{sync}} 
		 +  \sum_{i = 1}^{N}g_{2i-1,2i}(\sigma_{i}^{x}),
	\label{eq:H_N_clock}
\end{equation}
 where $\sigma_{i}^{x}$ is the $\sigma^{x}$ operator acting on the
$i$th qubit of the second register. Observe that each operator $g_{2i-1,2i}(\sigma_{i}^{x})$
(defined through \eqref{eq:g(U)}) is a 3-local projector which
acts on qubits $6i-4$ and $6i-5$ of the first register and qubit
$i$ of the second register. $H_{\text{unary}}^{(N)}$ acts on the
second register as 
\begin{equation}
	H_{\text{unary}}^{(N)} = \sum_{i = 1}^{N-1}|01\rangle\langle 01|_{i,i + 1}\label{eq:H_unary_N}
\end{equation}
and its ground state subspace is spanned by the unary states 
\begin{align}
	\ket{t}_{\text{u}}  =  \ket{\underbrace{1\cdots 1}_{t}\underbrace{0\cdots 0}_{N-t}}. 
	\label{eq:unarystates}
\end{align}

The Hamiltonian $H_{\text{sync}}$ acts on both registers as
\begin{equation}
	H_{\text{sync}}
	 =  q_{1}\otimes|1\rangle\langle 1|_{1}
	 +  \sum_{j = 2}^{2N-1}q_{j}\otimes\left(|0\rangle\langle 0|_{\left\lfloor \frac{j}{2}\right\rfloor }
	 +  |1\rangle\langle 1|_{\left\lfloor \frac{j}{2}\right\rfloor  + 1}\right)
	 +  q_{2N}\otimes|0\rangle\langle 0|_{N}.\label{eq:H_sync}
\end{equation}
 where $q_{j}$ is defined in \eqref{eq:q_j} and $\left\lfloor \cdot\right\rfloor $
is the floor function. Here the tensor product separates the first
$6N-3$ qubits from the last $N$ qubits. 

The zero energy groundspace of $H_{\text{triplet}}^{(2N)}\otimes \iii + \iii \otimes H_{\text{unary}}^{(N)} + H_{\text{sync}}$
is spanned by the states 
\begin{align}
	|e_{j}\rangle
	  = |c_{j}\rangle \left| \left\lfloor j/2 \right\rfloor \right\rangle_\text{u}
	 	 \label{eq:coupled_clocks}
\end{align}
 for $j = 1,\dots,2N$, where $|c_{j}\rangle$ are the ground states of
$H_{\text{triplet}}^{(2N)}$ as defined in equations \eqref{eq:c1}-\eqref{eq:cR}.
Let $\Pi_{e}^{(2N)}$ be the projector onto the subspace spanned by
$\{|e_{j}\rangle\}$. The last term in \eqref{eq:H_N_clock}, the operator $\sum_{i}g_{2i-1,2i}(\sigma_{i}^{x})$, acts within this space as 
\begin{equation}
	\Pi_{\text{e}}^{(2N)}
			\left(\sum_{i = 1}^{N}g_{2i-1,2i}(\sigma_{i}^{x})\right)
			\Pi_{e}^{(2N)} 
	= \sum_{i = 1}^{N}\beta_{2i-1}
		\left(|e_{2i-1}\rangle-|e_{2i}\rangle\right)
		\left(\langle e_{2i-1}|-\langle e_{2i}|\right)
	\label{eq:action_of_g_x}
\end{equation}
 where $\beta_{i}$ is defined in \eqref{eq:beta_i}. Taking
linear combinations of the states \eqref{eq:coupled_clocks} which are
zero eigenvectors of \eqref{eq:action_of_g_x}, we finally get a spanning set
of zero energy groundstates for $H_{\text{clock}}^{(N)}$: 
\begin{align}
	|C_{i}\rangle
		& = \frac{1}{\sqrt{2}} \big(
		 	|c_{2i-1}\rangle 
		 			|  i-1\rangle_\text{u}
			+ |c_{2i}\rangle 
		 			| i\rangle_\text{u}
		\big)\\
		 & = \frac{1}{\sqrt{2}} \big(
		 	|c_{2i-1}\rangle 
		 			|  \underbrace{1\cdots 1}_{i-1}
	 						\underbrace{0\cdots 0}_{N-i+1} 	\rangle
			+ |c_{2i}\rangle 
		 			|  \underbrace{1\cdots 1}_{i}
	 						\underbrace{0\cdots 0}_{N-i} 	\rangle 
	 	\big)
	\label{eq:C_i}
\end{align}
for $i = 1,\dots,N.$ 
As an example, we expand these states in the computational basis for $N=4$ in Figure~\ref{fig:clockstates} in Appendix~\ref{sec:clockreference}.

We now define the operators 
\begin{align}
	C_{\leq i} &  = \iii \otimes |0\rangle\langle 0|_{i}, \label{eq:C_leq}\\
	C_{\geq i} &  = \iii \otimes |1\rangle\langle 1|_{i}, \label{eq:C_geq}
\end{align}
which act on the $i$th qubit of the second register, for $i = 1,\dots,N$.
Using \eqref{eq:C_i} it is easy to verify that these operators
satisfy \eqref{eq:1localproj} and \eqref{eq:1localproj2} as claimed in Section \ref{sub:Summary-of-the}. 

Finally, we are ready to define the transition operators $h_{i,i + 1}(U)$ and to
show that they satisfy \eqref{eq:hi_iplus1}. We consider a
Hilbert space 
\[
	\mathcal{H}_{\text{comp}} \otimes \mathcal{H}_{\text{clock}}^{(N)} =
	\mathcal{H}_{\text{comp}} \otimes \left(\mathbb{C}^{2}\right)^{6N-3}\otimes\left(\mathbb{C}^{2}\right)^{N}
\]
and a unitary $U$ which acts on $\mathcal{H}_{\text{comp}}$. Define
\begin{equation}
h_{i,i + 1}(U) = g_{2i,2i + 1}(U)\otimes \iii \label{eq:h_intermsof_g}
\end{equation}
 for $i = 1,\dots,N-1.$ Here $g_{2i,2i + 1}(U)$ acts on the computational
register and two out of the $6N-3$ qubits in the second register.
Note that $h_{i,i + 1}(U)$ is a $(k + 2)$-local projector when $U$
is a $k$-qubit unitary. Using equations \eqref{eq:h_intermsof_g},
\eqref{eq:C_i}, \eqref{eq:g_in_subspace} and the fact that $\beta_{2i} = \frac{1}{4}$
for all $i = 1,\dots,N-1$ we confirm \eqref{eq:hi_iplus1}:
\begin{align*}
	\left(\iii \otimes\Pi_{\text{clock}}^{(N)}\right)
			h_{i,i + 1}(U)
			\left(\iii \otimes\Pi_{\text{clock}}^{(N)}\right)
	 &= \frac{1}{8}\left(
	 		\iii \otimes |C_{i}\rangle\langle C_{i}| 
	 		 + \iii \otimes|C_{i + 1}\rangle\langle C_{i + 1}|
	 		\right) \\
	&- \frac{1}{8}\left(
			U^{\dagger}\otimes|C_{i}\rangle\langle C_{i + 1}|
			+U\otimes|C_{i + 1}\rangle\langle C_{i}|
		\right).
\end{align*}
 We can also write an explicit expression for $h_{i,i + 1}(U)$ using
\eqref{eq:g(U)} and the fact that $\alpha_{2i} = \alpha_{2i + 1} = \frac{1}{2}$
for all $i = 1,\dots,N-1$:
\begin{align}
	h_{i,i + 1}(U) &  = 
		\frac{1}{2}\left(
			\iii \otimes |10\rangle\langle 10|_{6i-1,6i}\otimes \iii + \iii \otimes |01\rangle\langle 01|_{6i-1,6i}\otimes \iii \right)
					\nonumber \\
 & - \frac{1}{2}\left( 
 	U^{\dagger}\otimes|10\rangle\langle 01|_{6i-1,6i}\otimes \iii + U\otimes|01\rangle\langle 10|_{6i-1,6i}\otimes \iii
  \right).
  \label{eq:h_i_explicit}
\end{align}

Finally, note that $H_{\text{clock}}^{(N)}$ is a sum of 3-local projectors.
We now show that each of the projectors in the sum is from the set
$\mathcal{P}$ given in Definition \ref{Delta_proj}. Looking at equations
\eqref{eq:H1}-\eqref{eq:H3} we see that $H_{\text{triplet}}^{(2N)}$ is
a sum of projectors which are diagonal in the computational basis
(and which satisfy condition 1 in Definition \ref{Delta_proj}) along
with projectors of the form $|\theta^-\rangle\langle\theta^-|$ (which
satisfy condition 1 in Definition \ref{Delta_proj}) and projectors
of the form $|\gamma\rangle\langle\gamma|$ (which satisfy condition
2). $H_{\text{unary}}^{(N)}$ and $H_{\text{sync}}$ are sums of
classical projectors which satisfy condition 1. Now consider the terms
$g_{2i-1,2i}(\sigma_{i}^{x})$ for $i = 1,\dots,N$. Looking at equations
\eqref{eq:alpha_i} and \eqref{eq:g(U)} we see that, for $i\in\{2,\dots,N-1\}$
each of these terms is a projector which satisfies condition 1. We
consider the terms with $i = 1$ and $i = N$ separately. First, look at
$g_{1,2}(\sigma_{1}^{x})$ which acts on qubits $2$ and $3$ of the
first register and qubit $1$ of the second register as 
\begin{align*}
	 & \frac{1}{3} \left(
	 |10\rangle\langle 10|\otimes \iii 
	  + 2\, |01\rangle\langle 01| \otimes \iii 
	 	- \sqrt{2}\, |10 \rangle\langle 01|\otimes\sigma_{x}
	 	- \sqrt{2}\, |01 \rangle\langle 10|\otimes\sigma_{x} \right)\\
	 =  & \frac{1}{3} \left(
	 	|10\rangle\langle 10|
	 +  2\, |01\rangle\langle 01|
		-\sqrt{2}\, |10\rangle\langle 01|
		-\sqrt{2}\, |01\rangle\langle 10|
	\right)
	\otimes | + \rangle\langle  + | \\
  + &
 \frac{1}{3} \left(
 		|10\rangle\langle 10|
 		 + 2\,|01\rangle\langle 01|
 		 + \sqrt{2}\,|10\rangle\langle 01|
 		 + \sqrt{2}\,|01\rangle\langle 10|
 		 \right)
 			\otimes|-\rangle\langle -|
\end{align*}
 where $| + \rangle$ and $|-\rangle$ are the eigenstates of the Pauli
$\sigma^{x}$ operator. It is not hard to see that each of the two
terms on the RHS of this equation is a projector which satisfies condition
2 from Definition \ref{Delta_proj}. A similar decomposition can be
used to write $g_{2N-1,2N}(\sigma_{N}^{x})$ as a sum of two projectors
satisfying condition 2.

This completes our description of the clock construction with the properties outlined in Section \ref{sub:Summary-of-the}.


\section{Hamiltonians acting on two clock registers\label{sec:Hamiltonians-with-two}}

Here we develop the main ideas behind our circuit-to-Hamiltonian mapping,
using the new clock construction described in the previous section.
A key feature of our approach is that we use two clock registers,
with Hilbert space 
\begin{equation}
	\mathcal{H}_{\text{clock}}^{(N)}\otimes\mathcal{H}_{\text{clock}}^{(N)}.\label{eq:2d_space}
\end{equation}
Let's consider some local operators which act on this space. The Hamiltonians

\[
	\iii \otimes  H_{\text{clock}}^{(N)}\qquad\text{and}\qquad H_{\text{clock}}^{(N)} \otimes \iii 
\]
 are both sums of 3-local projectors as discussed in the previous
section. Since $C_{\leq i}$ and $C_{\geq i}$ are 1-local projectors,
we can form 2-local projectors by taking tensor products, e.g., 
\[
	C_{\leq i}\otimes C_{\geq j}.
\]
 Similarly, since the operators $h_{k,k + 1}$ are 2-local projectors,
terms such as 
\[
	h_{k,k + 1}\otimes C_{\leq i}
\]
 are 3-local projectors. For convenience and to ease notation later
on, we define the following sum of such terms 
\begin{align}
	S^{(k,k + 2)} &= C_{\leq k}\otimes C_{\geq(k + 2)} 
	 + h_{k,k + 1}\otimes C_{\leq(k + 1)}
	 + h_{(k + 1),(k + 2)}\otimes C_{\geq(k + 1)} 
	  \label{eq:S_k_k_2}\\
	 &+ C_{\geq(k + 2)}\otimes C_{\leq k}
	 + C_{\leq(k + 1)}\otimes h_{k,k + 1}
  + C_{\geq(k + 1)}\otimes h_{(k + 1),(k + 2)} \nonumber
\end{align}
for $k = 1,\dots,N-2.$ 

We begin by looking at a simple Hamiltonian which acts in the Hilbert
space \eqref{eq:2d_space}; this example introduces some notation and
conventions that we use later on. We then consider two examples where
the Hilbert space \eqref{eq:2d_space} is tensored with a computational
register. These examples contain the essential ideas behind our proof
in Section \ref{sec:Circuit-to-Hamiltonian-mapping} that quantum $3$-SAT is QMA$_{1}$-hard. 


\subsection{Warm up example}

As a warm-up, consider the following Hamiltonian acting on the space
\eqref{eq:2d_space} with $N = 9$: 
\begin{equation}
\iii \otimes  H_{\text{clock}}^{(9)} + H_{\text{clock}}^{(9)} \otimes \iii  + S^{(4,6)},\label{eq:example_eqn}
\end{equation}
with $S^{(4,6)}$ given by \eqref{eq:S_k_k_2}.
We will see how the zero energy groundspace of this operator can be
represented pictorially. 

Recall (from Section \ref{sub:Summary-of-the}) that $H_{\text{clock}}^{(9)}$
has $9$ orthonormal zero energy states $|C_{i}\rangle$ for $i = 1,\dots,9$.
The first two terms of \eqref{eq:example_eqn} 
\begin{equation}
\iii \otimes  H_{\text{clock}}^{(9)} + H_{\text{clock}}^{(9)} \otimes \iii \label{eq:9dim}
\end{equation}
 therefore have 81 zero energy ground states which we choose to represent
as a set of vertices arranged in a 2D grid, as shown in Figure \ref{fig:N = 00003D9_example}(a).
We adopt the convention that the vertex in the top left corner has
coordinates $(i,j) = (1,1)$, the $i$ coordinate increases moving to
the right and the $j$ coordinate increases moving downwards. The
vertex with coordinate $(i,j)$ is associated with the groundstate
$|C_{i}\rangle|C_{j}\rangle.$ 

We add $S^{(4,6)}$ to \eqref{eq:9dim} a few terms at a time. First
look at 
\[
\iii \otimes  H_{\text{clock}}^{(9)} + H_{\text{clock}}^{(9)} \otimes \iii  + C{}_{\leq4}\otimes C_{\geq6} + C_{\geq6}\otimes C_{\leq4}
\]
 which is just the first two terms of $S^{(4,6)}$ added to \eqref{eq:9dim}.
Using the expressions \eqref{eq:1localproj} and \eqref{eq:1localproj2}
we see that adding this term assigns an energy penalty to all the
states $|C_{i}\rangle|C_{j}\rangle$ with either $i\leq4$ and $j\geq6$
or $i\geq6$ and $j\leq4$. Eliminating the corresponding vertices
from Figure \ref{fig:N = 00003D9_example}(a) we get Figure \ref{fig:N = 00003D9_example}(b). 

Now look at the next term which is $h_{4,5}\otimes C_{\leq5}$. Using
equations \eqref{eq:hi_iplus1} and \eqref{eq:C_leq} we get 
\begin{align*}
	\left(\Pi_{\text{clock}}^{(9)}\otimes\Pi_{\text{clock}}^{(9)}\right)
		&\left( h_{4,5}\otimes C_{\leq5} \right)
	\left(\Pi_{\text{clock}}^{(9)}\otimes\Pi_{\text{clock}}^{(9)}\right) \\
		&= \frac{1}{8}\left(|C_{4}\rangle-|C_{5}\rangle\right)\left(\langle C_{4}|-\langle C_{5}|\right)
		 \otimes \sum_{j = 1}^{5}\left(1-\frac{1}{2}\delta_{j,5}\right)|C_{j}\rangle\langle C_{j}|.
\end{align*}
 From this we see that states $|C_{4}\rangle|C_{j}\rangle$ and $|C_{5}\rangle|C_{j}\rangle$
for $j = 1,\dots,5$ are not zero energy states for this term although
their uniform superpositions $\frac{1}{\sqrt{2}}\left(|C_{4}\rangle + |C_{5}\rangle\right)|C_{j}\rangle$
are. We represent the groundspace of 
\[
\iii \otimes  H_{\text{clock}}^{(9)} + H_{\text{clock}}^{(9)} \otimes \iii  + C{}_{\leq4}\otimes C_{\geq6} + C_{\geq6}\otimes C_{\leq4} + h_{4,5}\otimes C_{\leq5}
\]
 as the graph in Figure~\ref{fig:N = 00003D9_example}(c), where now
ground states are in one-to-one correspondence with the\emph{ connected
components }of the graph. The ground state corresponding to a given
connected component $\mathcal{J}$ is the uniform superposition 
\[
\sum_{(i,j)\in\mathcal{J}}|C_{i}\rangle|C_{j}\rangle
\]
(up to normalization).
The next three terms modify the picture in a similar way and the groundspace
of $\iii \otimes  H_{\text{clock}}^{(9)} + H_{\text{clock}}^{(9)} \otimes \iii  + S^{(4,6)}$
is represented as the graph shown in Figure~\ref{fig:N = 00003D9_example}(d). 

\begin{center}
\begin{figure}[H]
\center
\subfigure[]
{
\begin{tikzpicture}[scale = 0.25,vertex/.style = {circle,draw = black,fill = black,inner sep = 0.5pt,minimum size  = 0mm}]
\tikzstyle{every node} = [draw];
\foreach \i in {1,...,9}
{
\foreach \j in {1,...,9}
{
      \draw (\i,\j) node[vertex]{};
}
}
\end{tikzpicture}
}
\hspace{1.5cm}
\subfigure[]
{
\begin{tikzpicture}[scale = 0.25,vertex/.style = {circle,draw = black,fill = black,inner sep = 0.5pt,minimum size  = 0mm}]
\tikzstyle{every node} = [draw];
\foreach \i in {1,...,5}
{
\foreach \j in {5,...,9}
{
      \draw (\i,\j) node[vertex]{};
}
}
\foreach \i in {5,...,9}
{
\foreach \j in {1,...,5}
{
      \draw (\i,\j) node[vertex]{};
}
}
\end{tikzpicture}
}
\hspace{1.5cm}
\subfigure[]
{
\begin{tikzpicture}[scale = 0.25,vertex/.style = {circle,draw = black,fill = black,inner sep = 0.5pt,minimum size  = 0mm}]
\tikzstyle{every node} = [draw];
\foreach \i in {1,...,5}
{
\foreach \j in {5,...,9}
{
      \draw (\i,\j) node[vertex]{};
}
}
\foreach \i in {5,...,9}
{
\foreach \j in {1,...,5}
{
      \draw (\i,\j) node[vertex]{};
}
}
\foreach \i in {5,...,9}
{
\draw (4,\i)--(5,\i);
}
\end{tikzpicture}
}
\hspace{1.5cm}
\subfigure[]
{
\begin{tikzpicture}[scale = 0.25,vertex/.style = {circle,draw = black,fill = black,inner sep = 0.5pt,minimum size  = 0mm}]
\tikzstyle{every node} = [draw];
\foreach \i in {1,...,5}
{
\foreach \j in {5,...,9}
{
      \draw (\i,\j) node[vertex]{};
}
}
\foreach \i in {5,...,9}
{
\foreach \j in {1,...,5}
{
      \draw (\i,\j) node[vertex]{};
}
}
\draw (4,6) --  + (1,0) --  + (1,-1)--  + (2,-1)--  + (2,-2)--  + (1,-2)--  + (1,-1)--  + (0,-1)--  + (0,0);

\foreach \i in {5,...,9}
{
\draw (4,\i)--(5,\i);
\draw (\i,4)--(\i,5);
}
\foreach \i in {1,...,4}
{
\draw (5,\i)--(6,\i);
\draw (\i,5)--(\i,6);
}
\end{tikzpicture}
}

\caption{The groundspaces of 
	(a) $\iii \otimes  H_{\text{clock}}^{(9)} + H_{\text{clock}}^{(9)} \otimes \iii $,
	\;
	(b) $\iii \otimes  H_{\text{clock}}^{(9)} + H_{\text{clock}}^{(9)} \otimes \iii  
				+ C{}_{\leq4}\otimes C_{\geq6} + C_{\geq6}\otimes C_{\leq4}$,
	\;
	(c) $\iii \otimes  H_{\text{clock}}^{(9)} + H_{\text{clock}}^{(9)} \otimes \iii  
				+ C{}_{\leq4}\otimes C_{\geq6} + C_{\geq6}\otimes C_{\leq4} 
				+ h_{4,5}\otimes C_{\leq5}$,
	\;
	and (d) $\iii \otimes  H_{\text{clock}}^{(9)} + H_{\text{clock}}^{(9)} \otimes \iii  + S^{(4,6)}$.
In these graphs each connected component is associated with a ground
state of the Hamiltonian. \label{fig:N = 00003D9_example}}
\end{figure}

\par\end{center}

\subsection{A single-qubit unitary}

Next, consider an example with two clock registers with $N = 6$ and
a computational register containing a single qubit. The Hilbert space
is 
\[
\mathbb{C}^{2}\otimes\mathcal{H}_{\text{clock}}^{(6)}\otimes\mathcal{H}_{\text{clock}}^{(6)}.
\]
Let $U$ be a (single-qubit) unitary acting on the computational qubit
and define
\begin{equation}
	H_{1\text{q}}(U) = \iii \otimes \iii \otimes H_{\text{clock}}^{(6)} 
		+ \iii \otimes  H_{\text{clock}}^{(6)} \otimes \iii  
		+ \iii \otimes  S^{(1,3)} 
		+ \iii \otimes  S^{(4,6)} 
		+ H_{U},\label{eq:H(U)}
\end{equation}
 where 
\begin{equation}
	H_{U} = h_{34}(U) \otimes \iii  + \iii \otimes \iii \otimes h_{34}.
	\label{eq:Hu}
\end{equation}
Here $h_{34}(U)$ acts nontrivially on the computational qubit and
two qubits of first clock register. 

We analyze the groundspace of \eqref{eq:H(U)} in two steps. First,
we represent the groundspace of the sum of the first four terms using
a picture, as in the previous example. Then we consider the action
of $H_{U}$ on this space and obtain the zero energy states for \eqref{eq:H(U)}. 

First, consider 
\[
	\iii \otimes  H_{\text{clock}}^{(6)} + H_{\text{clock}}^{(6)} \otimes \iii  + S^{(1,3)} + S^{(4,6)},
\]
which acts in the space $\mathcal{H}_{\text{clock}}^{(6)}\otimes\mathcal{H}_{\text{clock}}^{(6)}$
and note (using the graphical representation discussed in the previous
example) that its nullspace can be represented as Figure
\ref{fig:1qubit_example}. In the Figure we label vertices of the graph as $(i,j)$ with the top left vertex labeled $(1,1)$, $i$ increasing to the right and $j$ increasing downward.  A ground state is associated with each connected component $\mathcal{K},\mathcal{L},\mathcal{M},\mathcal{N}$
as shown in the Figure, given by 
\begin{equation}
	|\mathcal{S}\rangle = \sum_{(i,j)\in\mathcal{S}}|C_{i}\rangle|C_{j}\rangle,
	\label{eq:S_defn}
\end{equation}
where $\mathcal{S}\in\{\mathcal{K},\mathcal{L},\mathcal{M},\mathcal{N}\}$.
In this paper we will often work with unnormalized states such as these.

\begin{figure}[H]
\center
\begin{tikzpicture}[scale = 0.25,vertex/.style = {circle,draw = black,fill = black,inner sep = 0.5pt,minimum size  = 0mm}]
\foreach \i in {1,...,4}
{
\foreach \j in {-1,...,-4}
{
      \draw (\i,\j) node[vertex]{};
}
}
\foreach \i in {-1,...,-4}
{
\draw (1,\i)--(2,\i);
\draw (3,\i)--(4,\i);
}
\foreach \i in {1,...,4}
{
\draw (\i,-1)--(\i,-2);
\draw (\i,-3)--(\i,-4);
}

\draw (0,0) node[vertex]{}--(1,0) node[vertex]{}--(1,-1) node[vertex]{}--(0,-1) node[vertex]{}--(0,0);
\draw (1,-1)--(2,-1)--(2,-2)--(1,-2)--(1,-1);

\draw (3,-3) --  + (1,0) --  + (1,-1)--  + (2,-1)--  + (2,-2)--  + (1,-2)--  + (1,-1)--  + (0,-1)--  + (0,0);

\draw (4,-5) node[vertex]{};
\draw (5,-4) node[vertex]{};
\draw (5,-5) node[vertex]{};
\draw (6,-4.5) node[]{$\mathcal{N}$};
\draw (5,-0.5) node[]{$\mathcal{L}$};
\draw (-1,-0.5) node[]{$\mathcal{K}$};
\draw (0,-4.5) node[]{$\mathcal{M}$};
\end{tikzpicture}

\caption{The four groundstates of $H_{\text{clock}}^{(6)} \otimes \iii  + \iii \otimes  H_{\text{clock}}^{(6)} + S^{(1,3)} + S^{(4,6)}$
are associated with the four connected components of this graph, which
we label $\mathcal{K},\mathcal{L},\mathcal{M},\mathcal{N}$.\label{fig:1qubit_example}}
\end{figure}
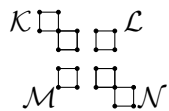

The groundstates of $H_{1\text{q}}(U)$ are superpositions of 
\begin{equation}
	|z\rangle|\mathcal{K\rangle}, \quad
	|z\rangle|\mathcal{L\rangle}, \quad
	|z\rangle|\mathcal{M\rangle}, \quad
	|z\rangle|\mathcal{N}\rangle, \quad z\in\{0,1\}.
	\label{eq:z_K_L_M}
\end{equation}
which have zero energy for $H_{U}$. We solve for them as follows.
First, note that
\begin{equation}
	|z\rangle|\mathcal{K\rangle}, \quad 
	U|z\rangle|\mathcal{L\rangle}, \quad
	|z\rangle|\mathcal{M\rangle}, \quad 
	U|z\rangle|\mathcal{N}\rangle
	\label{eq:z_U_K_L_M}
\end{equation}
for $z\in\{0,1\}$ span the same space as \eqref{eq:z_K_L_M}. This
basis is convenient because $H_{U}$ does not connect states with
$z = 0$ to states with $z = 1$. We evaluate the matrix elements of $H_{U}$
between these unnormalized states using \eqref{eq:hi_iplus1}.
For each $z\in\{0,1\}$, $H_{U}$ acts as a $4\times4$ matrix within
the space spanned by the four states \eqref{eq:z_U_K_L_M} (since it
does not connect states with different $z$). This matrix is the same
for $z = 0$ and $z = 1$ and is given by
\[
	\frac{1}{4}\left(
		\begin{array}{rrrr}
			2 & -1 & -1 & 0\\
			-1 & 2 & 0 & -1\\
			-1 & 0 & 2 & -1\\
			0 & -1 & -1 & 2
		\end{array}
	\right),
\]
with the ordering of basis states as in \eqref{eq:z_U_K_L_M}.
The unique zero eigenvector of this matrix is the all-ones vector.
This means that the groundspace of $H_{1\text{q}}(U)$ is spanned by the two states
\[
	|z\rangle|\mathcal{K}\rangle 
	+ U|z\rangle|\mathcal{L}\rangle 
	+ |z\rangle|\mathcal{M}\rangle 
	+ U|z\rangle|\mathcal{N}\rangle, \quad z\in\{0,1\}.
\]
Note that to solve for these zero energy eigenvectors it was sufficient to consider the matrix elements of $H_U$ in the unnormalized basis \eqref{eq:z_U_K_L_M}. 

Now considering superpositions of these two states we see that every state in
the groundspace of $H_{1\text{q}}(U)$ has the form 
\begin{equation}
	|\phi\rangle|\mathcal{K}\rangle 
	+ U|\phi\rangle|\mathcal{L}\rangle 
	+ |\phi\rangle|\mathcal{M}\rangle 
	+ U|\phi\rangle|\mathcal{N}\rangle
	\label{eq:phi_single_qubit}
\end{equation}
for some single-qubit state $|\phi\rangle.$ In this example we view
the state $|C_{1}\rangle|C_{1}\rangle$ (corresponding to the top
left vertex in Figure~\ref{fig:1qubit_example}) as the initial state
of the two clocks, and we view the state $|C_{6}\rangle|C_{6}\rangle$
(the bottom right vertex) as the final state. We interpret \eqref{eq:phi_single_qubit}
as a history state for the computation that consists of applying $U$
to the state $|\phi\rangle$.


\subsection{A two-qubit unitary}

Now consider an example where $N = 9$ and the computational register
contains two qubits. The Hilbert space is 
\[
	\left(\mathbb{C}^{2}\right)^{2} \otimes 
	\mathcal{H}_{\text{clock}}^{(9)} \otimes
	\mathcal{H}_{\text{clock}}^{(9)}.
\]
Define 
\begin{equation}
	H_{2\text{q}} = \iii \otimes \iii \otimes H_{\text{clock}}^{(9)} 
		+ \iii \otimes  H_{\text{clock}}^{(9)} \otimes \iii  
		+ \iii \otimes  S^{(1,3)} 
		+ \iii \otimes  S^{(7,9)} 
		+ H_{V},
	\label{eq:H(V)}
\end{equation}
 where 
\begin{equation}
H_{V} = H_{\text{horizontal}} + H_{\text{vertical}}\label{eq:Hv}
\end{equation}
 and $H_{\text{horizontal}}$ involves transitions of the first clock
register whereas $H_{\text{vertical}}$ involves transitions of the
second clock register. Labeling the first computational (control) qubit $a$
and the second (target) one $b$, we define 
\begin{align}
	H_{\text{horizontal}}   =\, &
			|0\rangle\langle 0|_{a}\otimes h_{34} \otimes \iii  
			+ \iii \otimes  h_{34}\otimes C_{\geq7} 
			+ \iii \otimes  h_{56}\otimes C_{\leq3} 
			+ h_{45}(B_{b}) \otimes \iii  \nonumber\\
   + \, & |0\rangle\langle 0|_{a}\otimes h_{67} \otimes \iii  
 		+ \iii \otimes  h_{67}\otimes C_{\geq7}
 		+ \iii \otimes  h_{56}\otimes C_{\geq7},
 		\label{eq:H_horiz}
\end{align}
\begin{align}
	H_{\text{vertical}}   = \, &
		|1\rangle\langle 1|_{a} \otimes \iii \otimes h_{34} 
		+ \iii \otimes  C_{\geq7}\otimes h_{34} 
		+ \iii \otimes  C_{\leq3}\otimes h_{56} 
		+ h_{45}(\sigma_{b}^{z}) 
		\nonumber\\
 	 + \,&|1\rangle\langle 1|_{a} \otimes \iii \otimes h_{67} 
 		+ \iii \otimes  C_{\geq7}\otimes h_{67}
 		+ \iii \otimes  C_{\geq7}\otimes h_{56}.
 	\label{eq:H_vert}
\end{align}

Here the single-qubit unitaries which act on qubit $b$ are 
\begin{equation}
	\sigma^{z} = \left(\begin{array}{cc}
		1 & 0\\
		0 & -1
	\end{array}\right) \qquad \text{and} \qquad 
	B = \frac{1}{\sqrt{2}}\left(\begin{array}{cc}
		1 & i\\
		i & 1
	\end{array}\right).
	\label{eq:B_sigma_z}
\end{equation}
In \eqref{eq:H_vert} the operator $h_{45}(\sigma_{b}^{z})$
acts nontrivially on the computational qubit $b$ as well as the second clock register
(and acts as the identity on the first clock register). 

Note that $H_{2\text{q}}$ is a sum of 3-local projectors. We now discuss
its groundspace.

Look at the first four terms in \eqref{eq:H(V)} which act
as
\begin{equation}
	\iii \otimes  H_{\text{clock}}^{(9)} 
	+ H_{\text{clock}}^{(9)} \otimes \iii  
	+ S^{(1,3)} + S^{(7,9)}
	\label{eq:H_diag_twoqubit}
\end{equation}
 on the two clock registers. Using our graphical notation, the zero
energy groundspace of \eqref{eq:H_diag_twoqubit} can be represented
as the black graph shown in Figure~\ref{fig:colorset_defn} (a) and
(b). A ground state is associated with each of the 25 connected components
of this graph (as discussed in the Figure caption). Now adjoining the two-qubit computational register,
we get 100 basis vectors for the nullspace of 
\begin{equation}
\iii \otimes \iii \otimes H_{\text{clock}}^{(9)} + \iii \otimes  H_{\text{clock}}^{(9)} \otimes \iii  + \iii \otimes  S^{(1,3)} + \iii \otimes  S^{(7,9)},\label{eq:first_four_terms}
\end{equation}
four for each connected component. States in the nullspace of $H_{2\text{q}}$
are superpositions of these 100 basis vectors that also have zero
energy for $H_{V}$, that is to say, zero eigenvectors of the matrix
\begin{align}
	\langle\mathcal{J}_{2}|\bra{y'}\bra{x'}H_{V}\ket{x}\ket{y}|\mathcal{J}_{1}\rangle \label{eq:mat_HV}
\end{align}
where $\mathcal{J}_{1}$ and $\mathcal{J}_{2}$ are connected components
of the graph in Figure~\ref{fig:colorset_defn} and $x,y,x',y'\in\{0,1\}$. (Here $|\mathcal{J}_{1}\rangle$ and $|\mathcal{J}_{2}\rangle$ are defined through \eqref{eq:S_defn}). One could now proceed to solve for the nullspace of $H_{2\text{q}}$ by
explicitly constructing the matrix elements \eqref{eq:mat_HV} and analyzing
the resulting $100\times100$ matrix. However, computing the matrix elements \eqref{eq:mat_HV} is a tedious exercise. To save space and time, we take a different approach here. We provide a Lemma which characterizes
the nullspace of $H_{2\text{q}}$, and we describe a simple way that the
reader can verify our claim.

It will be helpful to use the sets of vertices $\mathcal{R}_{0},\mathcal{G}_{0},\mathcal{Y}_{0},\mathcal{B}_{0}$
and $\mathcal{R}_{1},\mathcal{G}_{1},\mathcal{Y}_{1},\mathcal{B}_{1}$
depicted in Figure~\ref{fig:colorset_defn}. For each set we define
an unnormalized state through \eqref{eq:S_defn} (now letting
$\mathcal{S}$ be any set of vertices). For example, 
\[
	|\mathcal{Y}_{0}\rangle = 
		  |C_{2}\rangle|C_{7}\rangle 
		+ |C_{3}\rangle|C_{7}\rangle 
		+ |C_{4}\rangle|C_{7}\rangle 
		+ |C_{2}\rangle|C_{8}\rangle 
		+ |C_{3}\rangle|C_{8}\rangle 
		+ |C_{4}\rangle|C_{8}\rangle.
\]
 Note that each of the states 
\begin{equation}
|\mathcal{R}_{0}\rangle,|\mathcal{G}_{0}\rangle,|\mathcal{Y}_{0}\rangle,|\mathcal{B}_{0}\rangle\qquad\text{and}\qquad|\mathcal{R}_{1}\rangle,|\mathcal{G}_{1}\rangle,|\mathcal{Y}_{1}\rangle,|\mathcal{B}_{1}\rangle\label{eq:R_0_R_1}
\end{equation}
have zero energy for \eqref{eq:H_diag_twoqubit}. The following Lemma
characterizes the groundspace of $H_{2\text{q}}$.
\begin{lem}
\label{lem:H_V}The groundspace of $H_{2\text{q}}$ is spanned by 
\begin{eqnarray*}
|\psi_{V}^{xy}\rangle &  =  & |x\rangle|y\rangle|\mathcal{R}_{x}\rangle + \left(Q|x\rangle|y\rangle\right)|\mathcal{G}_{x}\rangle + \big(\tilde{Q}|x\rangle|y\rangle\big)|\mathcal{Y}_{x}\rangle + \left(V|x\rangle|y\rangle\right)|\mathcal{B}_{x}\rangle
\end{eqnarray*}
 for $x,y\in\{0,1\}$, where the two-qubit unitaries $Q,\tilde{Q},$
and $V$ are given by 
\begin{align}
	Q &  = |0\rangle\langle 0|\otimes B + |1\rangle\langle 1|\otimes\sigma^{z}, \nonumber \\
	\tilde{Q} &  = |0\rangle\langle 0|\otimes	\big( B^{\dagger}\sigma^{z}B \big) 
		+ |1\rangle\langle 1|\otimes\left(\sigma^{z}B\sigma^{z}\right), \nonumber \\
	V &  = |0\rangle\langle 0|\otimes\left(\sigma^{z}B\right) 
		+ |1\rangle\langle 1|\otimes\left(B\sigma^{z}\right),
		\label{eq:V_gate}
\end{align}
with the single-qubit unitaries $B$ and $\sigma^{z}$ as in \eqref{eq:B_sigma_z}.
\end{lem}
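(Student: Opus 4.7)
The plan is to establish Lemma \ref{lem:H_V} in two steps: first, verify directly that each $|\psi_V^{xy}\rangle$ is annihilated by every term of $H_{2\text{q}}$; second, show that the ground space has dimension at most four.

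For the containment step, each of the sum-over-vertices states $|\mathcal{R}_x\rangle, |\mathcal{G}_x\rangle, |\mathcal{Y}_x\rangle, |\mathcal{B}_x\rangle$ is a uniform superposition over the vertices of a single connected component of the graph of Figure~\ref{fig:colorset_defn}. Hence $|\psi_V^{xy}\rangle$ is automatically in the kernel of the first four terms of $H_{2\text{q}}$, by the analysis used for the warm-up examples. For the fourteen terms comprising $H_V$, I would apply the identity \eqref{eq:hi_iplus1}: a product $h_{k,k+1}(U) \otimes M$ annihilates any state that, restricted to the clock pair $(|C_k\rangle, |C_{k+1}\rangle)$, has the ``transition form'' $|\phi\rangle \otimes \bigl(|C_k\rangle + U|C_{k+1}\rangle\bigr)$ on the remaining factors where $M$ acts trivially. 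The control projectors $|0\rangle\langle 0|_{a}$, $|1\rangle\langle 1|_{a}$, $C_{\geq 7}$, $C_{\leq 3}$ implement a ``railroad switch'': when $a = 0$ only the horizontal $h_{34}$ and $h_{67}$ transitions are forced on the computational register, and qubit $b$ picks up a $B$ at the sole $h_{45}(B_b)$ step; symmetrically, when $a = 1$ only the vertical $h_{34}$ and $h_{67}$ transitions act, and qubit $b$ picks up a $\sigma^z$ at $h_{45}(\sigma^z_b)$. Going through the terms case by case, one checks that the composed action on $|x\rangle|y\rangle$ as one moves from region $\mathcal{R}_x$ to $\mathcal{G}_x$ is $Q$, to $\mathcal{Y}_x$ is $\tilde Q$, and to $\mathcal{B}_x$ is $V$, matching \eqref{eq:V_gate}.

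For the dimension bound, let $\Pi_0$ denote the projector onto the null space of the first four terms of $H_{2\text{q}}$, so that $\Pi_0$ has rank $4 \cdot 25 = 100$ and is spanned by vectors $|x\rangle|y\rangle|\mathcal{J}\rangle$ where $\mathcal{J}$ ranges over the $25$ connected components. Since each term of $H_V$ either contains $|0\rangle\langle 0|_a$, $|1\rangle\langle 1|_a$, or acts trivially on qubit $a$, the compressed operator $\Pi_0 H_V \Pi_0$ commutes with the projection onto the $a$-register and splits into an $x=0$ and an $x=1$ block. Within each block, I would change basis on qubit $b$ so as to absorb the $B$ or $\sigma^z$ factors from each $h_{45}$ transition into the component label; after this change of basis the compressed operator becomes a non-negative matrix of Laplacian type on the graph whose vertices are the $25$ components and whose edges record the active transitions of $H_V$. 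This graph is connected within each $a$-block, so its kernel is one-dimensional and contributes $2$ states per value of $a$ (one per value of $y$), for $4$ ground states total. The main obstacle is the bookkeeping in Step~1: one has to assemble the composed unitary acting on $|x\rangle|y\rangle$ along each path from $\mathcal{R}_x$ to the other three regions and verify that paths meeting at a common component produce the same unitary, a consistency check reminiscent of a flatness condition. If the case analysis becomes unwieldy, one can fall back on the direct strategy outlined after \eqref{eq:mat_HV}: build the $100 \times 100$ matrix of matrix elements $\langle\mathcal{J}_2|\langle y'|\langle x'|H_V|x\rangle|y\rangle|\mathcal{J}_1\rangle$, block-diagonalize by computational sector, and compute its kernel directly.
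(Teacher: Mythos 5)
Your containment step follows the paper's: verify term-by-term, using \eqref{eq:hi_iplus1}, \eqref{eq:1localproj}, and \eqref{eq:1localproj2}, that each $|\psi_V^{xy}\rangle$ is annihilated by $H_V$ and has support only on clock states killed by the first four terms. Your dimension bound, however, is a genuinely different route. The paper simply exhibits the four candidate states and then verifies they span the null space by numerically diagonalizing a $324\times 324$ matrix (the paper explicitly says ``we recommend using a computer'' and supplies a Matlab script). You instead propose to split $\Pi_0 H_V \Pi_0$ into $a$-blocks, gauge away the $B$ and $\sigma^z$ factors on qubit $b$ by a component-dependent change of basis, and conclude the result is a connected-graph Laplacian tensored with identity, hence a two-dimensional kernel per block. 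If that argument went through it would give a fully analytic proof, which is cleaner than the paper's numerical check.

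The gap is that the gauge transformation you invoke exists only if the unitary edge-weights compose to the identity around every cycle of the component graph, and you neither verify this nor give a reason it must hold. It is not automatic: even restricted to the $a=0$ block, both $h_{45}(B_b) \otimes \iii$ and $h_{45}(\sigma^z_b)$ remain active (neither carries an $a$-projector), and the vertical transitions conditioned on $C_{\geq 7}$ and $C_{\leq 3}$ are active as well, so there genuinely are cycles mixing $B$- and $\sigma^z$-weighted edges. Flatness for these cycles is a real claim that needs proof. Likewise, connectivity of the gauge-transformed graph within each $a$-block is asserted rather than established; if it failed the kernel would be larger than two per block and the lemma would not follow. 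You acknowledge the consistency issue yourself and propose falling back to building the $100\times100$ matrix and computing its kernel directly --- which is exactly the paper's strategy. So the proposal is a reasonable plan, and a potentially nicer one, but as written it does not close: the remaining work is precisely the tedious bookkeeping the paper explicitly chose to outsource to a computer.
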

Note that, since $|\psi_{V}^{xy}\rangle$ has support only on states
\eqref{eq:R_0_R_1} of the clock registers, it has zero energy for \eqref{eq:first_four_terms}.
Using equations \eqref{eq:hi_iplus1}, \eqref{eq:1localproj}, and \eqref{eq:1localproj2}
the reader can verify that each state $|\psi_{V}^{xy}\rangle$ also
has zero energy for $H_{V}$. It remains to show that these four states
span the groundspace of $H_{2\text{q}}.$ We recommend using a computer
to verify this fact. To do this, one can numerically diagonalize a
specific $324\times324$ matrix: the restriction of $H_{2\text{q}}$ to
the space spanned by 
\[
	|z_{1}\rangle|z_{2}\rangle|C_{i}\rangle|C_{j}\rangle
\]
 with $i,j = 1,\dots,9$ and $z_{1},z_{2}\in\{0,1\}$. It is easy to compute the matrix elements of  $H_{2\text{q}}$ in this basis using equations \eqref{eq:hi_iplus1}, \eqref{eq:1localproj}, and \eqref{eq:1localproj2}. We have
included in our arxiv submission an ancillary file (a Matlab script)
which numerically diagonalizes this matrix and confirms that the states
$|\psi_{V}^{xy}\rangle$ span the nullspace of $H_{2\text{q}}$. 

The reader may find it helpful to look at Figure \ref{fig:extrafigure2} from Appendix \ref{sec:clockreference} in order to understand how the states $|\psi_{V}^{xy}\rangle$ arise as ground states of the Hamiltonian $H_{2q}$.

\begin{center}
\begin{figure}[tb]
\center
\subfigure[]
{
\begin{tikzpicture}[scale = 0.5,vertex/.style = {circle,draw = black,fill = black,inner sep = 0.5pt,minimum size  = 0mm}]

\foreach \i in {1,...,7}
{
\foreach \j in {-1,...,-7}
{
      \draw (\i,\j) node[vertex]{};
}
}
\foreach \i in {-1,...,-7}
{
\draw (1,\i)--(2,\i);
\draw (6,\i)--(7,\i);
}
\foreach \i in {1,...,7}
{
\draw (\i,-1)--(\i,-2);
\draw (\i,-6)--(\i,-7);
}

\draw (0,0) node[vertex]{}--(1,0) node[vertex]{}--(1,-1) node[vertex]{}--(0,-1) node[vertex]{}--(0,0);
\draw (1,-1)--(2,-1)--(2,-2)--(1,-2)--(1,-1);

\draw (6,-6) --  + (1,0) --  + (1,-1)--  + (2,-1)--  + (2,-2)--  + (1,-2)--  + (1,-1)--  + (0,-1)--  + (0,0);

\draw (7,-8) node[vertex]{};
\draw (8,-7) node[vertex]{};
\draw (8,-8) node[vertex]{};
\draw[thick,dotted,color = red](-0.25,0.25)--(1.25,0.25)--(1.25,-0.75)--(3.25,-0.75)--(3.25,-2.25)--(0.75,-2.25)--(0.75,-1.25)--(-0.25,-1.25)--(-0.25,0.25);
\draw [thick,dotted,color = green]
(3.75,-0.75)--(7.25,-0.75)--(7.25,-3.25)--(4.75,-3.25)--(4.75,-2.25)--(3.75,-2.25)--(3.75,-0.75);
\draw [thick,dotted,color = blue]
(3.75,-5.75)--(4.75,-5.75)--(4.75,-3.75)--(7.25,-3.75)--(7.25,-6.75)--(8.25,-6.75)--(8.25,-8.25)--(6.75,-8.25)--(6.75,-7.25)--(3.75,-7.25)--(3.75,-5.75);
\draw [thick,dotted,color = yellow]
(3.25,-5.75)--(0.75,-5.75)--(0.75,-7.25)--(3.25,-7.25)--(3.25,-5.75);
\draw (-0.25,-2) node[]{$\mathcal{R}_0$};
\draw (8,-0.5) node[] {$\mathcal{G}_0$};
\draw (8,-6) node[] {$\mathcal{B}_0$};
\draw (-0.25,-7) node[] {$\mathcal{Y}_0$};
\end{tikzpicture}
}
\hspace{1.5cm}
\subfigure[]
{
\begin{tikzpicture}[scale = 0.5,vertex/.style = {circle,draw = black,fill = black,inner sep = 0.5pt,minimum size  = 0mm}]

\foreach \i in {1,...,7}
{
\foreach \j in {-1,...,-7}
{
      \draw (\i,\j) node[vertex]{};
}
}
\foreach \i in {-1,...,-7}
{
\draw (1,\i)--(2,\i);
\draw (6,\i)--(7,\i);
}
\foreach \i in {1,...,7}
{
\draw (\i,-1)--(\i,-2);
\draw (\i,-6)--(\i,-7);
}

\draw (0,0) node[vertex]{}--(1,0) node[vertex]{}--(1,-1) node[vertex]{}--(0,-1) node[vertex]{}--(0,0);
\draw (1,-1)--(2,-1)--(2,-2)--(1,-2)--(1,-1);

\draw (6,-6) --  + (1,0) --  + (1,-1)--  + (2,-1)--  + (2,-2)--  + (1,-2)--  + (1,-1)--  + (0,-1)--  + (0,0);

\draw (7,-8) node[vertex]{};
\draw (8,-7) node[vertex]{};
\draw (8,-8) node[vertex]{};
\draw[thick,dotted,color = red](-0.25,0.25)--(1.25,0.25)--(1.25,-0.75)--(2.25,-0.75)--(2.25,-3.25)--(0.75,-3.25)--(0.75,-1.25)--(-0.25,-1.25)--(-0.25,0.25);
\draw [thick,dotted,color = green]
(0.75,-3.75)--(0.75,-7.25)--(3.25,-7.25)--(3.25,-4.75)--(2.25,-4.75)--(2.25,-3.75)--(0.75,-3.75);
\draw [thick,dotted,color = blue]
(3.75,-4.75)--(5.75,-4.75)--(5.75,-3.75)--(7.25,-3.75)--(7.25,-6.75)--(8.25,-6.75)--(8.25,-8.25)--(6.75,-8.25)--(6.75,-7.25)--(3.75,-7.25)--(3.75,-4.75);
\draw [thick,dotted,color = yellow]
(5.75,-3.25)--(5.75,-0.75)--(7.25,-0.75)--(7.25,-3.25)--(5.75,-3.25);
\draw (-0.25,-2) node[]{$\mathcal{R}_1$};
\draw (8,-0.5) node[] {$\mathcal{Y}_1$};
\draw (8,-6) node[] {$\mathcal{B}_1$};
\draw (-0.25,-7) node[] {$\mathcal{G}_1$};
\end{tikzpicture}
}

\caption{The black graph (drawn twice for clarity) depicts the groundspace
of $\iii \otimes  H_{\text{clock}}^{(9)} + H_{\text{clock}}^{(9)} \otimes \iii  + S^{(1,3)} + S^{(7,9)}$.
The top left vertex is labeled $(1,1)$ and the bottom right vertex
$(9,9).$ Each connected component corresponds to a ground state,
given by the uniform superposition of states $|C_{i}\rangle|C_{j}\rangle$
with $(i,j)$ in the component. In (a) we have defined sets of vertices
$\mathcal{R}_{0},\mathcal{G}_{0},\mathcal{Y}_{0},\mathcal{B}_{0}$
and in (b) we have defined sets $\mathcal{R}_{1},\mathcal{G}_{1},\mathcal{Y}_{1},\mathcal{B}_{1}$.\label{fig:colorset_defn}}
\end{figure}
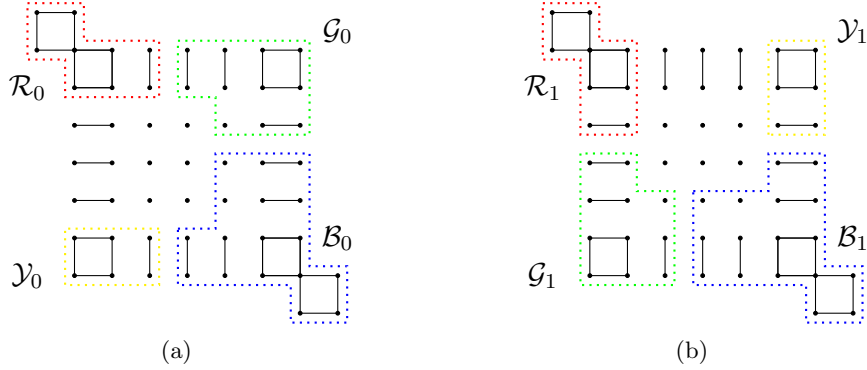

\par\end{center}

Using Lemma \ref{lem:H_V} we see that any state in the groundspace
of $H_{2\text{q}}$ is a superposition
\begin{equation}
\sum_{x,y\in\{0,1\}}\alpha_{xy}|\psi_{V}^{xy}\rangle = |\phi\rangle|C_{1}\rangle|C_{1}\rangle + |\text{other\ensuremath{\rangle}} + \left(V|\phi\rangle\right)|C_{9}\rangle|C_{9}\rangle,\label{eq:superposition_psiV}
\end{equation}
where 
\[
|\phi\rangle = \sum_{x,y\in\{0,1\}}\alpha_{xy}|x\rangle|y\rangle,
\]
and $|\text{other}\rangle$ has no support on clock states $|C_{1}\rangle|C_{1}\rangle$
or $|C_{9}\rangle|C_{9}\rangle$. We view $|C_{1}\rangle|C_{1}\rangle$
as the initial state of the two clocks and $|C_{9}\rangle|C_{9}\rangle$ as the
final state of the two clocks, and we interpret \eqref{eq:superposition_psiV} as
a history state for the computation that consists of applying the
two-qubit unitary $V$ from \eqref{eq:V_gate} to the state
$|\phi\rangle$.

Finally, we show that the two-qubit unitary $V$ is an entangling gate. 
To see this, note that by multiplying it with single-qubit $T$ and Hadamard gates we obtain the CNOT gate:

\begin{align}
	\left(T^{2}\otimes \big( T^{6}\hat{H}T^{2} \big)\right)V &  = \text{CNOT}.
	\label{eq:CNOT_V}
\end{align}

The reader may already see where this is going. In this Section we
exhibited Hamiltonians $H_{1\text{q}}(U)$ and $H_{2\text{q}}$ which are sums of 3-local
projectors and which have ground states that can be viewed as history
states for any one-qubit computation and a specific two-qubit computation respectively.
Now we show how to put these ideas together to make a quantum $3$-SAT
Hamiltonian that is associated with a sequence of one- and two-qubit
gates.


\section{Quantum 3-SAT is QMA$_{1}$-hard \label{sec:Circuit-to-Hamiltonian-mapping}}

In this Section we prove that quantum $3$-SAT is QMA$_{1}$-hard,
following the strategy outlined in Section \ref{sec:Preliminaries-and-Overview}.
We exhibit our circuit-to-Hamiltonian mapping and we prove Theorems
\ref{thm:Completeness} and \ref{thm:Soundness}. 

Recall from Section \ref{sec:Preliminaries-and-Overview} that we
consider a verification circuit which implements a unitary $U_{X}$
on $n + n_{a}$ qubits, $n_{a}$ of which are ancillas initialized to
$|0\rangle$ at the beginning of the computation. It is expressed
as a product of $g$ gates from the set $\{\widehat{H},T,\text{CNOT}\}$.
We begin by rewriting this circuit in a canonical form.

Equation \eqref{eq:CNOT_V} expresses the CNOT gate as a product of
$\hat{H}$ and $T$ gates and the two-qubit 
gate $V$ \eqref{eq:V_gate}. Using this identity we efficiently rewrite the given
circuit so that it is a product of $\Theta(g)$ gates from the set
\[
	\{ \widehat{H},T,V \}.
\]
For convenience, we then rewrite the circuit so that it is expressed
as a product with $M = \Theta(g)$ single-qubit gates alternating with
$M$ two-qubit $V$ gates: 
\begin{equation}
	U_{X} = V_{a_{M-1}b_{M-1}}U^{M-1}\dots V_{a_{1}b_{1}}U^{1}V_{a_{0}b_{0}}U^{0},
	\label{eq:U_C}
\end{equation}
where each single-qubit gate $U^{0},U^{2},\dots U^{M-1}$ is either 
$\widehat{H}$, $T$ or the identity,
and where
\[
	a_{0},\dots,a_{M-1},b_{0},\dots,b_{M-1}\in[n + n_{a}]
\]
are the labels of the qubits on which the $V$ gates act. An arbitrary
circuit expressed using this gate set can always be efficiently rewritten
in this form. To see this, note that we can always place a unitary
equal to the identity in between any two consecutive two qubit gates. Furthermore, using the fact that 
\[
	V^{2} = \iii,
\]
we can replace the product of any two consecutive single-qubit gates
$U^{1}U^{2}$ by $U^{1} \cdot V_{ab} \cdot \iii \cdot V_{ab} \cdot U^{2}$, with the gate $V_{ab}$ acting on some pair of qubits $a,b$.


\subsection{The Hamiltonian $H_{X}$\label{sub:Completeness_Section}}

We define a Hamiltonian $H_{X}$ which we associate with the verification
circuit \eqref{eq:U_C} and which is a sum of 3-local projectors from
the set $\mathcal{P}$ in Definition \ref{Delta_proj}. It acts on the
Hilbert space 
\begin{equation}
	\mathcal{H}_{\text{comp}} \otimes
	\mathcal{H}_{\text{clock}}^{(9M + 3)} \otimes
	\mathcal{H}_{\text{clock}}^{(9M + 3)},
	\label{eq:Hilb_space}
\end{equation}
 where $\mathcal{H}_{\text{comp}}$ is a computational register containing
$n + n_{a}$ qubits. Recalling \eqref{eq:clockspace}, note
that each of the two clock registers contains $63M + 18$ qubits. 

First, consider the Hilbert space 
\[
	\mathcal{H}_{\text{clock}}^{(9M + 3)}\otimes\mathcal{H}_{\text{clock}}^{(9M + 3)}
\]
of the two clock registers and define the following operator acting
on this space 
\begin{equation}
	H_{\text{diag}}^{(M)} 
		= H_{\text{\text{clock}}}^{(9M + 3)} \otimes \iii  
		+ \iii \otimes  H_{\text{clock}}^{(9M + 3)} 
		+ \sum_{j = 0}^{M-1}\left(S^{(9j + 1,9j + 3)} 
		+ S^{(9j + 4,9j + 6)}\right) 
		+ S^{(9M + 1,9M + 3)},
	\label{eq:H_diag}
\end{equation}
 where $S^{(k,k + 2)}$ is defined in \eqref{eq:S_k_k_2}. 

Let us pause for a moment and explain why we consider this operator.
Using the graphical notation developed in Section \ref{sec:Hamiltonians-with-two},
the groundspace of \eqref{eq:H_diag} can be represented as the graph drawn
in black in Figure~\ref{fig:groundstates}. For the moment, let us
focus on the graph drawn in black and ignore all other aspects
of the Figure. As described in the caption, a basis for the
groundspace is in $1$-$1$ correspondence with the connected components
of the graph. Note that the graphs from Figures \ref{fig:1qubit_example}
and \ref{fig:colorset_defn} each appear $M$ times along the diagonal.
This corresponds to the fact that the verification circuit contains
$M$ one-qubit gates and $M$ two-qubit 
gates $V$.

We obtain $H_{X}$ by adding terms to $\iii \otimes  H_{\text{diag }}^{(M)}$.
We add terms for each one- and two-qubit gate and we add terms which
check the initial and final state of the computation. Specifically,
let

\begin{align}
	H_{X} &  = \iii \otimes  H_{\text{diag}}^{(M)} 
		+ \sum_{j = 0}^{M-1}\left(
				  H_{U}^{j} 
				+ H_{V}^{j}
			\right) 
		+ H_{\text{init}} 
		+ H_{\text{end}},
	\label{eq:H_C}
\end{align}
 which acts on the Hilbert space \eqref{eq:Hilb_space}. Here 
\begin{align}
	H_{\text{init}} &  = \sum_{i = 1}^{n_{a}}|1\rangle\langle 1|_{(i + n)}\otimes C_{\leq1}\otimes C_{\leq1}, \label{eq:H_init_H_end}\\
	H_{\text{end}} &= |0\rangle\langle 0|_{(n + 1)}\otimes C_{\geq\left(9M + 3\right)}\otimes C_{\geq\left(9M + 3\right)} \nonumber
\end{align}
ensure that each of the ancilla qubits is in the $|0\rangle$ state
when the clock state is $|C_{1}\rangle|C_{1}\rangle$, and that the
first qubit in the ancilla register is in the state $|1\rangle$ when
the clock state is $|C_{9M + 3}\rangle|C_{9M + 3}\rangle$. The operators
\begin{equation}
	H_{U}^{j} = h_{9j + 3,9j + 4}(U^{j}) \otimes \iii  
		+ \iii \otimes \iii \otimes h_{9j + 3,9j + 4}
	\label{eq:H_Uj_def}
\end{equation}
 are defined by analogy with \eqref{eq:Hu} and involve the
single-qubit unitaries $\{U^{j}\}$. We also define
\begin{equation}
	H_{V}^{j} = H_{\text{horizontal}}^{j} 
		+ H_{\text{vertical}}^{j}
	\label{eq:H_Vj_def}
\end{equation}
by analogy with \eqref{eq:Hv}:
\begin{align}
	H_{\text{horizontal}}^{j} &  = 
		|0\rangle\langle 0|_{a_{j}}\otimes h_{9j + 6,9j + 7} \otimes \iii  
		\,\, + \iii \otimes  h_{9j + 6,9j + 7}\otimes C_{\geq 9j + 10} 
		\; + \iii \otimes  h_{9j + 8,9j + 9}\otimes C_{\leq 9j + 6} 
		 \nonumber \\
 &  + |0\rangle\langle 0|_{a_{j}}\otimes h_{9j + 9,9j + 10} \otimes \iii
 		+ \iii \otimes  h_{9j + 9,9j + 10}\otimes C_{\geq 9j + 10}
 		+ \iii \otimes  h_{9j + 8,9j + 9}\otimes C_{\geq 9j + 10} 
 		 \nonumber \\
 &  + h_{9j + 7,9j + 8}(B_{b_{j}}) \otimes \iii
		 \label{eq:H_horizontalj}\\
	H_{\text{vertical}}^{j} &  = 
		|1\rangle\langle 1|_{a_{j}} \otimes \iii \otimes h_{9j + 6,9j + 7} 
		\,\, + \iii \otimes  C_{\geq 9j + 10}\otimes h_{9j + 6,9j + 7} 
		\; + \iii \otimes  C_{\leq 9j + 6}\otimes h_{9j + 8,9j + 9} \nonumber\\
 &  + \ket{1}\langle 1|_{a_{j}} \otimes \iii \otimes h_{9j + 9,9j + 10}
 		+ \iii \otimes  C_{\geq 9j + 10}\otimes h_{9j + 9,9j + 10} 
 		+ \iii \otimes  C_{\geq 9j + 10}\otimes h_{9j + 8,9j + 9} 
 		\nonumber \\
 &  + h_{9j + 7,9j + 8}(\sigma_{b_{j}}^{z}).
 	 \label{eq:H_verticalj}
\end{align}

Note that $H_{X}$ is a sum of $3$-local projectors. Furthermore,
each projector in the sum is of the form given in Definition \ref{Delta_proj}.
(To see this, first recall that in Section \ref{sec:Details-of-the}
we showed that $H_{\text{clock}}^{(9M + 3)}$ is a sum of projectors
from the set $\mathcal{P}$. For the rest of the terms in $H_{X}$, one
can use the definitions of $C_{\leq i}$ and $C_{\geq i}$ and $h_{i,i + 1}(U)$
from equations \eqref{eq:C_leq}, \eqref{eq:C_geq} and \eqref{eq:h_i_explicit}
to check that they are projectors satisfying condition 1 from Definition
\ref{Delta_proj}. Note that for the term $h_{9j + 3,9j + 4}(U^{j}) \otimes \iii $
appearing in \eqref{eq:H_Uj_def} we use the fact that $U^{j}$
is either $\iii$, $\hat{H}$ or $T$.

We now characterize the groundspace of $H_{X}$.

\begin{center}
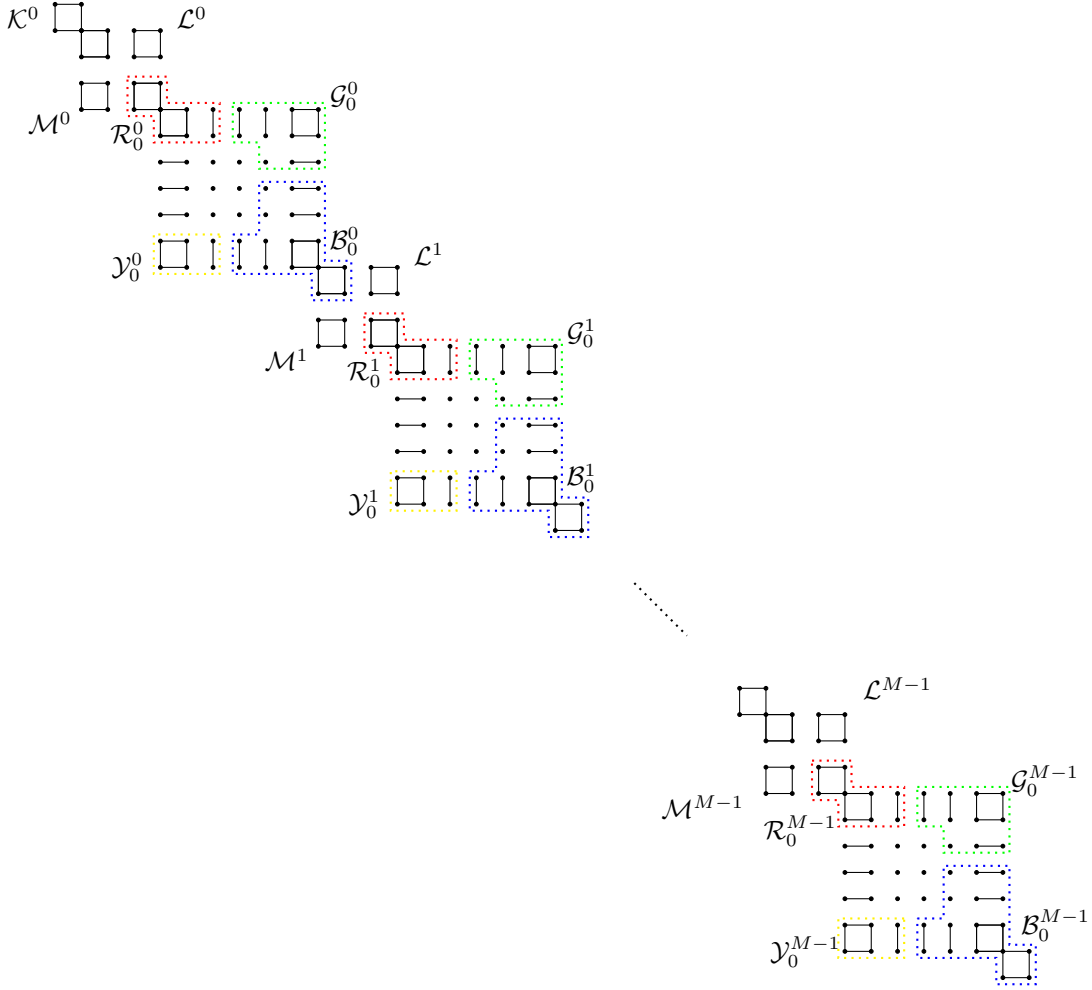
\begin{figure}[t]
\center
{\scalefont{0.9}
\begin{tikzpicture}[scale = 0.35,vertex/.style = {circle,draw = black,fill = black,inner sep = 0.5pt,minimum size  = 0mm}]

\foreach \i in {1,...,4}
{
\foreach \j in {-1,...,-4}
{
      \draw (\i,\j) node[vertex]{};
}
}
\foreach \i in {-1,...,-4}
{
\draw (1,\i)--(2,\i);
\draw (3,\i)--(4,\i);
}
\foreach \i in {1,...,4}
{
\draw (\i,-1)--(\i,-2);
\draw (\i,-3)--(\i,-4);
}

\foreach \i in {4,...,10}
{
\foreach \j in {-4,...,-10}
{
      \draw (\i,\j) node[vertex]{};
}
}

\foreach \i in {-4,...,-10}
{
\draw (4,\i)--(5,\i);
\draw (9,\i)--(10,\i);
}
\foreach \i in {4,...,10}
{
\draw (\i,-4)--(\i,-5);
\draw (\i,-9)--(\i,-10);
}

\foreach \i in {10,...,13}
{
\foreach \j in {-10,...,-13}
{
      \draw (\i,\j) node[vertex]{};
}
}
\foreach \i in {-10,...,-13}
{
\draw (10,\i)--(11,\i);
\draw (12,\i)--(13,\i);
}
\foreach \i in {10,...,13}
{
\draw (\i,-10)--(\i,-11);
\draw (\i,-12)--(\i,-13);
}
\foreach \i in {13,...,19}
{
\foreach \j in {-13,...,-19}
{
      \draw (\i,\j) node[vertex]{};
}
}
\foreach \i in {-13,...,-19}
{
\draw (13,\i)--(14,\i);
\draw (18,\i)--(19,\i);
}
\foreach \i in {13,...,19}
{
\draw (\i,-13)--(\i,-14);
\draw (\i,-18)--(\i,-19);
}
\foreach \i in {27,...,30}
{
\foreach \j in {-27,...,-30}
{
      \draw (\i,\j) node[vertex]{};
}
}
\foreach \i in {-27,...,-30}
{
\draw (27,\i)--(28,\i);
\draw (29,\i)--(30,\i);
}
\foreach \i in {27,...,30}
{
\draw (\i,-27)--(\i,-28);
\draw (\i,-29)--(\i,-30);
}
\foreach \i in {30,...,36}
{
\foreach \j in {-30,...,-36}
{
      \draw (\i,\j) node[vertex]{};
}
}

\foreach \i in {-30,...,-36}
{
\draw (30,\i)--(31,\i);
\draw (35,\i)--(36,\i);
}
\foreach \i in {30,...,36}
{
\draw (\i,-30)--(\i,-31);
\draw (\i,-35)--(\i,-36);
}
\draw (-1.2,-0.5) node[]{$\mathcal{K}^0$};
\draw (5.2,-0.5) node[]{$\mathcal{L}^0$};
\draw (-0.2,-4.5) node[]{$\mathcal{M}^0$};
\draw (14.2,-9.5) node[]{$\mathcal{L}^1$};
\draw (8.8,-13.5) node[]{$\mathcal{M}^1$};
\draw (24.6,-30.5) node[]{$\mathcal{M}^{M-1}$};
\draw (32,-26) node[]{$\mathcal{L}^{M-1}$};

\begin{scope}[shift = {(3,-3)}]
\draw[thick,dotted,color = red](-0.25,0.25)--(1.25,0.25)--(1.25,-0.75)--(3.25,-0.75)--(3.25,-2.25)--(0.75,-2.25)--(0.75,-1.25)--(-0.25,-1.25)--(-0.25,0.25);
\draw [thick,dotted,color = green]
(3.75,-0.75)--(7.25,-0.75)--(7.25,-3.25)--(4.75,-3.25)--(4.75,-2.25)--(3.75,-2.25)--(3.75,-0.75);
\draw [thick,dotted,color = blue]
(3.75,-5.75)--(4.75,-5.75)--(4.75,-3.75)--(7.25,-3.75)--(7.25,-6.75)--(8.25,-6.75)--(8.25,-8.25)--(6.75,-8.25)--(6.75,-7.25)--(3.75,-7.25)--(3.75,-5.75);
\draw [thick,dotted,color = yellow]
(3.25,-5.75)--(0.75,-5.75)--(0.75,-7.25)--(3.25,-7.25)--(3.25,-5.75);
\draw (-0.25,-2) node[]{$\mathcal{R}_0^{0}$};
\draw (8,-0.5) node[] {$\mathcal{G}_0^{0}$};
\draw (8,-6) node[] {$\mathcal{B}_0^{0}$};
\draw (-0.25,-7) node[] {$\mathcal{Y}_0^{0}$};
\end{scope}
\begin{scope}[shift = {(12,-12)}]
\draw[thick,dotted,color = red](-0.25,0.25)--(1.25,0.25)--(1.25,-0.75)--(3.25,-0.75)--(3.25,-2.25)--(0.75,-2.25)--(0.75,-1.25)--(-0.25,-1.25)--(-0.25,0.25);
\draw [thick,dotted,color = green]
(3.75,-0.75)--(7.25,-0.75)--(7.25,-3.25)--(4.75,-3.25)--(4.75,-2.25)--(3.75,-2.25)--(3.75,-0.75);
\draw [thick,dotted,color = blue]
(3.75,-5.75)--(4.75,-5.75)--(4.75,-3.75)--(7.25,-3.75)--(7.25,-6.75)--(8.25,-6.75)--(8.25,-8.25)--(6.75,-8.25)--(6.75,-7.25)--(3.75,-7.25)--(3.75,-5.75);
\draw [thick,dotted,color = yellow]
(3.25,-5.75)--(0.75,-5.75)--(0.75,-7.25)--(3.25,-7.25)--(3.25,-5.75);
\draw (-0.25,-2) node[]{$\mathcal{R}_0^{1}$};
\draw (8,-0.5) node[] {$\mathcal{G}_0^{1}$};
\draw (8,-6) node[] {$\mathcal{B}_0^{1}$};
\draw (-0.25,-7) node[] {$\mathcal{Y}_0^{1}$};
\end{scope}
\begin{scope}[shift = {(29,-29)}]
\draw[thick,dotted,color = red](-0.25,0.25)--(1.25,0.25)--(1.25,-0.75)--(3.25,-0.75)--(3.25,-2.25)--(0.75,-2.25)--(0.75,-1.25)--(-0.25,-1.25)--(-0.25,0.25);
\draw [thick,dotted,color = green]
(3.75,-0.75)--(7.25,-0.75)--(7.25,-3.25)--(4.75,-3.25)--(4.75,-2.25)--(3.75,-2.25)--(3.75,-0.75);
\draw [thick,dotted,color = blue]
(3.75,-5.75)--(4.75,-5.75)--(4.75,-3.75)--(7.25,-3.75)--(7.25,-6.75)--(8.25,-6.75)--(8.25,-8.25)--(6.75,-8.25)--(6.75,-7.25)--(3.75,-7.25)--(3.75,-5.75);
\draw [thick,dotted,color = yellow]
(3.25,-5.75)--(0.75,-5.75)--(0.75,-7.25)--(3.25,-7.25)--(3.25,-5.75);
\draw (-0.7,-2.4) node[]{$\mathcal{R}_0^{M-1}$};
\draw (8.6,-0.5) node[] {$\mathcal{G}_0^{M-1}$};
\draw (9,-6) node[] {$\mathcal{B}_0^{M-1}$};
\draw (-0.5,-7) node[] {$\mathcal{Y}_0^{M-1}$};
\end{scope}
\draw (0,0) node[vertex]{}--(1,0) node[vertex]{}--(1,-1) node[vertex]{}--(0,-1) node[vertex]{}--(0,0);
\draw (1,-1)--(2,-1)--(2,-2)--(1,-2)--(1,-1);

\draw (3,-3) --  + (1,0) --  + (1,-1)--  + (2,-1)--  + (2,-2)--  + (1,-2)--  + (1,-1)--  + (0,-1)--  + (0,0);
\draw (9,-9) --  + (1,0) --  + (1,-1)--  + (2,-1)--  + (2,-2)--  + (1,-2)--  + (1,-1)--  + (0,-1)--  + (0,0);
\draw (12,-12) --  + (1,0) --  + (1,-1)--  + (2,-1)--  + (2,-2)--  + (1,-2)--  + (1,-1)--  + (0,-1)--  + (0,0);
\draw (18,-18) --  + (1,0) --  + (1,-1)--  + (2,-1)--  + (2,-2)--  + (1,-2)--  + (1,-1)--  + (0,-1)--  + (0,0);
\draw (26,-26) --  + (1,0) --  + (1,-1)--  + (2,-1)--  + (2,-2)--  + (1,-2)--  + (1,-1)--  + (0,-1)--  + (0,0);
\draw (35,-35) --  + (1,0) --  + (1,-1)--  + (2,-1)--  + (2,-2)--  + (1,-2)--  + (1,-1)--  + (0,-1)--  + (0,0);

\draw (20,-19) node[vertex]{};
\draw (19,-20) node[vertex]{};
\draw (20,-20) node[vertex]{};

\draw (26,-26) node[vertex]{};
\draw (27,-26) node[vertex]{};
\draw (26,-27) node[vertex]{};

\draw (37,-36) node[vertex]{};
\draw (37,-37) node[vertex]{};
\draw (36,-37) node[vertex]{};
\draw[dotted,thick] (22,-22)--(24,-24);
\end{tikzpicture}
}

\caption{A basis for the groundspace of $H_{\text{diag}}^{(M)}$ is in $1$-$1$
correspondence with the connected components of the graph drawn in
black. The vertices are labeled $(i,j)$ with the top left vertex
labeled $(1,1)$ and the bottom right vertex $(9M + 3,9M + 3)$. The ground
state associated with a connected component $\mathcal{J}$ is the
uniform superposition $\sum_{(i,j)\in\mathcal{J}}|C_{i}\rangle|C_{j}\rangle$.
\label{fig:groundstates}}
\end{figure}

\par\end{center}


\subsection{The zero energy groundspace of $H_{X}$ (Proof of Theorem \ref{thm:Completeness})\label{sub:The-zero-energy}}

We now show that a zero-energy ground state of $H_{X}$ exists 
if and only if there exists a witness $\ket{W}$ 
which the original verifier $U_X$ accepts with certainty. We begin by defining some sets of vertices
in the graph \ref{fig:groundstates}. 
For each copy $j = 0,\dots,M-1$ of the graph in Figure \ref{fig:colorset_defn} 
that appears in Figure \ref{fig:groundstates}, we define sets 
\[
	\mathcal{R}_{0}^{j},\mathcal{G}_{0}^{j},\mathcal{Y}_{0}^{j},\mathcal{B}_{0}^{j}
	\quad \text{and} \quad 
	R_{1}^{j},\mathcal{G}_{1}^{j},\mathcal{Y}_{1}^{j},\mathcal{B}_{1}^{j}.
\]
Likewise, for each copy  $j = 0,\dots,M-1$ 
of the graph in Figure~\ref{fig:1qubit_example} that appears in Figure~\ref{fig:groundstates}  we define sets 
\[ \mathcal{L}^{j},\mathcal{M}^{j} \] in Figure \ref{fig:groundstates}, 
and for the copy with $j=0$ (in the top left) we also define $\mathcal{K}^{0}$ as shown in the Figure.
For each of these sets, we define an associated (unnormalized) state through \eqref{eq:S_defn}. 

To analyze the groundspace of $H_{X}$, we add the terms in equation
\eqref{eq:H_C} one at a time, computing the zero energy states of the
resulting operator at each step.

We start with 
\begin{equation}
	\iii \otimes  H_{\text{diag}}^{(M)},
	\label{eq:1timesHdiag}
\end{equation}
which (as discussed in the previous Section) has nullspace spanned
by states of the form 
\begin{equation}
	|z\rangle|\mathcal{J}\rangle = 	|z\rangle\sum_{(i,j)\in\mathcal{J}}|C_{i}\rangle|C_{j}\rangle,
	\label{eq:phi_J-1}
\end{equation}
 where $\mathcal{J}\subset[9M + 3]\otimes[9M + 3]$ is a connected component
of the graph drawn in black in Figure~\ref{fig:groundstates} and
$z$ is an $\left(n + n_{a}\right)$-bit string. 

Now consider 
\begin{equation}
\iii \otimes  H_{\text{diag}}^{(M)} + \sum_{j = 0}^{M-1}H_{V}^{j}.\label{eq:first_op}
\end{equation}
As the reader might expect, we are going to use Lemma \ref{lem:H_V}
to solve for the zero energy states. We begin by considering the action
of $H_{V}^{j}$ in the basis \eqref{eq:phi_J-1}. 

Look at the graph in black in Figure~\ref{fig:colorset_defn} and
note that there are $M$ copies of this graph along the diagonal in
Figure \ref{fig:groundstates}. Each copy $j = 0,\dots,M-1$ contains
25 connected components $\mathcal{J}$. The operator $H_{V}^{j}$
only has support on states $|z\rangle|\mathcal{J}\rangle$ when $\mathcal{J}$
is one of the 25 connected components in the $j$th copy. To see this,
look at the definition of $H_{V}^{j}$ in \eqref{eq:H_Vj_def}-\eqref{eq:H_verticalj} and note that 
\begin{equation}
	H_{V}^{j}|z\rangle|\mathcal{J}\rangle = 0 \quad \text{whenever} \quad
		\mathcal{J}\cap\{9j + 6,\dots,9j + 10\}
		\otimes
		\{9j + 6,\dots,9j + 10\} = \emptyset.
	\label{eq:H_vj_constraint}
\end{equation}
This implies that the matrix element 
\begin{equation}
	\sum_{j = 0}^{M-1} \langle\mathcal{J}_{2}|\langle z_{2}|H_{V}^{j}|z_{1}\rangle|\mathcal{J}_{1}\rangle
	\label{eq:mat_els_Hvj}
\end{equation}
 is nonzero only when $\mathcal{J}_{1}$ and $\mathcal{J}_{2}$ are
both contained in the same copy $j$. The matrix with entries \eqref{eq:mat_els_Hvj}
is therefore block diagonal with a nonzero block for each $j = 0,\dots,M-1$.
In addition to these blocks, there are ``blocks'' of size 1 -- the states

\begin{equation}
	|z\rangle|\mathcal{K}^{0}\rangle, \qquad
	|z\rangle|\mathcal{M}^{j}\rangle, \qquad
	|z\rangle|\mathcal{L}^{j}\rangle\label{eq:phi_k_m_l}
\end{equation}
for $j = 0,\dots,M-1$ and $z\in\{0,1\}^{n + n_{a}}$ which have zero energy
for \eqref{eq:first_op}. Let us now solve for the zero eigenvectors
of \eqref{eq:first_op} within each nonzero block. The block corresponding
to a given value $j$ is a $\left(2^{n + n_{a}}\cdot25\right)\times\left(2^{n + n_{a}}\cdot25\right)$
matrix with entries
\[
	\langle\mathcal{J}_{2}|\langle z_{2}|H_{V}^{j}|z_{1}\rangle|\mathcal{J}_{1}\rangle,
\]
where $\mathcal{J}_{1}$ and $\mathcal{J}_{2}$ are from the corresponding
set of 25 connected components. Recall that $H_{V}^{j}$ acts nontrivially
on qubits $a_{j}$ and $b_{j}$ and as the identity on the remaining
$2^{n + n_{a}-2}$ qubits of the first register. Using this fact we
see that the above matrix (the block labeled by $j$) further decomposes into $2^{n + n_{a}-2}$
identical blocks each of which has size $4\cdot25\times4\cdot25$.
Each of these $100\times100$ blocks is a matrix that we have already
encountered in Section \ref{sec:Hamiltonians-with-two}, the matrix
with entries given by \eqref{eq:mat_HV}. Lemma \ref{lem:H_V}
characterizes the zero energy eigenvectors of this matrix. Applying
Lemma~\ref{lem:H_V},
we get zero eigenvectors of \eqref{eq:first_op} 

\begin{align}
	\sum_{x,y\in\{0,1\}}\Bigg[ 
			& \left(|xy\rangle\langle xy|_{a_{j}b_{j}}\right)|z\rangle|\mathcal{R}_{x}^{j}\rangle 
			+ \left(Q|xy\rangle\langle xy|_{a_{j}b_{j}}\right)|z\rangle|\mathcal{G}_{x}^{j}\rangle \nonumber\\
 			& + \left(\tilde{Q}|xy\rangle\langle xy|_{a_{j}b_{j}}\right)|z\rangle|\mathcal{Y}_{x}^{j}\rangle 
 			+ \left(V|xy\rangle\langle xy|_{a_{j}b_{j}}\right)|z\rangle|\mathcal{B}_{x}^{j}\rangle
 		\Bigg].
 	\label{eq:psi_states}
\end{align}
Here the projector $|xy\rangle\langle xy|_{a_{j}b_{j}}$ acts nontrivially
only on qubits $a_{j}$ and $b_{j}$ of the computational register
and $z$ is an $\left(n + n_{a}\right)$-bit string. Note that since
$|z\rangle$ is a computational basis state, only one of the terms
in the sum over $x,y$ is nonzero; we have written the state in this
way to ease understanding later on. Letting $z$ range over all $(n + n_{a})$-bit
strings and $j = 0,\dots,M-1$, the states \eqref{eq:phi_k_m_l} and \eqref{eq:psi_states}
span the groundspace of \eqref{eq:first_op}. 

Now consider 
\begin{equation}
	\iii \otimes  H_{\text{diag}}^{(M)} 
		+ \sum_{j = 0}^{M-1}H_{V}^{j} 
		+ \sum_{j = 0}^{M-1}H_{U}^{j}.
	\label{eq:intermediate_Ham}
\end{equation}
The third term in \eqref{eq:intermediate_Ham} couples the ground states of \eqref{eq:first_op}.
To solve for the zero energy states of \eqref{eq:intermediate_Ham}, we compute the action of 
\begin{equation}
	\sum_{j = 0}^{M-1}H_{U}^{j}
	\label{eq:sum_of_Hus}
\end{equation}
 within the groundspace of the first two terms. We now exhibit a basis for the ground space of \eqref{eq:first_op} in which the operator \eqref{eq:sum_of_Hus} has a simple form.  Define unitaries
\begin{align*}
	O^{0} &  = \iii \qquad \text{and} \qquad 
	O^{k} = \text{ }V_{a_{k-1}b_{k-1}}U^{k-1}V_{a_{k-2}b_{k-2}}U^{j-2}\dots U^{0}
\end{align*}
for $k = 1,\dots,M-1$ and states
\begin{equation}
	|\mathcal{K}^{0}(\phi)\rangle = |\phi\rangle|\mathcal{K}^{0}\rangle, \qquad
	|\mathcal{M}^{j}(\phi)\rangle = O^{j}|\phi\rangle|\mathcal{M}^{j}\rangle, \qquad
	|\mathcal{L}^{j}(\phi)\rangle = U^{j}O^{j}|\phi\rangle|\mathcal{L}^{j}\rangle,
	\label{eq:Basis_K0}
\end{equation}
and 
\begin{align}
	|\psi_{V}^{j}(\phi)\rangle = \sum_{x,y\in\{0,1\}}\Bigg[
		& \left(|xy\rangle\langle xy|_{a_{j}b_{j}}\right)
				U^{j}O^{j}|\phi\rangle|\mathcal{R}_{x}^{j}\rangle 
		+ \left(Q|xy\rangle\langle xy|_{a_{j}b_{j}}\right)
				U^{j}O^{j}|\phi\rangle|\mathcal{G}_{x}^{j}\rangle\\
 		& + \left(\tilde{Q}|xy\rangle\langle xy|_{a_{j}b_{j}}\right)
 				U^{j}O^{j}|\phi\rangle|\mathcal{Y}_{x}^{j}\rangle 
 		+ \left(V|xy\rangle\langle xy|_{a_{j}b_{j}}\right)
 				U^{j}O^{j}|\phi\rangle|\mathcal{B}_{x}^{j}\rangle
 	\Bigg].
 	\label{eq:psiV}
\end{align}
Here we let $|\phi\rangle$
range over some (arbitrary) complete orthonormal
basis $\Lambda$ for the $(n + n_{a})$ qubit register and $j = 0,\dots,M-1$.
The states \eqref{eq:Basis_K0}-\eqref{eq:psiV} (with $j = 0,\dots,M-1$
and $|\phi\rangle\in\Lambda$) are linearly independent superpositions
of \eqref{eq:phi_k_m_l} and \eqref{eq:psi_states} and therefore span
the groundspace of \eqref{eq:first_op}. Their normalizations are:
\begin{align}
	\langle\mathcal{K}^{0}(\phi)|\mathcal{K}^{0}(\phi)\rangle &= 7, \nonumber\\
	\langle\mathcal{M}^{j}(\phi)|\mathcal{M}^{j}(\phi)\rangle 
				= \langle\mathcal{L}^{j}(\phi)|\mathcal{L}^{j}(\phi)\rangle &= 4, \label{eq:norm1}\\
	\langle\psi_{V}^{j}(\phi)|\psi_{V}^{j}(\phi)\rangle &= 43. \nonumber
\end{align}

The operator \eqref{eq:sum_of_Hus} acts on this basis in a simple way.
It only connects states with the same $|\phi\rangle\in\Lambda$ and
is therefore block diagonal (with $2^{n + n_{a}}$ blocks).

We compute
the matrix elements within a block using equations \eqref{eq:H_Uj_def}
and \eqref{eq:hi_iplus1}. For example, 
\begin{align*}
	\sum_{j = 0}^{M-1} \langle\mathcal{K}^{0}(\phi)|H_{U}^{j}|\mathcal{M}^{0}(\phi)\rangle 
		&  = \langle\mathcal{K}^{0}(\phi)|H_{U}^{0}|\mathcal{M}^{0}(\phi)\rangle\\
 		&  = \langle\mathcal{K}^{0}(\phi)|
 				\left(\iii \otimes \iii \otimes h_{3,4}\right)
 				|\mathcal{M}^{0}(\phi)\rangle 
   = -\frac{1}{4}.
\end{align*}
Continuing in this manner, we compute all matrix elements of \eqref{eq:sum_of_Hus}
between states \eqref{eq:Basis_K0}-\eqref{eq:psiV}. The resulting matrix
is the same for each $|\phi\rangle$ and is equal to $\frac{1}{4}L$
where $L$ is the Laplacian\footnote{Recall that the Laplacian matrix of a simple graph with vertex set $V$ and
edge set $E$ is the $|V|\times|V|$ matrix with 
\[
L_{u,v}=\begin{cases}
-1 & \text{ if }(u,v)\in E\\
d(u) & \text{ if }u=v\\
0 & \text{otherwise}
\end{cases}
\]
 where $d(u)$ is the degree of vertex $u$.}
of the graph in Figure~\ref{fig:laplaciangraph}.
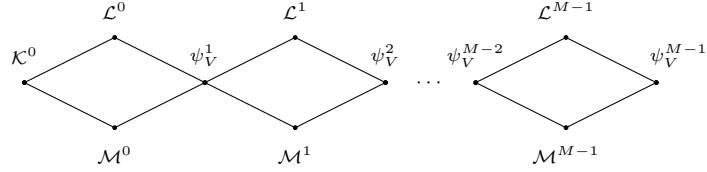
\begin{figure}
\center
{\scalefont{0.65}
\begin{tikzpicture}[scale = 0.6,vertex/.style = {circle,draw = black,fill = black,inner sep = 0.5pt,minimum size  = 0mm}]

\draw (0,0)node[vertex]{}--(2,1)node[vertex]{}--(4,0)node[vertex]{}--(6,1)node[vertex]{}--(8,0)node[vertex]{};
\draw (0,0)node[vertex]{}--(2,-1)node[vertex]{}--(4,0)node[vertex]{}--(6,-1)node[vertex]{}--(8,0)node[vertex]{};
\draw (10,0) node[vertex]{}--(12,1)node[vertex]{}--(14,0)node[vertex]{};
\draw (10,0) node[vertex]{}--(12,-1)node[vertex]{}--(14,0)node[vertex]{};
\draw (9,0)node[]{$\cdots$};
\draw (0,0.6) node[]{$\mathcal{K}^{0}$};
\draw (2,1.6) node[]{$\mathcal{L}^{0}$};
\draw (2,-1.6) node[]{$\mathcal{M}^{0}$};
\draw (4,0.6) node[]{$\psi_{V}^{1}$};
\draw (6,1.6) node[]{$\mathcal{L}^{1}$};
\draw (6,-1.6) node[]{$\mathcal{M}^{1}$};
\draw (8,0.6) node[]{$\psi_{V}^{2}$};
\draw (10,0.6)node[]{$\psi_{V}^{M-2}$};
\draw (12,1.6) node[]{$\mathcal{L}^{M-1}$};
\draw (12,-1.6) node[]{$\mathcal{M}^{M-1}$};
\draw (14.5,0.6)node[]{$\psi_{V}^{M-1}$};
\end{tikzpicture}
}

\caption{The Hamiltonian \eqref{eq:sum_of_Hus} is block diagonal when written
in the basis \eqref{eq:Basis_K0}-\eqref{eq:psiV}. Each of the $2^{n + n_{a}}$
blocks corresponds to a different $(n + n_{a})$-qubit state $|\phi\rangle$
from an orthonormal basis $\Lambda$. The matrix for each block is
the same, equal to $\frac{1}{4}L$ where $L$ is the Laplacian of this graph, which has $3M+1$ vertices.}

\label{fig:laplaciangraph}
\end{figure}
 The Laplacian matrix of a connected graph has a unique eigenvector
with eigenvalue zero: the all ones vector. This fact means that for
each $|\phi\rangle\in\Lambda$ there is a unique zero energy state
of \eqref{eq:intermediate_Ham} given by the uniform superposition
\begin{align}
	|\text{Hist(\ensuremath{\phi})}\rangle 
	&  = \frac{1}{\sqrt{51M + 7}}\left(
			|\mathcal{K}^{0}(\phi)\rangle 
			+ \sum_{j = 0}^{M-1}\left(
					|\mathcal{L}^{j}(\phi)\rangle 
					+ |\mathcal{M}^{j}(\phi)\rangle 
					+ |\psi_{V}^{j}(\phi)\rangle
			\right)
		\right),
	\label{eq:history_states_Hc}
\end{align}
 where we used \eqref{eq:norm1} to compute the normalization.
Letting $|\phi\rangle$ range over all states in the basis $\Lambda$
we get a spanning basis for the groundspace of \eqref{eq:intermediate_Ham}.
Moreover, we claim that \emph{every} state in the groundspace of \eqref{eq:intermediate_Ham}
is of the form 
\[
	|\text{Hist}(\psi)\rangle
\]
 for some $(n + n_{a})$-qubit state $\ket{\psi}.$ This is because,
for any two basis states $|\phi\rangle,|\phi'\rangle\in\Lambda$ and
complex numbers $\alpha,\beta$ 
\[
	\alpha|\text{Hist(\ensuremath{\phi})}\rangle + \beta|\text{Hist}(\phi')\rangle 
	= |\text{Hist}(\alpha\phi + \beta\phi')\rangle,
\]
 which can be seen using the definitions \eqref{eq:Basis_K0}-\eqref{eq:psiV}.

Now consider the conditions under which a state $|\text{Hist}(\psi)\rangle$
in the groundspace of \eqref{eq:intermediate_Ham} also has zero energy
for both $H_{\text{init}}$ and $H_{\text{end}}$, the final two terms in \eqref{eq:H_C}.

We have 
\begin{align}
	\langle\text{Hist}(\psi)|H_{\text{init}}|\text{Hist}(\psi)\rangle 
	&  = \frac{1}{\left(51M + 7\right)}
			 \langle\mathcal{K}^{0}(\psi)|
			 		\sum_{i = 1}^{n_{a}} \ket{1}\langle 1|_{n + i}
					\otimes C_{\leq1}\otimes C_{\leq1}
					|\mathcal{K}^{0}(\psi)\rangle
					\label{eq:H_init_Hist}\\
 &  = \frac{1}{\left(51M + 7\right)}
 			 \langle\psi|\langle C_{1}|\langle C_{1}|
 			 		\sum_{i = 1}^{n_{a}} \ket{1}\langle 1|_{n + i}
 					\otimes C_{\leq1}\otimes C_{\leq1}
 					\ket{\psi}|C_{1}\rangle|C_{1}\rangle \nonumber \\
 &  = \frac{1}{4\left(51M + 7\right)}
 				\langle\psi| \sum_{i = 1}^{n_{a}}  \ket{1}\langle 1|_{n + i}\ket{\psi},\nonumber 
\end{align}
where in the last line we used \eqref{eq:1localproj2}. This
is equal to zero if and only if $\ket{\psi} = |W\rangle|0\rangle^{\otimes n_{a}}$
for some $n$-qubit state $|W\rangle$. Similarly, 
\begin{equation}
\langle\text{Hist}(\psi)|H_{\text{end}}|\text{Hist}(\psi)\rangle = \frac{1}{4\left(51M + 7\right)}\langle\psi|U_{X}^{\dagger}\left(|0\rangle\langle 0|_{n + 1}\right)U_{X}\ket{\psi},\label{eq:H_end_Hist}
\end{equation}
which is zero if and only if the $(n + 1)$th qubit of $U_{X}\ket{\psi}$
is in the state $\ket{1}$ with certainty. We have therefore proven
that $H_{X}$ has a zero energy eigenstate if and only if there exists
an $n-$qubit state $|W\rangle$ satisfying 
\begin{equation}
\left\Vert \left(1^{\otimes n}\otimes|0\rangle\langle 0|_{(n + 1)} \otimes \iii ^{\otimes(n_{a}-1)}\right)U_{X}|W\rangle|0\rangle^{\otimes n_{a}}\right\Vert ^{2} = 0.\label{eq:condition-1}
\end{equation}
This establishes Theorem \ref{thm:Completeness}.


\subsection{Proof of Theorem \ref{thm:Soundness} \label{sub:Soundness_Section}}

We now consider the case where 
\begin{equation}
	\text{AP}(U_{X},|W\rangle)\leq\frac{1}{3}\quad\text{for all}\quad|W\rangle.
	\label{eq:condition_of_thm2}
\end{equation}
By Theorem \ref{thm:Completeness} we know that in this case $H_{X}$
has no zero eigenvalues. Here we prove that its ground energy is $\Omega\left(\frac{1}{M^{6}}\right)$
which gives us Theorem \ref{thm:Soundness} since $M = \Theta(g)$.

We use a spectral bound due to Kitaev \cite{Kitaev:2002:CQC:863284}.
Write 
\[
	\gamma(H)
\]
for the smallest nonzero eigenvalue of a finite-dimensional, positive
semidefinite operator $H$. 
\begin{lem}\label{lem:KitLemma}
[Kitaev's geometric Lemma. \cite{Kitaev:2002:CQC:863284}, p.147]
\label{lem:geom_lemma_rephrased}Let $H = H_{A} + H_{B}$ with $H_{A}\geq0$
and $H_{B}\geq0$. Let $S$ be the nullspace of $H_{A}$ and let $\Pi_{B}$
be the projector onto the nullspace of $H_{B}$. Suppose $H$ has
no zero eigenvalues. Then
\begin{align}
	\gamma(H)\geq\min\{\gamma(H_{A}),\gamma(H_{B})\}\cdot\left(1-\sqrt{c}\right),
	\label{eq:KitLemma}
\end{align}
 where 
\begin{align}
	c = \max_{|v\rangle\in S\,:\,\langle v|v\rangle = 1}\langle v|\Pi_{B}|v\rangle.
	\label{eq:cdef}
\end{align}
\medskip{}
 
\end{lem}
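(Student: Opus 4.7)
The plan is to derive a uniform lower bound on $\langle\psi|H|\psi\rangle$ over unit vectors $|\psi\rangle$. Let $\Pi_A$ denote the projector onto $S$. Since $H_A$ and $H_B$ are positive semidefinite with nullspaces $S$ and $\mathrm{image}(\Pi_B)$ respectively, the definition of $\gamma$ yields the operator inequalities $H_A\geq\gamma(H_A)(\iii-\Pi_A)$ and $H_B\geq\gamma(H_B)(\iii-\Pi_B)$. Summing them and writing $\gamma_{\min}=\min\{\gamma(H_A),\gamma(H_B)\}$ gives
\[
\langle\psi|H|\psi\rangle\;\geq\;\gamma_{\min}\bigl(2-\langle\psi|\Pi_A+\Pi_B|\psi\rangle\bigr).
\]
Thus \eqref{eq:KitLemma} would follow at once from the operator-norm estimate $\|\Pi_A+\Pi_B\|\leq 1+\sqrt c$: the right-hand side would then be at least $\gamma_{\min}(1-\sqrt c)$ for every unit $|\psi\rangle$, and passing to the infimum (which equals $\gamma(H)$ since $H$ has no zero eigenvalue by hypothesis) would finish the argument.

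The operator-norm estimate is the main obstacle. To prove it, let $|v\rangle$ be a unit eigenvector of the positive-semidefinite operator $\Pi_A+\Pi_B$ with eigenvalue $\lambda$; the case $\lambda\leq 1$ is trivial, so assume $\lambda>1$. Set $|u\rangle=\Pi_A|v\rangle\in S$ and $|w\rangle=\Pi_B|v\rangle\in\mathrm{image}(\Pi_B)$, so the eigenvalue equation becomes $|u\rangle+|w\rangle=\lambda|v\rangle$. Applying $\Pi_A$ and $\Pi_B$ to this identity produces the pair
\[
\Pi_A|w\rangle=(\lambda-1)|u\rangle,\qquad\Pi_B|u\rangle=(\lambda-1)|w\rangle.
\]
Since $|u\rangle\in S$, the definition of $c$ gives $\|\Pi_B|u\rangle\|\leq\sqrt c\,\|u\|$, hence $(\lambda-1)\|w\|\leq\sqrt c\,\|u\|$. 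For the symmetric inequality I would first observe that for any unit $|x\rangle$, $\|\Pi_B\Pi_A|x\rangle\|^2\leq c\,\|\Pi_A|x\rangle\|^2\leq c$ (using $\Pi_A|x\rangle\in S$ and the definition of $c$), which gives $\|\Pi_A\Pi_B\|=\|\Pi_B\Pi_A\|\leq\sqrt c$. Then since $|w\rangle\in\mathrm{image}(\Pi_B)$ I have $\Pi_A|w\rangle=\Pi_A\Pi_B|w\rangle$, so $(\lambda-1)\|u\|=\|\Pi_A|w\rangle\|\leq\sqrt c\,\|w\|$. Multiplying the two inequalities and cancelling $\|u\|\,\|w\|$ (neither can vanish when $\lambda>1$, for otherwise $\Pi_A|v\rangle=0$ or $\Pi_B|v\rangle=0$ forces $\lambda\leq 1$) yields $(\lambda-1)^2\leq c$, i.e.\ $\lambda\leq 1+\sqrt c$, as required.

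A final remark: the hypothesis that $H$ has no zero eigenvalue enters the proof only to ensure that $c<1$, equivalently $S\cap\mathrm{image}(\Pi_B)=\{0\}$, so that \eqref{eq:KitLemma} is genuinely positive. Geometrically, $\sqrt c$ is the cosine of the minimum principal angle between the subspaces $S$ and $\mathrm{image}(\Pi_B)$, and the bound $\|\Pi_A+\Pi_B\|\leq 1+\sqrt c$ is the two-subspace version of the elementary fact that two distinct lines meeting at a positive angle combine into an operator whose top eigenvalue is strictly less than $2$.
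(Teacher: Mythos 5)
The paper does not prove this lemma — it is quoted directly from Kitaev's book \cite{Kitaev:2002:CQC:863284} — so there is no internal proof to compare against. Your blind proof is correct and self-contained. The reduction to the operator-norm estimate $\|\Pi_A+\Pi_B\|\le 1+\sqrt c$ via the pointwise bounds $H_A\ge\gamma(H_A)(\iii-\Pi_A)$, $H_B\ge\gamma(H_B)(\iii-\Pi_B)$ is exactly the standard mechanism, and your proof of the norm estimate — isolating an eigenvector of $\Pi_A+\Pi_B$, deriving $\Pi_A|w\rangle=(\lambda-1)|u\rangle$ and $\Pi_B|u\rangle=(\lambda-1)|w\rangle$, and pairing $\|\Pi_B\Pi_A\|\le\sqrt c$ with the definition of $c$ — is clean and handles the degenerate cases ($\|u\|=0$ or $\|w\|=0$) correctly. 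This is essentially equivalent to, but more elementary than, the usual route through the CS/principal-angle decomposition of a pair of projectors, where one block-diagonalizes $\Pi_A+\Pi_B$ into $2\times 2$ blocks with eigenvalues $1\pm\cos\theta_i$; your argument extracts only the top eigenvalue, which is all that is needed, without invoking the full decomposition. One small sharpening of your closing remark: the no-zero-eigenvalue hypothesis is used both to guarantee $c<1$ and, logically prior to that, to identify the infimum of the quadratic form with $\gamma(H)$ rather than the bottom of the spectrum; these are equivalent here (both are equivalent to $S\cap\mathrm{image}(\Pi_B)=\{0\}$), but stating only the former slightly obscures where the hypothesis is actually invoked in your chain of reasoning.
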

Note that we use the notation $H\geq0$ to indicate that all eigenvalues
of a matrix $H$ are real and nonnegative.

Write $H_X =H_A + H_B$, with
\begin{align*}
	H_{A} &  = \iii \otimes  H_{\text{diag}}^{(M)} 
		+ \sum_{j = 0}^{M-1}\left(H_{V}^{j} + H_{U}^{j}\right), \\
H_{B} &  = H_{\text{init}} + H_{\text{end}}.
\end{align*}
We use Lemma~\ref{lem:KitLemma} to lower bound $\gamma(H_{X})$
which is equal to the ground energy of $H_{X}$.

To apply Lemma~\ref{lem:KitLemma},
we need lower bounds on $\gamma(H_{A})$ and $\gamma(H_{B})$.
To bound $\gamma(H_{B})$ we use equations \eqref{eq:C_geq} and \eqref{eq:C_leq},
which say that the operators 
\[
	C_{\leq1}\otimes C_{\leq1}
	\quad
	\text{and}
	\quad 
	C_{\geq(9M + 3)}\otimes C_{\geq(9M + 3)}
\]
are diagonal in the computational basis. 
Looking at the expressions
for $H_{\text{init}}$ and $H_{\text{end}}$ from \eqref{eq:H_init_H_end},
we see that $H_{B}$ is a sum of projectors diagonal in the computational
basis and hence $\gamma(H_{B})\geq1$. 
We prove the following lower bound on $\gamma(H_{A})$
in Appendix \ref{sec:Proof-of-Theorem}:
\begin{lem}
\label{lem:complicated_gap_bound}
\[
	\gamma(H_A) = \gamma\left(\iii \otimes  H_{\textnormal{diag}}^{(M)} 
		+ \sum_{j = 0}^{M-1}\left(H_{V}^{j} 
		+ H_{U}^{j}\right)\right) 
	= \Omega\left(\frac{1}{M^{5}}\right).
\]
\end{lem}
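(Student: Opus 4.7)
The plan is to apply Kitaev's geometric lemma (Lemma \ref{lem:geom_lemma_rephrased}) in a nested fashion, peeling off one summand of $H_A$ at a time and exploiting the nullspace characterizations from Section \ref{sub:The-zero-energy}. The natural layering is
\[
H_A \;=\; \underbrace{\iii \otimes H_{\text{diag}}^{(M)}}_{H^{(0)}} \;+\; \underbrace{\sum_{j=0}^{M-1} H_V^j}_{H^{(1)}} \;+\; \underbrace{\sum_{j=0}^{M-1} H_U^j}_{H^{(2)}},
\]
reflecting the three stages in the nullspace analysis: $H^{(0)}$ enforces clock validity, $H^{(1)}$ implements two-qubit gates (via Lemma \ref{lem:H_V}), and $H^{(2)}$ enforces uniform-superposition history states.

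First I would bound $\gamma(H_A)$ by a single application of Kitaev's lemma with $H_A = (H^{(0)}+H^{(1)}) + H^{(2)}$. The crucial ingredient is the observation made in Section \ref{sub:The-zero-energy}: on the nullspace of $H^{(0)}+H^{(1)}$, the operator $H^{(2)}$ acts (up to a unitary change of basis and a tensor factor on the computational register) as $\tfrac14 L$, where $L$ is the Laplacian of the graph in Figure \ref{fig:laplaciangraph} on $3M+1$ vertices. Its smallest nonzero eigenvalue is $2(1-\cos(\pi/(3M+1))) = \Theta(1/M^2)$, which feeds into Kitaev's lemma to give a factor of order $1/M^2$ at this layer. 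The overlap parameter $c$ in Kitaev's lemma at this step is computed from inner products among the normalized versions of the states \eqref{eq:Basis_K0}--\eqref{eq:psiV}, whose norms are recorded in \eqref{eq:norm1}; one shows $1-\sqrt{c} = \Omega(1/M)$.

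Next, $\gamma(H^{(0)}+H^{(1)})$ is bounded by applying Kitaev's lemma again. Since the $H_V^j$ at different $j$ act on disjoint clock regions they mutually commute, and on the nullspace of $H^{(0)}$ each $H_V^j$ decomposes into identical $100 \times 100$ blocks whose nonzero eigenvalues (from \eqref{eq:mat_HV}, guaranteed to exist by Lemma \ref{lem:H_V}) are bounded below by a constant. This reduces the problem to lower-bounding $\gamma(H_{\text{diag}}^{(M)})$, which in turn requires a further Kitaev application separating $H_{\text{clock}}^{(9M+3)} \otimes \iii + \iii \otimes H_{\text{clock}}^{(9M+3)}$ from the diagonal connectors $S^{(k,k+2)}$, and ultimately a gap bound of the form $\gamma(H_{\text{clock}}^{(N)}) = \Omega(1/\mathrm{poly}(N))$ for the new clock construction itself.

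The main obstacle is the last step: establishing a polynomial spectral gap for $H_{\text{clock}}^{(N)}$, whose ground space is spanned by the non-product states \eqref{eq:C_i} and whose mixing term $\sum_i g_{2i-1,2i}(\sigma_i^x)$ couples the triplet register to the unary register via $H_{\text{sync}}$; standard gap bounds for the unary clock do not apply directly and one must analyze $H_{\text{triplet}}^{(2N)}$ separately using the explicit ground-state structure \eqref{eq:c1}--\eqref{eq:cR} together with the prefactors \eqref{eq:alpha_i}--\eqref{eq:beta_i}. Once this polynomial gap is in place, combining it with the constant gap of $H^{(1)}$ in the restricted subspace, the $\Theta(1/M^2)$ Laplacian gap from the outer layer, and the $\Omega(1/M)$ losses from the Kitaev overlap parameters produces the advertised $\Omega(1/M^5)$ bound.
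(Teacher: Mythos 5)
Your proposal follows essentially the same route as the paper's Appendix C: the same three-layer decomposition of $H_A$, the same nested application of (a corollary of) Kitaev's geometric Lemma peeling off $\iii \otimes H_{\text{diag}}^{(M)}$, then $\sum_j H_V^j$, then $\sum_j H_U^j$, the same reduction to the spectral gap of $H_{\text{clock}}^{(N)}$ and hence $H_{\text{triplet}}^{(2N)}$, the same use of identical $100\times 100$ blocks for $H_V^j|_S$, and the same appeal to the Laplacian gap of the graph in Figure~\ref{fig:laplaciangraph} at the outermost layer. So structurally this is the paper's argument.

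Two quantitative points deserve flagging. First, you conjecture $\gamma(H_{\text{clock}}^{(N)}) = \Omega(1/\mathrm{poly}(N))$. In fact one needs, and the paper proves, the much stronger $\gamma(H_{\text{clock}}^{(N)}) = \Omega(1)$ (and likewise $\gamma(H_{\text{triplet}}^{(2N)}) = \Omega(1)$). If the clock gap decayed polynomially in $N = 9M+3 = \Theta(M)$, it would propagate through each subsequent application of the geometric lemma and you could not land on $\Omega(1/M^5)$; your stated accounting (``constant gap of $H^{(1)}$'' $\times$ ``$1/M^2$'' $\times$ ``$1/M$ losses'') implicitly presumes a constant clock gap, so the conjecture as written is too weak to support it. Second, your remark that $1 - \sqrt{c} = \Omega(1/M)$ at the outer layer is imprecise. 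With Corollary~\ref{cor:corollary_bound}, the relevant factor at that step is $\gamma(H_B|_S)/(2\|H_B\|) = \Omega(1/M^2)/O(M) = \Omega(1/M^3)$, because the Laplacian gap $\Omega(1/M^2)$ is part of $\gamma(H_B|_S)$, not a separate ingredient. The total tally does still give $\Omega(1/M^2) \cdot \Omega(1/M^3) = \Omega(1/M^5)$, but the breakdown into ``Laplacian $1/M^2$ plus overlap $1/M$'' double-counts unless you mean the three separate $\|H_B\| = O(M)$ losses from the three nested lemma applications; that reading is correct but was not clearly stated. Finally, a minor cosmetic issue: the formula $2(1-\cos(\pi/(3M+1)))$ is the path-graph Laplacian gap, but the graph in Figure~\ref{fig:laplaciangraph} is not a path (it has parallel two-edge paths through $\mathcal{L}^j$ and $\mathcal{M}^j$); the general $\Omega(1/M^2)$ bound for Laplacians of connected graphs on $3M+1$ vertices is what the paper actually invokes.
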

We use these bounds in \eqref{eq:KitLemma} to obtain
\begin{equation}
	\gamma(H_{X})\geq\Omega\left(\frac{1}{M^{5}}\right)\cdot\left(1-\sqrt{c}\right),
	\label{eq:H_x_bnd}
\end{equation}
with $c$ from \eqref{eq:cdef}.

We now upper bound $c$. Write $\Pi_{\text{init}}$ and $\Pi_{\text{end}}$
for the projectors onto the zero eigenspaces of $H_{\text{init}}$
and $H_{\text{end}}$ respectively. We have 
\begin{align}
	\Pi_{\text{init}} 
			&  = \left(1^{\otimes n}\otimes|0\rangle\langle 0|^{\otimes n_{a}}\right)\otimes C_{\leq1}\otimes C_{\leq1} 
					+ \iii \otimes (1-C_{\leq1}\otimes C_{\leq1}),
					\label{eq:Piinit} \\
	\Pi_{\text{end}} 
			&  = \ket{1}\langle 1|_{n + 1}\otimes C_{\geq 9M + 3}\otimes C_{\geq 9M + 3} 
					+ \iii \otimes \left(1-C_{\geq 9M + 3}\otimes C_{\geq 9M + 3}\right).
					\label{eq:Piend}
\end{align}
Since these two projectors are both diagonal in the computational
basis, $\Pi_{B}$ is simply 
\[
	\Pi_{\text{init}}\cdot\Pi_{\text{end}}.
\]
Recall from 
Section~\ref{sub:The-zero-energy} that every state in the nullspace
of $H_{A}$ is of the form $|\text{Hist}(\psi)\rangle$ for some $(n + n_{a})$-qubit
state $\ket{\psi}$. Thus
\begin{align}
	c &  = \max_{\ket{\psi}\,:\,\langle\psi\ket{\psi}=1}
			\langle\text{Hist}(\psi)| 
					\Pi_{\text{init}} \cdot \Pi_{\text{end}}
					|\text{Hist}(\psi)\rangle\nonumber \\
 & \leq\max_{\ket{\psi}\,:\,\langle\psi\ket{\psi}=1}
 		\langle\text{Hist}(\psi)|
 					\Pi_{\text{init}}
 					|\text{Hist}(\psi)\rangle^{\frac{1}{2}}
 	 \cdot \langle\text{Hist}(\psi)|
 	 				\Pi_{\text{end}}
 	 				|\text{Hist}(\psi)\rangle^{\frac{1}{2}}
	\label{eq:c_equation}
\end{align}
 by the Cauchy-Schwarz inequality. Using equations \eqref{eq:history_states_Hc},
\eqref{eq:Piinit} and \eqref{eq:Piend} we compute 
\begin{align*}
	\langle\text{Hist}(\psi)|\Pi_{\text{init}}|\text{Hist}(\psi)\rangle 
	&  = \frac{1}{4\left(51M + 7\right)}\langle\psi|\iii \otimes |0\rangle\langle 0|^{\otimes n_{a}}\ket{\psi} 
		+ 1 -\frac{1}{4\left(51M + 7\right)}, \\
	\langle\text{Hist}(\psi)|\Pi_{\text{end}}|\text{Hist}(\psi)\rangle 
	&  = \frac{1}{4\left(51M + 7\right)}\langle\psi|U_{X}^{\dagger}\ket{1}\langle 1|_{n + 1}U_{X}\ket{\psi} 
		+ 1 -\frac{1}{4\left(51M + 7\right)}.
\end{align*}

Write 
\[
	\ket{\psi} = |\psi_{1}\rangle + |\psi_{2}\rangle,
\]
 where 
\[
	|\psi_{1}\rangle = \iii \otimes |0\rangle\langle 0|^{\otimes n_{a}}\ket{\psi}, \qquad
	|\psi_{2}\rangle = \left(1-\iii \otimes |0\rangle\langle 0|^{\otimes n_{a}}\right)\ket{\psi}.
\]
Let $f = \langle\psi_{1}|\psi_{1}\rangle$, which implies
$\langle\psi_{2}|\psi_{2}\rangle = 1-f$. Then
\begin{equation}
	\langle\text{Hist}(\psi)|\Pi_{\text{init}}|\text{Hist}(\psi)\rangle 
		= 1- \frac{1}{4\left(51M + 7\right)}\left(1-f\right)
	\label{eq:psi_piinit_psi}
\end{equation}
Furthermore, $|\psi_{1}\rangle$ is proportional to a state of the
form $|W\rangle|0\rangle^{\otimes n_{a}}$ and so 
\[
	\langle\psi_{1}|U_{X}^{\dagger}\ket{1}\langle 1|_{n + 1}U_{X}|\psi_{1}\rangle
	\leq
	\frac{1}{3}\langle\psi_{1}|\psi_{1}\rangle = \frac{1}{3}f
\]
by equation \eqref{eq:condition_of_thm2}. Hence 
\begin{align}
	\langle\psi|U_{X}^{\dagger}\ket{1}\langle 1|_{n + 1}U_{X}\ket{\psi} 
		& \leq \frac{1}{3}f 
			+ \langle\psi_{2}|U_{X}^{\dagger}\ket{1}\langle 1|_{n + 1}U_{X}|\psi_{2}\rangle 
			+ 2\left|\langle\psi_{1}|U_{X}^{\dagger}\ket{1}\langle 1|_{n + 1}U_{X}|\psi_{2}\rangle\right| \nonumber \\
 		& \leq \frac{1}{3}f + \left(1-f\right) 
 			+ 2\left(\langle\psi_{1}|U_{X}^{\dagger}
 					\ket{1}\langle 1|_{n + 1}
 					U_{X}|\psi_{1}\rangle\langle\psi_{2}|\psi_{2}\rangle\right)^{\frac{1}{2}} \nonumber \\
 & \leq 1 - \frac{2}{3}f 
 			+ 2\sqrt{\frac{1}{3}f\left(1-f\right)}.
 \label{eq:psi_Ux_psi}
\end{align}
Plugging equations \eqref{eq:psi_piinit_psi} and \eqref{eq:psi_Ux_psi}
into \eqref{eq:c_equation} gives
\begin{align*}
	c & \leq \max_{0\leq f\leq1}\left(
				1-\frac{1}{4\left(51M + 7\right)}\left(1-f\right)
			\right)^{\frac{1}{2}}
			\left(
				1-\frac{1}{4\left(51M + 7\right)}\left(\frac{2}{3}f-2\sqrt{\frac{1}{3}f\left(1-f\right)}\right)
			\right)^{\frac{1}{2}}\\
 		& \leq\max_{0\leq f\leq1}\left(
 				1-\frac{1}{4\left(51M + 7\right)}\left(1-\frac{1}{3}f-2\sqrt{\frac{1}{3}f\left(1-f\right)}\right) 
 				+ \frac{1}{16\left(51M + 7\right)^{2}}\cdot\frac{2}{3}
 			\right)^{\frac{1}{2}}
\end{align*}
 Now using the fact that 
\[
	\min_{0\leq f\leq1}\left(1-\frac{1}{3}f - 2\sqrt{\frac{1}{3}f\left(1-f\right)}\right) 
	= 0.23\dots \geq \frac{1}{5},
\]
we get
\begin{align*}
	c & \leq \left(
				1-\frac{1}{4\left(51M + 7\right)}\cdot\frac{1}{5} + \frac{1}{16\left(51M + 7\right)^{2}}\cdot\frac{2}{3}
			\right)^{\frac{1}{2}}\\
 & \leq 1 - \frac{1}{4\left(51M + 7\right)}\cdot\frac{1}{10} 
 			+ \frac{1}{16\left(51M + 7\right)^{2}}\cdot\frac{1}{3}
\end{align*}
 (using $\sqrt{1 + x}\leq1 + \frac{x}{2}$). Plugging this into equation
\eqref{eq:H_x_bnd} gives 
\[
	\gamma\left(H_{X}\right) = \Omega\left(\frac{1}{M^{6}}\right),
\]
 which completes the proof. 


%


\section{Acknowledgments}

We thank Sergey Bravyi, Daniel Gottesman, Rajat Mittal, and John Watrous
for helpful discussions. We thank Sergey Bravyi for clarifying his
proof that quantum $k$-SAT is contained in QMA$_{1}$ \cite{Bravyi_communication}.
We thank John Watrous for suggesting the use of ``exact heralded
measurements'' and for helpful suggestions that led us to the specific
measurement procedure used in Section \ref{sec:containmentINqma1}.
DG was supported in part by NSERC. DN gratefully acknowledges support by the EU project QUERG.

\bibliographystyle{plain}	
\bibliography{q3sat_refs}


\appendix

\section{Quantum $3$-SAT is contained in QMA$_{1}$\label{sec:containmentINqma1}}

In this Section we show that quantum $3$-SAT is contained in QMA$_{1}$.
Bravyi gave a protocol that shows that quantum $k$-SAT is contained
in QMA$_{1}$ (for all constant $k$) but there is an error in the
proof of Lemma 5 of that paper (on which the result relies) \cite{Bravyi06}.
Bravyi's proof can be patched up, but this requires some small modifications
to the original definitions of QMA$_{1}$ and quantum $k$-SAT
\cite{Bravyi_communication}. Specifically, Bravyi's proof is valid
with the definition of QMA$_{1}$ given in Definition \ref{QMA_one}
and under a condition on the set of allowed projectors which may appear
in an instance of quantum $k$-SAT \cite{Bravyi_communication}. Here
we give a self-contained proof (following Bravyi \cite{Bravyi06}) that quantum $3$-SAT is contained
in QMA$_{1}$, with the definitions given in Section \ref{sec:Preliminaries-and-Overview}.

To show that quantum $3$-SAT is in QMA$_{1}$ we provide a verification
procedure that uses ancillae prepared in the state $|0\rangle$, measurements
in the computational basis, as well as operations controlled on the
measurement outcomes. All gates are from the set $\mathcal{G}$ \eqref{eq:Gset}.

As discussed in Section \ref{sec:Preliminaries-and-Overview}, such a
verification procedure can be efficiently converted into a QMA$_{1}$
verification procedure which uses only one measurement in the computational
basis (the final measurement which determines acceptance).

Consider the following simple verification procedure. Given an instance
of quantum $3$-SAT specified as a collection $\{\Pi_{i}:\, i = 1,\dots,r\}$
of projectors and a witness state $|W\rangle$, Arthur first chooses
a random integer $j\in\{1,\dots,r\}$. Note that using gates from $\mathcal{G}$,
Arthur can make the random choice very close to uniformly distributed.
For example he can choose $j$ so that
\[
	\text{Pr}[j = i] = \frac{1}{r} + \mathcal{O}\left(r2^{-r}\right)
\]
for each $i\in\{1,\dots,r\}$. (To do this he can prepare $r$ ancillae
in the state $|0\rangle^{\otimes r}$, apply Hadamard gates to each
of them, and then measure each in the computational basis. He can
divide the $2^{r}$ possible measurement outcomes into $r$ sets $i = 1,\dots,r$,
where an outcome from the $i$th set occurs with probability $p_{i} = \frac{1}{r} + \epsilon_{i}$
and $|\epsilon_{i}|\leq r2^{-r}$.) Having chosen $j$ randomly in
this way, Arthur measures the eigenvalue of the projector $\Pi_{j}$
in the state $|W\rangle$ and outputs ``yes'' if and only if the
measurement outcome is zero. For a yes instance, Merlin can give Arthur
a witness state $|W\rangle$ which is in the nullspace of each of
the projectors $\Pi_{i}$, and in this case Arthur outputs ``yes''
with certainty. On the other hand, for a no instance, any state $|W\rangle$ satisfies 
\[
	\sum_{i = 1}^{r} \langle W|\Pi_{i}|W\rangle\geq 1,
\]
 and so the probability that Arthur's measurement outcome is $1$
is given by 
\begin{align*}
	\sum_{i = 1}^{r}\text{Pr}[j = i] \, \langle W|\Pi_{i}|W\rangle 
		=  & \frac{1}{r}\sum_{i = 1}^{r}\langle W|\Pi_{i}|W\rangle 
			+ \mathcal{O}\left(r^{2}2^{-r}\right)\\
	\geq & \frac{1}{r} + \mathcal{O}\left(r^{2}2^{-r}\right).
\end{align*}
The probability that Arthur outputs ``yes'' in this case is upper
bounded by 
\[
	1-\frac{1}{r} + \mathcal{O}\left(r^{2}2^{-r}\right).
\]
This protocol has perfect completeness, and soundness $1-\frac{1}{r} + \mathcal{O}\left(r^{2}2^{-r}\right)$.
The soundness can be amplified (for example using the protocol described
in reference \cite{Kitaev:2002:CQC:863284}) to achieve the desired
constant $\frac{1}{3}$ which appears in our definition of QMA$_{1}$.

The above verification procedure requires that we can exactly measure
any projector $\Pi\in\mathcal{P}$ using the gate set $\mathcal{G}$. As
discussed below, this is only true for projectors satisfying condition
1. in Definition \ref{Delta_proj}. In the following, we describe a
modification of this verification procedure that weakens this requirement
and which proves that quantum $3$-SAT (with projectors from $\mathcal{P}$)
is contained in QMA$_{1}$.

First, let us show that an exact measurement can be performed for projectors
$\Pi\in\mathcal{P}$ satisfying condition 1 in Definition \ref{Delta_proj}.
For such projectors the controlled unitary 
\begin{equation}
	U_{\Pi} = \Pi\otimes\sigma^{x} + \left(1-\Pi\right) \otimes \iii \label{eq:controlled_U}
\end{equation}
is efficiently decomposable as sequences of $\{\widehat{H},T,\text{CNOT\}}$
gates (possibly using an ancilla initialized in the state $|0\rangle$).
This follows from reference \cite{Giles}, in which it is shown that
such a decomposition can be efficiently computed for any $3$-local
projector with matrix elements (in the computational basis) of the
form 
\begin{equation}
	\frac{1}{2^{s}}\left(a + i\,b + \sqrt{2}\, c + i\sqrt{2}\, d\right)
	\label{eq:matrix_elements_ring}
\end{equation}
for any $n$-independent constant $s\in\{2,3,4,\dots\}$ and integers
$a,b,c,d$. Note that a measurement
of $\Pi$ in a state $\ket{\psi}$ is performed by applying $U_{\Pi}$
to $\ket{\psi}|0\rangle$ and then measuring the last qubit in
the computational basis.

For projectors satisfying condition 2 it is possible to implement
a weaker type of exact measurement that we call an ``exact heralded
measurement''. This is a probabilistic implementation of the controlled
unitary \eqref{eq:controlled_U} using the gate set $\mathcal{G}$,
where the success or failure of the procedure is determined by the
outcome of a single qubit measurement of an ancilla in the computational
basis. The measurement is ``heralded'' because, after using the
algorithm, we are told (via the state of the ancilla qubit) whether
or not it succeeded. In the next Section we provide an exact heralded
measurement algorithm that can be run using any number of steps $m$
and achieves success probability $1-\mathcal{O}\left(4^{-m}\right)$. 

The following modification of the above verification procedure shows
that quantum $3$-SAT is contained in QMA$_{1}$ (with the definitions
given in Section \ref{sec:Preliminaries-and-Overview}). Arthur first
randomly chooses an integer $j\in\{1,\dots,r\}$ as described above.
If $\Pi_{j}$ is a projector satisfying condition 1 from Definition
\ref{Delta_proj} then he measures the projector using the exact measurement
procedure described above, and outputs ``yes'' if and only if the
measurement outcome is zero. On the other hand, if $\Pi_{j}$ satisfies
condition 2, then Arthur uses the exact heralded measurement algorithm
with $m = r$ steps. The probability that the measurement succeeds is
$1-\mbox{\ensuremath{\mathcal{O}}}\left(4^{-r}\right)$. In this case
Arthur outputs ``yes'' if either a) the 
heralded algorithm fails
or b)
the measurement succeeds and the measurement outcome is $0$. This
protocol has perfect completeness, and the very small failure probability
does not appreciably affect the soundness, which can still be amplified
to the constant $\frac{1}{3}$.


\subsection{An exact heralded measurement}

Consider a 3-local projector of the form 
\begin{equation}
	\Pi = U\left(
			\sqrt{\frac{1}{3}}|000\rangle-\sqrt{\frac{2}{3}}|001\rangle
		\right)
		\left(
			\sqrt{\frac{1}{3}}\langle 000|-\sqrt{\frac{2}{3}}\langle 001|
		\right)
		U^{\dagger},
		\label{eq:Pi_proj}
\end{equation}
where $U$ is a 3-qubit unitary with matrix elements of the form \eqref{eq:mat_el_ring}.
In this Section we describe an algorithm which uses gates from the
set $\{\widehat{H},T,\text{CNOT}\}$, ancillae prepared in the state
$|0\rangle,$ and measurements in the computational basis. The algorithm
takes as input a four-qubit state $|\alpha\rangle$ and outputs a
single classical bit which indicates either ``success'' or ``failure''
along with a four-qubit state $|\beta\rangle.$ If the algorithm outputs
``success'' then 
\[
	|\beta\rangle = U_{\Pi}|\alpha\rangle
\]
 and we have successfully and exactly implemented $U_{\Pi}$ (from \eqref{eq:controlled_U}).
The algorithm we describe can be performed using
any number of steps $m\in\{1,2,3,\dots\}$ and uses $\Theta(m)$ gates
along with $\Theta(m)$ ancillae and measurements in the computational
basis. The success probability of the $m$-step algorithm is 
\[
	p_{\text{success}}(m) = 1-\mathcal{O}\left(\frac{1}{4^{m}}\right).
\]
 We thank John Watrous for his help with developing this algorithm. 

\begin{figure}
	\includegraphics[width=8cm]{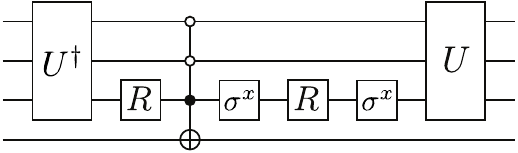}
	\caption{A circuit that implements the controlled unitary $U_{\Pi}$ from equation
	\eqref{eq:controlled_U} for a projector of the form \eqref{eq:Pi_proj}.
	Here the controlled operation applies $\sigma^{x}$ to the fourth
	qubit if the first three qubits are in the state $|001\rangle$. 
	The single qubit unitary $R$ is given by \eqref{eq:W_unitary}.
	\label{fig:U_Pi_circuit}}
\end{figure}
Our algorithm is based on the circuit in Figure~\ref{fig:U_Pi_circuit}
which implements $U_{\Pi}$. In the Figure, the four-qubit controlled
unitary is
\[
	|001\rangle\langle 001|\otimes\sigma^{x} 
	+ \left(\iii-|001\rangle\langle 001|\right) \otimes \iii 
\]
and 
\begin{equation}
	R = \frac{1}{\sqrt{3}}\left(\begin{array}{cc}
		\sqrt{2} & -1\\
		1 & \sqrt{2}
	\end{array}\right).\label{eq:W_unitary}
\end{equation}
Note that, with the exception of the single-qubit $R$ gates, all
gates which appear in the circuit in Figure~\ref{fig:U_Pi_circuit}
can be expressed as products of $\widehat{H},T$ and $\text{CNOT}$
gates (their matrix elements are of the form \eqref{eq:mat_el_ring}).
To implement each of the $R$ gates, 
we use a heralded $m$-step procedure described below. Each $R$ gate succeeds with probability 
\[
	1-\frac{1}{4^{m}}.
\]
The overall probability that we apply the circuit in Figure~\ref{fig:U_Pi_circuit}
is then $\left(1-\frac{1}{4^{m}}\right)^{2}$.


\subsection*{An $m$-step algorithm to implement $R$}

Consider the following 2-qubit unitary

\[
	A = \frac{1}{2}\left(\begin{array}{rrrr}
		\sqrt{2} & -1 & 1 & 0\\
		1 & \sqrt{2} & 0 & 1\\
		1 & 0 & -\sqrt{2} & -1\\
		0 & 1 & 1 & -\sqrt{2}
	\end{array}\right).
\]
Since each
of its matrix elements is of the form \eqref{eq:matrix_elements_ring},
$A$ can be implemented exactly using the gate set $\{\widehat{H},T,\text{CNOT}\}$
using one ancilla initialized in the $|0\rangle$ state \cite{Giles}.
That is to say, the results of reference \cite{Giles} can be used
to obtain a sequence of $\widehat{H}$, $T$ and $\text{CNOT }$gates
implementing a 3-qubit unitary $\tilde{U}$ such that 
\[
	\tilde{U}|0\rangle\ket{\psi} = |0\rangle\left(A\ket{\psi}\right)
\]
for all $2$-qubit states $\ket{\psi}$. Let $|\phi\rangle$ be
a one-qubit state that we wish to apply $R$ to. The $m$-step algorithm
consists of applying the following procedure until either a) the algorithm
terminates or b) step 2 has been repeated $m$ times. 
In case a) the
algorithm succeeds and in case b) it fails.
\begin{enumerate}
\item Apply $\tilde{U}$ to the state
	\begin{equation}
		|00\rangle|\phi\rangle\label{eq:startstate}
	\end{equation}
 to get 
\[
	\tilde{U}|00\rangle|\phi\rangle = |0\rangle\left(A|0\rangle|\phi\rangle\right).
\]

\item Measure the second qubit in the computational basis to obtain an outcome
$z\in\{0,1\}$. Looking at the top left $2\times2$ submatrix of $A$
we see that with probability 
\[
	p = \left\Vert \left(|0\rangle\langle 0| \otimes \iii \right)A|0\rangle|\phi\rangle\right\Vert ^{2} = \frac{3}{4}
\]
 we get the outcome $z = 0$. In this case the post-measurement state
is 
\[
	|00\rangle R|\phi\rangle.
\]
 In this case we are done--the unitary $R$ has been applied to the
state $|\phi\rangle$. On the other hand, if we get the outcome $z = 1$
then the post-measurement state is
\[
	|01\rangle \ket{\phi}.
\]
We then discard the two ancillas, adjoin two new ones in the state $|0\rangle$ and start
over from step 1.
\end{enumerate}
The algorithm fails if in each of the $m$ repetitions of step 2 we
measure $z = 1$. This occurs with probability $\frac{1}{4^{m}}$.


\section{Supplementary Figures}
\label{sec:clockreference}

\begin{figure}[H]
	\includegraphics[width=15cm]{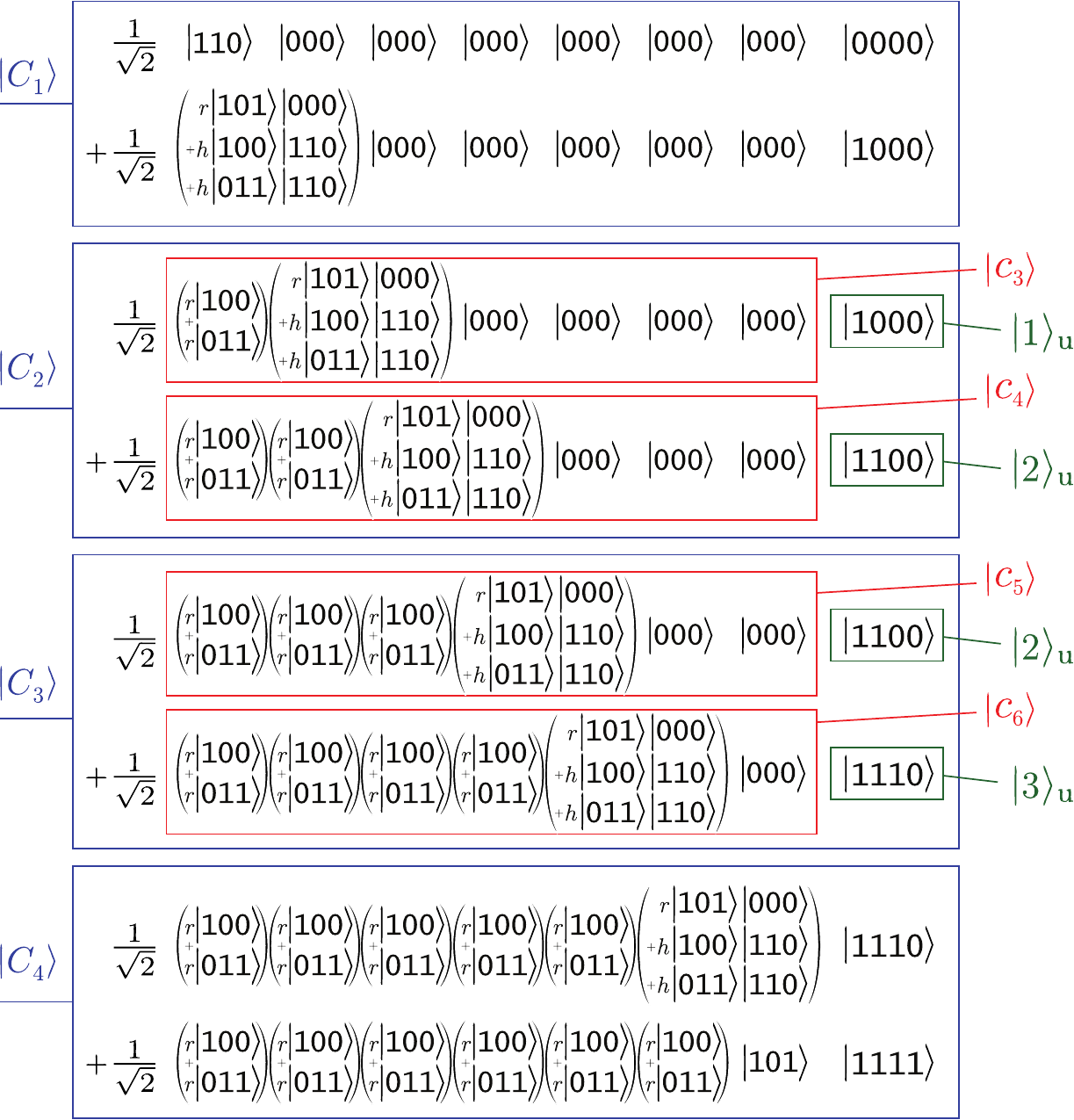}
	\caption{
	The clock states $|C_1\rangle, |C_2\rangle,|C_3\rangle,|C_4\rangle$ for $N=4$, expanded in the computational basis. These states are defined through equation \eqref{eq:C_i}. We use the shorthand $r=1/\sqrt{2}$ and $h=1/2$. 
	\label{fig:clockstates}}
\end{figure}
\newpage

\begin{figure}[H]
	\includegraphics[width=12cm]{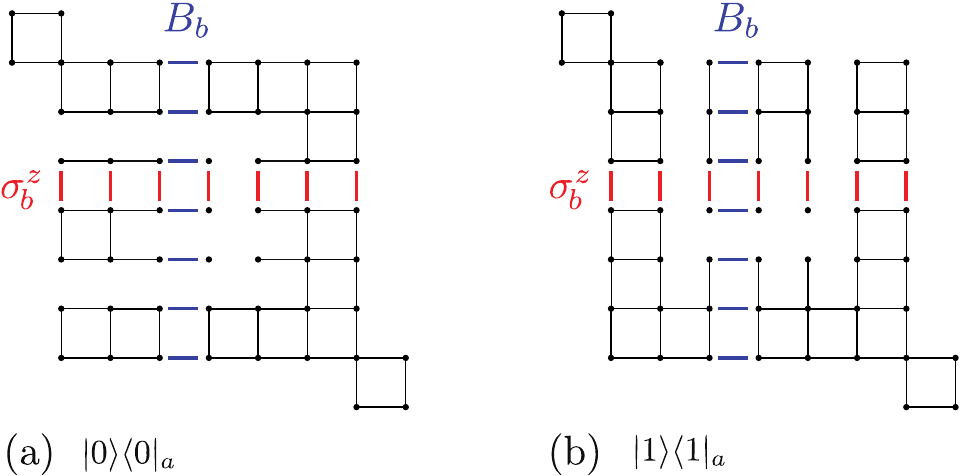}
	\caption{The terms  $h_{45}(B_{b}) \otimes \iii$ and $h_{45}(\sigma^z_b)$ from $H_{\text{horizontal}}$ \eqref{eq:H_horiz} and $H_{\text{vertical}}$ \eqref{eq:H_vert} couple the states $\ket{\mathcal{R}_0}, \ket{\mathcal{G}_0},  \ket{\mathcal{Y}_0}, \ket{\mathcal{B}_0}$ or $\ket{\mathcal{R}_1}, \ket{\mathcal{G}_1},  \ket{\mathcal{Y}_1}, \ket{\mathcal{B}_1}$ depending on the state of the control qubit. a) When the control qubit is $\ket{0}$, the state $\ket{0}\ket{y}\ket{\mathcal{R}_0}$ is coupled to $\ket{0}B \ket{y}\ket{\mathcal{G}_0}$, which is coupled to  $\ket{0}\sigma^z B \ket{y}\ket{\mathcal{B}_0}$ which is coupled to $\ket{0}B^{\dagger}\sigma^z B \ket{y}\ket{\mathcal{Y}_0}$.
	b) When the control qubit is $\ket{1}$, the state $\ket{1}\ket{y}\ket{\mathcal{R}_1}$ is coupled  $\ket{1}\ket{y}\sigma^z\ket{\mathcal{G}_1}$ which is coupled to  $\ket{1}B \sigma^z \ket{y}\ket{\mathcal{B}_1}$ which is coupled to $\ket{1}\ket{y}\sigma^z B \sigma^z\ket{\mathcal{Y}_1}$.
}
\label{fig:extrafigure2}
\end{figure}


\section{Proof of Lemma \ref{lem:complicated_gap_bound} \label{sec:Proof-of-Theorem}}

In this Section we prove Lemma \ref{lem:complicated_gap_bound}. We
build up the Hamiltonian 
\[
	\iii \otimes  H_{\text{diag}}^{(M)} + \sum_{j = 0}^{M-1}H_{V}^{j} + \sum_{j = 0}^{M-1}H_{U}^{j}
\]
a few terms at a time, lower-bounding its smallest nonzero eigenvalue
at every step. The main technical tools we use are three Corollaries
of Kitaev's geometric Lemma \cite{Kitaev:2002:CQC:863284} which we
prove below. In this Section we will make use of, and refer repeatedly
to, the details of the clock construction from Section \ref{sec:Details-of-the}.


\subsection{Notation and spectral bounds}

Let us start by establishing some notation. Recall that, if $H$ is
positive semidefinite and not identically zero, we write 
\[
\gamma(H)
\]
 for its smallest nonzero eigenvalue. If $H$ is a Hamitonian acting
on a Hilbert space $\mathcal{H}$ and $S$ is a subspace of $\mathcal{H}$,
write 
\[
H|_{S}
\]
for the restriction of $H$ to states in $S$ (it can be written as
a $\text{dim}(S)\times\text{dim}(S)$ matrix). In this paper all Hilbert
spaces and operators are finite-dimensional.

The following Corollary is a variant of Lemma \ref{lem:geom_lemma_rephrased}
which (unlike the Lemma) can be used when $H$ has a nonempty nullspace. 
\begin{cor}
\label{lem:Kitaev's-geometric-lemma}Let $H = H_{A} + H_{B}$ where $H_{A}\geq0$
and $H_{B}\geq0$ each have nonempty nullspaces. Let $\Gamma$ be
the subspace of states in the nullspace of $H_{A}$ that are orthogonal
to the nullspace of $H$, and let $\Pi_{B}$ be the projector onto
the nullspace of $H_{B}$. Then 
\[
	\gamma\left(H\right)\geq\min\{\gamma(H_{A}),\gamma(H_{B})\}\cdot\left(1-\sqrt{d}\right)
\]
where 
\[
	d = \left\Vert \Pi_{B}|_{\Gamma}\right\Vert  = \max_{|v\rangle\in\Gamma\,:\,\langle v|v\rangle = 1}\langle v|\Pi_{B}|v\rangle.
\]
\end{cor}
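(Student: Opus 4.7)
The plan is to reduce the statement to Kitaev's original geometric Lemma (Lemma~\ref{lem:KitLemma}) by restricting $H$, $H_A$ and $H_B$ to the orthogonal complement of $\mathcal{N} := \mathrm{null}(H)$. The starting observation is that, because $H_A$ and $H_B$ are positive semidefinite and sum to $H$, any $v$ with $Hv = 0$ satisfies $\langle v|H_A|v\rangle = \langle v|H_B|v\rangle = 0$, so $\mathcal{N} \subseteq \mathrm{null}(H_A) \cap \mathrm{null}(H_B)$. Consequently each of $H_A$, $H_B$ and $\Pi_B$ acts as $0$, $0$, and the identity respectively on $\mathcal{N}$, and by hermiticity all three preserve the invariant subspace $\mathcal{N}^{\perp}$.

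Let $\tilde{H}_A, \tilde{H}_B, \tilde{H}$ denote the restrictions to $\mathcal{N}^{\perp}$. Then $\tilde{H} = \tilde{H}_A + \tilde{H}_B$ has no zero eigenvalue by construction, so Lemma~\ref{lem:KitLemma} applies directly to this decomposition. The next step is to identify its three ingredients on the restricted problem: (i) $\mathrm{null}(\tilde{H}_A) = \mathrm{null}(H_A) \cap \mathcal{N}^{\perp} = \Gamma$, straight from the definition of $\Gamma$; (ii) the projector onto $\mathrm{null}(\tilde{H}_B)$ equals $\Pi_B|_{\mathcal{N}^{\perp}}$, so that the quantity ``$c$'' appearing in Lemma~\ref{lem:KitLemma} applied to the restricted problem coincides with $d$; (iii) since $\mathcal{N}^{\perp}$ is an invariant subspace and $\mathcal{N}$ lies inside both nullspaces, the nonzero spectra of $\tilde{H}_A$ and $\tilde{H}_B$ are subsets of those of $H_A$ and $H_B$, so $\gamma(\tilde{H}_A) \geq \gamma(H_A)$ and $\gamma(\tilde{H}_B) \geq \gamma(H_B)$.

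To finish, note that $\gamma(H) = \gamma(\tilde{H})$ because all nonzero eigenvalues of $H$ live on $\mathcal{N}^{\perp}$. Feeding the inequalities in (iii) and the identification $c = d$ into the conclusion of Lemma~\ref{lem:KitLemma} yields exactly the claimed bound $\gamma(H) \geq \min\{\gamma(H_A),\gamma(H_B)\}(1-\sqrt{d})$.

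The main obstacle is really the bookkeeping in step (ii): one must verify that $\Pi_B$ preserves $\mathcal{N}^{\perp}$ so that its restriction is a bona fide projector onto $\mathrm{null}(\tilde{H}_B)$. This is a one-line check using $\Pi_B w = w$ for $w \in \mathcal{N} \subseteq \mathrm{null}(H_B)$: for $v \in \mathcal{N}^{\perp}$ and $w \in \mathcal{N}$, $\langle \Pi_B v | w\rangle = \langle v | \Pi_B w\rangle = \langle v | w\rangle = 0$. This is precisely where the inclusion $\mathcal{N} \subseteq \mathrm{null}(H_B)$ is essential; once it is in hand, the rest of the argument is a direct invocation of Lemma~\ref{lem:KitLemma}.
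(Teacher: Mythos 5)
Your proof is correct. You and the paper both reduce to Kitaev's geometric Lemma (Lemma~\ref{lem:KitLemma}), but by formally different routes: the paper modifies $H_A$ to $H_A' = H_A + \gamma(H_A)P$, where $P$ projects onto $\mathcal{N}=\mathrm{null}(H)$, and applies the Lemma to $H' = H_A' + H_B$; you instead restrict all three operators to the invariant subspace $\mathcal{N}^\perp$ and apply the Lemma to the restricted triple. Both exploit the same inclusion $\mathcal{N} \subseteq \mathrm{null}(H_A) \cap \mathrm{null}(H_B)$, and both arrive at the effective nullspace $\Gamma = \mathrm{null}(H_A) \cap \mathcal{N}^\perp$ so that the Lemma's parameter $c$ coincides with $d$. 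Your route is slightly cleaner at the final step: $\gamma(\tilde H)=\gamma(H)$ is immediate because the spectrum of $\tilde H$ is exactly the nonzero spectrum of $H$, whereas the paper applies the Lemma to $H'$ (which bounds $\gamma(H')$, not $\gamma(H)$ directly) and leaves implicit the case analysis showing $\min\{\gamma(H),\gamma(H_A)\} \geq \min\{\gamma(H_A),\gamma(H_B)\}(1-\sqrt{c})$ implies the stated bound on $\gamma(H)$. The one thing worth making explicit in your write-up is that $\gamma(\tilde H_A) = \gamma(H_A)$ and $\gamma(\tilde H_B) = \gamma(H_B)$ are actually equalities (the nonzero spectra coincide since $H_A = 0 \oplus \tilde H_A$ and similarly for $H_B$), not merely inequalities, though the inequality suffices.
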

\begin{proof}
Let $P$ be the projector onto the nullspace of $H$ (if it is empty,
set $P = 0$). Define
\begin{align*}
	H_{A}' &  = H_{A} + \gamma(H_{A})\cdot P.
\end{align*}
Note that every state in the nullspace of $H$ is also in the nullspace
of $H_{A}$ and so $[P,H_{A}] = 0$. This implies that $\gamma(H_{A}') = \gamma(H_{A}).$
Noting that $H' = H_{A}' + H_{B}$ has no zero eigenvalues, we apply Lemma
\ref{lem:geom_lemma_rephrased} to get 
\[
	\gamma(H)\geq\min\{\gamma(H_{A}),\gamma(H_{B})\}\cdot\left(1-\sqrt{c}\right),
\]
 where 
\[
	c = \left\Vert \Pi_{B}|_{S'}\right\Vert 
\]
 and $S'$ is the groundspace of $H_{A}'.$ Noting that $S' = \Gamma$
(and hence $c = d$) completes the proof.
\end{proof}
We derive two additional Corollaries by upper bounding the parameter
$d$ in two different ways. Let $S$ be the nullspace of $H_{A}$,
and suppose that $H_{B}|_{S}$ is not the zero matrix. Note that 
\[
	\gamma(H_{B}|_{S}) = \min_{|v\rangle\in\Gamma\,:\,\langle v|v\rangle = 1}\langle v|H_{B}|v\rangle,
\]
 since the zero eigenspace of $H_{B}|_{S}$ is equal to the nullspace
of $H$. Accordingly, for any normalized state $|v\rangle\in\Gamma$,
\[
	\gamma(H_{B}|_{S})
	\leq \langle v|H_{B}|v\rangle
	\leq \langle v|\left(1-\Pi_{B}\right)|v\rangle\left\Vert H_{B}\right\Vert, 
\]
and so 
\[
	d \leq 1-\frac{\gamma(H_{B}|_{S})}{\left\Vert H_{B}\right\Vert }.
\]
Using this bound in Corollary \ref{lem:Kitaev's-geometric-lemma}
and using the fact that $1-\sqrt{1-x}\geq\frac{x}{2}$ for $x\in[0,1]$
gives the following Corollary.
\begin{cor}
\label{cor:corollary_bound}Let $H = H_{A} + H_{B}$ where $H_{A}\geq0$
and $H_{B}\geq0$ each have nonempty nullspaces. Write $S$ for the
nullspace of $H_{A}$ and suppose $H_{B}|_{S}$ is not the zero matrix.
Then 
\[
	\gamma\left(H\right)\geq\min\{\gamma(H_{A}),\gamma(H_{B})\}\cdot\frac{\gamma(H_{B}|_{S})}{2\left\Vert H_{B}\right\Vert }.
\]
\end{cor}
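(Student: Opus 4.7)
The plan is to derive this corollary from Corollary \ref{lem:Kitaev's-geometric-lemma}, whose conclusion $\gamma(H) \geq \min\{\gamma(H_A),\gamma(H_B)\}(1-\sqrt{d})$ with $d = \|\Pi_B|_\Gamma\|$ already has exactly the shape we want; what remains is to translate the assumed lower bound on $\gamma(H_B|_S)$ into an upper bound on $d$.

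First I would note that the nullspace of $H$ equals $S \cap \mathrm{null}(H_B)$, since $H_A, H_B \geq 0$ imply that a vector is annihilated by $H$ iff it is annihilated by each summand, and that this intersection is exactly the zero eigenspace of $H_B|_S$. Consequently the subspace $\Gamma$ of Corollary \ref{lem:Kitaev's-geometric-lemma}, the orthogonal complement of $\mathrm{null}(H)$ inside $S$, is precisely the direct sum of the strictly positive eigenspaces of $H_B|_S$. The variational principle applied to $H_B|_S$ restricted to $\Gamma$ then yields, for every unit vector $|v\rangle \in \Gamma$,
\[
\langle v|H_B|v\rangle = \langle v|H_B|_S|v\rangle \geq \gamma(H_B|_S).
\]

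Next I would pair this lower bound with the trivial operator inequality $H_B \leq \|H_B\|(1-\Pi_B)$, which is just the spectral theorem for $H_B$. For a unit vector $|v\rangle \in \Gamma$ this gives
\[
\gamma(H_B|_S) \leq \langle v|H_B|v\rangle \leq \|H_B\|\bigl(1-\langle v|\Pi_B|v\rangle\bigr),
\]
so $\langle v|\Pi_B|v\rangle \leq 1 - \gamma(H_B|_S)/\|H_B\|$. Maximizing the left-hand side over normalized $|v\rangle \in \Gamma$ produces $d \leq 1 - \gamma(H_B|_S)/\|H_B\|$.

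Finally, I would substitute this bound into Corollary \ref{lem:Kitaev's-geometric-lemma} and apply the elementary inequality $1-\sqrt{1-x} \geq x/2$, valid for $x \in [0,1]$; the value $x = \gamma(H_B|_S)/\|H_B\|$ indeed lies in $[0,1]$ because $0 \leq H_B|_S \leq \|H_B\|\cdot \iii$. This yields the claimed bound
\[
\gamma(H) \geq \min\{\gamma(H_A),\gamma(H_B)\} \cdot \frac{\gamma(H_B|_S)}{2\|H_B\|}.
\]
No real obstacle is anticipated; the only step that merits a careful sanity check is the identification of $\Gamma$ with the strictly positive eigenspaces of $H_B|_S$, which relies crucially on the positive semidefiniteness of both $H_A$ and $H_B$ so that $\mathrm{null}(H) = \mathrm{null}(H_A) \cap \mathrm{null}(H_B)$.
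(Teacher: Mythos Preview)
Your proposal is correct and follows essentially the same route as the paper: identify $\Gamma$ with the positive eigenspaces of $H_B|_S$, use $H_B \leq \|H_B\|(1-\Pi_B)$ to turn the lower bound $\langle v|H_B|v\rangle \geq \gamma(H_B|_S)$ into the upper bound $d \leq 1 - \gamma(H_B|_S)/\|H_B\|$, and then apply $1-\sqrt{1-x}\geq x/2$ inside Corollary~\ref{lem:Kitaev's-geometric-lemma}.
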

We now derive a different bound on $d$. For any normalized state
$|v\rangle\in\Gamma$, 
\begin{eqnarray*}
	\gamma(H_{B}|_{S}) 
	\leq \langle v|\left(1-\Pi_{B}\right)H_{B}|v\rangle
	\leq \left(
			\langle v|\left(1-\Pi_{B}\right)|v\rangle
			\langle v|H_{B}^{2}|v\rangle
		\right)^{\frac{1}{2}}
\end{eqnarray*}
using the Cauchy-Schwarz inequality. Squaring both sides and rearranging,
we get
\[
	\langle v|\Pi_{B}|v\rangle
	\leq 1-\frac{\gamma(H_{B}|_{S})^{2}}{\langle v|H_{B}^{2}|v\rangle}
	\leq 1-\frac{\gamma(H_{B}|_{S})^{2}}{F},
\]
 where 
\begin{equation}
	F = \max_{|v\rangle\in S\,:\,\langle v|v\rangle = 1}\langle v|H_{B}^{2}|v\rangle.\label{eq:F_defn}
\end{equation}
 Since this holds for all normalized $|v\rangle\in\Gamma$, we get
\[
	d\leq1-\frac{\gamma(H_{B}|_{S})^{2}}{F}.
\]
Here we need $F\neq0$, which is true whenever $H_{B}|_{S}$ is not
the zero matrix. Using this bound in Corollary \ref{lem:Kitaev's-geometric-lemma}
gives
\begin{cor}
\label{cor:F_corollary}Let $H = H_{A} + H_{B}$ where $H_{A}\geq0$ and
$H_{B}\geq0$ each have nonempty nullspaces. Write $S$ for the nullspace
of $H_{A}$ and suppose $H_{B}|_{S}$ is not the zero matrix. Then
\[
	\gamma\left(H\right)\geq\min\{\gamma(H_{A}),\gamma(H_{B})\}\cdot\frac{\gamma(H_{B}|_{S})^{2}}{2F},
\]
with $F$ defined in \eqref{eq:F_defn}.
\end{cor}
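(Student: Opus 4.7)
The plan is to decompose $H_A$ into three layers whose groundspaces are nested, and to apply Kitaev's geometric lemma (in the form of Corollary \ref{cor:corollary_bound}) iteratively from the innermost layer outward. Concretely, I would split $H_A = H^{(0)} + H^{(1)} + H^{(2)}$ where $H^{(0)} = \iii \otimes H_{\text{diag}}^{(M)}$, $H^{(1)} = \sum_{j=0}^{M-1} H_V^j$, and $H^{(2)} = \sum_{j=0}^{M-1} H_U^j$. The nested structure is natural because the nullspace of $H^{(0)} + H^{(1)}$ was already characterized in Section \ref{sub:The-zero-energy}, and $H^{(2)}$ restricted to it is exactly $\tfrac{1}{4} L$ for the Laplacian $L$ of the path graph of Figure \ref{fig:laplaciangraph}, whose smallest nonzero eigenvalue is $\Theta(1/M^2)$.

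First I would establish a lower bound on $\gamma(H^{(0)})$. Since $H^{(0)}$ acts as the identity on the computational register, this reduces to a gap problem on the two-clock Hilbert space, where $H_{\text{diag}}^{(M)}$ is built from $H_{\text{clock}}^{(9M+3)}$ on each register together with the synchronization sums $S^{(k,k+2)}$. Bounding this gap itself will require one or two further nested applications of Corollary \ref{cor:corollary_bound}: first to handle the decomposition \eqref{eq:H_N_clock} of each clock Hamiltonian into $H_{\text{triplet}}^{(2N)}$, $H_{\text{unary}}^{(N)}$, $H_{\text{sync}}$, and the $g_{2i-1,2i}(\sigma_i^x)$ terms; and then to add the $S^{(k,k+2)}$ coupling terms. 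At each step, the nullspace of the operator built up so far is already known explicitly (spanned by the clock states $|C_i\rangle$ or by the graphical basis of Figure \ref{fig:groundstates}), so the restriction of the next term has a cleanly computable spectrum via \eqref{eq:hi_iplus1}, \eqref{eq:1localproj}, and \eqref{eq:1localproj2}.

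Second, I would add $H^{(1)}$ via one more application of Corollary \ref{cor:corollary_bound}. Restricted to the nullspace of $H^{(0)}$, the operator $H^{(1)}$ is block-diagonal over the gate index $j$, because each $H_V^j$ is supported on the disjoint clock slab $\{9j+6,\dots,9j+10\}$ corresponding to one gate-$j$ blob of Figure \ref{fig:colorset_defn}. Each block is (up to relabeling) the $100 \times 100$ matrix analyzed via Lemma \ref{lem:H_V}, whose four-dimensional nullspace is spanned by the $|\psi_V^{xy}\rangle$ and whose smallest nonzero eigenvalue is an absolute constant. Combined with the norm estimate $\|H^{(1)}\| = O(M)$, this step contributes a factor of $\Omega(1/M)$ to the gap.

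Third, I would add $H^{(2)}$ using Corollary \ref{cor:corollary_bound} one final time. The restriction of $H^{(2)}$ to the nullspace of $H^{(0)} + H^{(1)}$ is identified in Section \ref{sub:The-zero-energy} as $\tfrac{1}{4} L$ acting within each $(n+n_a)$-qubit block $|\phi\rangle \in \Lambda$, where $L$ is the path-graph Laplacian of Figure \ref{fig:laplaciangraph}; hence $\gamma(H^{(2)}|_{S^{(1)}}) = \Theta(1/M^2)$, and $\|H^{(2)}\| = O(M)$, contributing a further factor of $\Theta(1/M^3)$. Combining the losses against the gap from step one gives $\gamma(H_A) = \Omega(1/M^5)$. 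The main obstacle I expect is step one, the lower bound on $\gamma(H^{(0)})$: because the clock construction itself is layered, I need to track the nullspace through each nested reduction and verify that at every stage the next operator is either block-diagonal over $j$ (so that the gap on the current nullspace is $\Omega(1)$) or has its restriction computable from the explicit formulas of Section \ref{sub:Summary-of-the}, so that enough inverse-polynomial slack remains for the final bound.
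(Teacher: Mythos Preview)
You have misidentified the statement. Corollary~\ref{cor:F_corollary} is an abstract spectral inequality for an arbitrary decomposition $H=H_A+H_B$ with $H_A,H_B\ge 0$; it says nothing specifically about the circuit Hamiltonian $H_X$ or about $H_{\text{diag}}^{(M)}$. What you have outlined is instead a proof of Lemma~\ref{lem:complicated_gap_bound}, a different and far more involved result --- one which in fact \emph{uses} Corollary~\ref{cor:F_corollary} as a tool (see the proofs of Lemmas~\ref{lem:basiclemma} and~\ref{lem:clock_lemma}, where the paper exploits $H_B^2|_S=H_B|_S$ so that $F=\|H_B|_S\|$ rather than $\|H_B\|$, which is precisely the point of Corollary~\ref{cor:F_corollary} over Corollary~\ref{cor:corollary_bound}).

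The paper's proof of Corollary~\ref{cor:F_corollary} is short and purely general: for any normalized $|v\rangle\in\Gamma$ one has $\gamma(H_B|_S)\le \langle v|H_B|v\rangle=\langle v|(1-\Pi_B)H_B|v\rangle$, and Cauchy--Schwarz gives $\langle v|(1-\Pi_B)H_B|v\rangle\le \big(\langle v|(1-\Pi_B)|v\rangle\,\langle v|H_B^2|v\rangle\big)^{1/2}$. Squaring and rearranging yields $\langle v|\Pi_B|v\rangle\le 1-\gamma(H_B|_S)^2/F$, so $d\le 1-\gamma(H_B|_S)^2/F$; plugging into Corollary~\ref{lem:Kitaev's-geometric-lemma} and using $1-\sqrt{1-x}\ge x/2$ finishes it. No clock structure, no nested decompositions, and no graph Laplacians are involved.
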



\subsection{Applying the spectral bounds}

We begin by proving a lower bound on the smallest nonzero eigenvalue
of the Hamiltonian $H_{\text{triplet}}^{(2N)}$ from \eqref{eq:H_triplet}.
\begin{lem}
\label{lem:basiclemma}
\[
	\gamma\left(H_{\textnormal{triplet}}^{(2N)}\right)\geq\frac{1}{48}.
\]
\end{lem}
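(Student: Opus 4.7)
The plan is to apply Corollary~\ref{cor:F_corollary} to the decomposition $H_{\text{triplet}}^{(2N)} = H_A + H_B$ with $H_A = H_1+H_2$ and $H_B = H_3$. Writing $S$ for the nullspace of $H_A$, which by \eqref{eq:Pi_12} is spanned by the orthonormal basis $\{|\widehat{i}\rangle\}_{i=1}^{4N-2}\cup\{|\tilde k\rangle\}_{k=1}^{2N-2}$, I need to supply the four inputs to the Corollary: $\gamma(H_A)$, $\gamma(H_B)$, $\gamma(H_B|_S)$, and $F = \max_{v\in S,\langle v|v\rangle=1}\langle v|H_B^2|v\rangle$. The first three are immediate from the work already done in Section~\ref{sec:Details-of-the}. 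All terms of $H_A$ mutually commute: terms inside $H_1$ act on disjoint qubit triplets (and within a triplet are projectors onto orthogonal subspaces), terms inside $H_2$ are diagonal in the computational basis, and the only non-diagonal pieces of $H_1$, the projectors $|\theta^-\rangle\langle\theta^-|$, have support on $\mathrm{span}(|100\rangle,|011\rangle)$, on which every $H_2$ term vanishes. Hence $H_A$ is a sum of mutually commuting projectors with nonnegative integer spectrum, so $\gamma(H_A)\ge 1$, and the same reasoning (disjoint supports $\{3i+3,3i+4,3i+5\}$) gives $\gamma(H_B)\ge 1$. From \eqref{eq:H3_insubspace}, $\Pi_{12}H_3\Pi_{12}$ is an orthogonal direct sum of rank-one blocks with eigenvalues $\tfrac23$ (on $\tfrac{1}{\sqrt 2}(|\widehat{2i-2}\rangle-|\widehat{2i-1}\rangle)$) and $\tfrac16$ (on each $|\tilde k\rangle$), so $\gamma(H_B|_S)=\tfrac16$.

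The main technical step is to bound $F$. The crucial observation is that for $A_{i'}\doteq|\gamma\rangle\langle\gamma|_{3i'+3,3i'+4,3i'+5}$, every nonzero output contains a $|\gamma\rangle$ factor at the triplet indexed by $i'$, and because these triplets are disjoint for different $i'$ the ranges of the $A_{i'}$ are mutually orthogonal; therefore $\|H_B v\|^2 = \sum_{i'=0}^{2N-3}\|A_{i'}v\|^2$ for any $v$. For fixed $i'$, I would do a short case analysis over the basis of $S$ to show that only $|\widehat{2i'+2}\rangle$, $|\widehat{2i'+3}\rangle$, and $|\tilde{i'+1}\rangle$ have nonzero image under $A_{i'}$: every other basis vector places chunks $i'+1$ and $i'+2$ into a configuration (both $|\theta^+\rangle$, both $|000\rangle$, or an active triplet misaligned with $|\gamma\rangle$) whose overlap with $|\gamma\rangle$ on the qubits $3i'+3,3i'+4,3i'+5$ vanishes. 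Direct expansion using $|\theta^+\rangle=\tfrac{1}{\sqrt 2}(|100\rangle+|011\rangle)$ and $|\gamma\rangle=\tfrac{1}{\sqrt 3}|100\rangle-\sqrt{2/3}\,|011\rangle$ then yields
\begin{equation*}
	A_{i'}|\widehat{2i'+2}\rangle = \tfrac{1}{\sqrt 3}|\Phi_{i'}\rangle,\qquad
	A_{i'}|\widehat{2i'+3}\rangle = -\tfrac{1}{\sqrt 3}|\Phi_{i'}\rangle,\qquad
	A_{i'}|\tilde{i'+1}\rangle = \tfrac{1}{\sqrt 6}|\Psi_{i'}\rangle,
\end{equation*}
where $|\Phi_{i'}\rangle$ and $|\Psi_{i'}\rangle$ are unit vectors that disagree in the computational-basis values of qubits $3i'+1,3i'+2$ (namely $|10\rangle$ for $\Phi_{i'}$ versus $|01\rangle$ for $\Psi_{i'}$) and are therefore orthogonal.

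Writing a general $v\in S$ with $\langle v|v\rangle=1$ as $v=\sum_i a_i|\widehat{i}\rangle+\sum_k b_k|\tilde k\rangle$, the orthogonality of $|\Phi_{i'}\rangle$ and $|\Psi_{i'}\rangle$ gives $\|A_{i'}v\|^2 = \tfrac13|a_{2i'+2}-a_{2i'+3}|^2+\tfrac16|b_{i'+1}|^2$, and summing over $i'$ (re-indexing $\ell=i'+1$) yields
\begin{equation*}
	\|H_B v\|^2 \;=\; \frac13\sum_{\ell=1}^{2N-2}|a_{2\ell}-a_{2\ell+1}|^2 \;+\; \frac16\sum_{k=1}^{2N-2}|b_k|^2 \;\le\; \frac23\sum_i|a_i|^2+\frac16\sum_k|b_k|^2 \;\le\; \frac23,
\end{equation*}
using $|a-b|^2\le 2(|a|^2+|b|^2)$ and the fact that the pairs $(2\ell,2\ell+1)$ cover the indices $\{2,\dots,4N-3\}$. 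Hence $F\le\tfrac23$, and Corollary~\ref{cor:F_corollary} delivers
\begin{equation*}
	\gamma(H_{\text{triplet}}^{(2N)}) \;\ge\; \min\{\gamma(H_A),\gamma(H_B)\}\cdot\frac{\gamma(H_B|_S)^2}{2F} \;\ge\; 1\cdot\frac{(1/6)^2}{2\cdot(2/3)} \;=\; \frac{1}{48}.
\end{equation*}
The main obstacle is the case analysis identifying the three contributing basis vectors and computing the explicit coefficients and residual states; it is elementary but requires careful bookkeeping of which qubit lies in which chunk and which triplet boundary $A_{i'}$ straddles.
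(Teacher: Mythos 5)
Your overall strategy — applying Corollary \ref{cor:F_corollary} with $H_A=H_1+H_2$, $H_B=H_3$, reading $\gamma(H_B|_S)=1/6$ off of \eqref{eq:H3_insubspace}, and bounding $F\le 2/3$ — is exactly the paper's, and the final arithmetic is right. Your computation of $F$ by decomposing $\lVert H_3 v\rVert^2$ over the operators $A_{i'}$ is a legitimate variant of the paper's observation that $H_3^2|_S = H_3|_S$ (the paper notes that $A_{i'}A_{j'}$ annihilates $S$ for $i'\neq j'$ and verifies this on the basis states $|\hat\imath\rangle,|\tilde k\rangle$; your claim that ``the ranges of the $A_{i'}$ are mutually orthogonal for any $v$'' is false as stated — take $v$ a tensor product of $|\gamma\rangle$'s — but the restriction to $v\in S$ is what you actually use, and there it holds).

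The genuine gap is the first step, $\gamma(H_A)\ge 1$. You argue that every term of $H_2$ annihilates $\mathrm{span}(|100\rangle,|011\rangle)$ on the triplet where a $|\theta^-\rangle\langle\theta^-|$ sits, so all terms of $H_1+H_2$ mutually commute. That is not true. A $|\theta^-\rangle\langle\theta^-|$ at triplet $i+1$ (qubits $3i+4,3i+5,3i+6$) sits under the right-hand factor of the $H_2$ summand at index $i$, namely $|1\rangle\langle 1|_{3i+4}+|1\rangle\langle 1|_{3i+5}+|1\rangle\langle 1|_{3i+6}$. This number operator does \emph{not} annihilate $\mathrm{span}(|100\rangle,|011\rangle)$ (it sends $|100\rangle\mapsto |100\rangle$ and $|011\rangle\mapsto 2|011\rangle$), and it does not commute with $|\theta^-\rangle\langle\theta^-|$: with $N=|1\rangle\langle 1|_1+|1\rangle\langle 1|_2+|1\rangle\langle 1|_3$ and $\Theta=|\theta^-\rangle\langle\theta^-|$, one has $\langle 100|[N,\Theta]|011\rangle=\tfrac12\neq 0$. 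So $H_1+H_2$ is \emph{not} a sum of mutually commuting operators, and the conclusion ``integer spectrum'' does not follow from commutativity alone.

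The paper closes this gap with a more careful argument: split $H_1+H_2 = H_a + H_b$ where $H_a$ is the purely diagonal part and $H_b=\sum_i|\theta^-\rangle\langle\theta^-|$; observe $\gamma(H_a)\ge 1$; then show (by explicitly exhibiting a basis of the groundspace of $H_a$, obtained by replacing any subset of $|\theta^+\rangle$ blocks by $|\theta^-\rangle$ in the states $|\hat\imath\rangle,|\tilde k\rangle$) that the projector $\Pi_a$ onto $\ker H_a$ commutes with $H_b$. Since $\Pi_a$ commutes with $H_a+H_b$, eigenstates can be chosen inside $\mathrm{ran}\,\Pi_a$ (where $H_b$ has integer spectrum on the constructed basis, so the eigenvalue is a nonnegative integer) or in $\ker\Pi_a$ (where the energy is at least $\gamma(H_a)\ge 1$ because $H_b\ge 0$). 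You need some argument of this form; mere commutativity of the summands is not available.
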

\begin{proof}
Recall that 
\[
H_{\text{triplet}}^{(2N)} = H_{1} + H_{2} + H_{3}
\]
with $H_{1}$, $H_{2}$ and $H_{3}$ given in \eqref{eq:H1},
\eqref{eq:H2}, \eqref{eq:H3}. We first show that $\gamma\left(H_{1} + H_{2}\right)$
is lower bounded by a constant. Let

\begin{align*}
	H_{a} &  = \left(H_{1} + H_{2}\right)
			- \sum_{i = 0}^{2N-2}|\theta^-\rangle\langle\theta^-|_{3i + 1,3i + 2,3i + 3}, \qquad 	
	H_{b} = \sum_{i = 0}^{2N-2}|\theta^-\rangle\langle\theta^-|_{3i + 1,3i + 2,3i + 3}
\end{align*}
and note (by looking at equations \eqref{eq:H1}, \eqref{eq:H2}) that
$H_{a}$ is diagonal in the computational basis and is a sum of projectors.
Because of this, its eigenvalues are nonnegative integers and so $\gamma\left(H_{a}\right)\geq1.$
We now construct a convenient basis for the ground state subspace
of $H_{a}.$ Recall that the groundspace of $H_{1} + H_{2}$ is spanned
by states $\{|\widehat{i}\rangle\}$ and $\{|\tilde{k}\rangle\}$
from equations \eqref{eq:1_hat}-\eqref{eq:2R-2hat} and \eqref{eq:ktilde}.
We can associate a set of ground states of $H_{a}$ with each of these
states. For example, look at 
\[
	|\widehat{3}\rangle = |\theta^+ \rangle|110\rangle|000\rangle\dots|000\rangle.
\]
 Note that $|\widehat{3}\rangle$ is a zero energy ground state of
$H_{a}$, but so is 
\[
	|\theta^-\rangle|110\rangle|000\rangle\dots|000\rangle.
\]
Similarly, for each of the states $|\widehat{i}\rangle$ for $i\in\{1,\dots,4N-2\}$,
we can construct additional zero energy ground states of $H_{a}$
by replacing any subset of the $|\theta^+ \rangle$ states which appear
in the tensor product by $|\theta^-\rangle$ states. Likewise, we
obtain $2^{k}$ groundstates of $H_{a}$ from each state $|\tilde{k}\rangle$
for $k\in\{1,\dots,2N-2\}$ by making such replacements. Taking all
of the states constructed in this way, we get a spanning set of ground
states of $H_{a}.$ This basis for the groundspace of $H_{a}$ is
convenient because every basis state is also an eigenstate of $H_{b}$
with an integer eigenvalue. This shows that the projector $\Pi_{a}$
onto the groundspace of $H_{a}$ commutes with $H_{b}$ and hence
also with $H_{a} + H_{b}.$ Any eigenstate $|w\rangle$ of $H_{a} + H_{b}$
satisfying $\Pi_{a}|w\rangle = |w\rangle$ has an integer eigenvalue
as discussed above. On the other hand, an eigenstate $|w'\rangle$
with $\Pi_{a}|w'\rangle = 0$ has energy at least $\gamma(H_{a}) = 1$
since $H_{b}\geq0.$ 
Therefore,
\begin{equation}
	\gamma(H_{1} + H_{2}) = \gamma(H_{a} + H_{b})\geq1.\label{eq:bound_H1H2}
\end{equation}

We now use Corollary \ref{cor:F_corollary} with $H_{A} = H_{1} + H_{2}$
and $H_{B} = H_{3}$ to bound the smallest nonzero eigenvalue of $H_{\text{triplet}}^{(2N)}$.
First, note that $H_{3}$ is a sum of commuting projectors and therefore
$\gamma(H_{3}) = 1$. Looking at \eqref{eq:H3_insubspace}, and
writing $S$ for the groundspace of $H_{1} + H_{2}$, we see that $\gamma\left(H_{3}|_{S}\right) = \frac{1}{6}.$
Let us now look at the second power of $H_{3}$:

\begin{eqnarray*}
	H_{3}^{2} &  =  & 
		H_{3} + \sum_{i\neq j}|\gamma\rangle\langle\gamma|_{3i + 3,3i + 4,3i + 5}
		\otimes |\gamma\rangle\langle\gamma|_{3j + 3,3j + 4,3j + 5},
\end{eqnarray*}
noting that every state in the groundspace of $H_{1} + H_{2}$ is
annihilated by the second term in the above expression (one can easily
verify this fact for each of the states $|\hat{i}\rangle$ and $|\tilde{k}\rangle$
defined in equations \eqref{eq:1_hat}-\eqref{eq:2R-2hat} and \eqref{eq:ktilde}).
This implies that for all $|v\rangle\in S$, 
\[
	\langle v|H_{3}^{2}|v\rangle = \langle v|H_{3}|v\rangle,
\]
 and hence 
\[
	F = \max_{|v\rangle\in S}\langle v|H_{3}^{2}|v\rangle
	 = \max_{|v\rangle\in S}\langle v|H_{3}|v\rangle
	 = \left\Vert H_{B}|_{S}\right\Vert 
	 = \frac{2}{3},
\]
 where we used \eqref{eq:H3_insubspace} to compute $\left\Vert H_{B}|_{S}\right\Vert $.
Now applying Corollary \ref{cor:F_corollary} and using \eqref{eq:bound_H1H2}
we get 
\begin{align}
	\gamma\left(H_{\text{triplet}}^{(2N)}\right) & 
	\geq \frac{1}{2}\cdot\frac{1}{36}\cdot\frac{3}{2}\cdot1
	 = \frac{1}{48}.
	 \label{eq:triplet_clock_gap-1}
\end{align}
\end{proof}
Next, we prove a lower bound on the smallest nonzero eigenvalue of
$H_{\text{clock}}^{(N)}$ from \eqref{eq:H_N_clock}.
\begin{lem}
\label{lem:clock_lemma}
\[
	\gamma(H_{\text{\textnormal{clock}}}^{(N)})\geq\frac{1}{2048}.
\]
\end{lem}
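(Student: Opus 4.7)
The plan is to decompose
\[
H_{\text{clock}}^{(N)} \;=\; \underbrace{H_{\text{triplet}}^{(2N)}\otimes \iii \;+\; \iii\otimes H_{\text{unary}}^{(N)} \;+\; H_{\text{sync}}}_{H_A} \;+\; \underbrace{\sum_{i=1}^{N} g_{2i-1,2i}(\sigma_i^x)}_{H_B}
\]
and apply Corollary \ref{cor:F_corollary} twice. First I would lower bound $\gamma(H_A) \geq 1/384$ by splitting it further as $H_{A_1} = H_{\text{triplet}}^{(2N)}\otimes \iii + \iii\otimes H_{\text{unary}}^{(N)}$ (a commuting sum with $\gamma(H_{A_1}) \geq 1/48$ by Lemma \ref{lem:basiclemma}, nullspace $S_1 = \mathrm{span}\{|c_j\rangle|t\rangle_u\}$) plus $H_{A_2} = H_{\text{sync}}$ (a sum of computational-basis projectors with $\gamma(H_{A_2}) \geq 1$). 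Using \eqref{eq:action_of_qi} and the fact that each $P_j'$ in $H_{\text{sync}}$ preserves the unary subspace, the restriction $H_{\text{sync}}|_{S_1}$ is block-diagonal with blocks $\alpha_j P_j'$ acting on the unary subspace; its kernel is spanned by $|e_j\rangle = |c_j\rangle|\lfloor j/2\rfloor\rangle_u$ and a short inspection gives $\gamma(H_{\text{sync}}|_{S_1}) = 1/2$ and $\|H_{\text{sync}}|_{S_1}\| = 1$.

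Then I would invoke Corollary \ref{cor:F_corollary} with
\[
F_1 \;=\; \max_{\substack{|v\rangle\in S_1 \\ \langle v|v\rangle=1}} \langle v|H_{\text{sync}}^2|v\rangle
\]
and show $F_1 = 1$. This is the crucial step. A direct calculation using the explicit form \eqref{eq:c1}--\eqref{eq:cR} establishes that $q_j|c_i\rangle = \delta_{ij}|\mu_i\rangle$ for some vector $|\mu_i\rangle$ satisfying $\langle c_{i'}|\mu_i\rangle = \alpha_i\delta_{i',i}$ and $q_k|\mu_i\rangle = \delta_{ki}|\mu_i\rangle$. Combined with the elementary observation that on unary states the two bit-projectors in each $P_j'$ can never both be violated (so $(P_j')^2 = P_j'$ when restricted to the unary subspace), these two facts collapse $\langle v|H_{\text{sync}}^2|v\rangle$ to $\langle v|H_{\text{sync}}|v\rangle$ on $S_1$, giving $F_1 = \|H_{\text{sync}}|_{S_1}\| = 1$. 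Corollary \ref{cor:F_corollary} then yields $\gamma(H_A) \geq (1/48)\cdot(1/2)^2/(2\cdot 1) = 1/384$.

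For the outer step, the terms $g^{(i)} := g_{2i-1,2i}(\sigma_i^x)$ act on pairwise disjoint qubits and are projectors, so $H_B$ is a sum of commuting projectors with $\gamma(H_B) \geq 1$. By \eqref{eq:action_of_g_x} the restriction $H_B|_{S_A}$, where $S_A = \mathrm{span}\{|e_j\rangle\}$, is a sum of orthogonal rank-$1$ operators with coefficients $2\beta_{2i-1}$, hence $\gamma(H_B|_{S_A}) = \min_i 2\beta_{2i-1} = 1/2$ and $\|H_B|_{S_A}\| = 2/3$. To bound $F_2 = \max_{|v\rangle\in S_A}\langle v|H_B^2|v\rangle$, I would argue as follows: since $g^{(i)}$ is a projector and $\langle e_j|g^{(i)}|e_j\rangle = 0$ for $j\notin\{2i-1,2i\}$, positivity forces $g^{(i)}|e_j\rangle = 0$ for such $j$. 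For $i \neq k$ the index sets $\{2i-1,2i\}$ and $\{2k-1,2k\}$ are disjoint, so combining this with $[g^{(i)}, g^{(k)}] = 0$ gives $g^{(i)}g^{(k)}|e_j\rangle = 0$ for all $j$. Consequently $\langle v|H_B^2|v\rangle = \langle v|H_B|v\rangle$ on $S_A$, and $F_2 = \|H_B|_{S_A}\| = 2/3$. A second application of Corollary \ref{cor:F_corollary} gives
\[
\gamma(H_{\text{clock}}^{(N)}) \;\geq\; \frac{1}{384}\cdot\frac{(1/2)^2}{2\cdot (2/3)} \;=\; \frac{1}{2048}.
\]

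\textbf{Main obstacle.} The technical heart of the argument is bounding the second moments $F_1$ and $F_2$ by constants independent of $N$. A naive operator-norm bound of the form $\|H|v\rangle\|^2 \leq \|H\|^2$ grows linearly (or worse) in $N$ for both $H_{\text{sync}}$ and $H_B$, which would only yield a polynomial-in-$N$ gap. The saving grace is that within the relevant nullspaces the off-diagonal cross terms $q_jq_k$ and $g^{(i)}g^{(k)}$ annihilate the subspace, so $\langle v|H^2|v\rangle$ collapses to $\langle v|H|v\rangle$. Verifying these cancellations — especially the action of $q_j$ on each $|c_i\rangle$ — requires a careful case analysis of the triplet structure from \eqref{eq:1_hat}--\eqref{eq:cR}, and this is where the bulk of the work will go.
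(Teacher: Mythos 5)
Your proof is correct and takes essentially the same route as the paper: the same two nested applications of Corollary \ref{cor:F_corollary}, with the same decompositions, the same intermediate bound $\gamma(H_A)\geq 1/384$, and the same constants $\gamma(H_B|_S)=1/2$, $F_1=1$, $F_2=2/3$ leading to $1/2048$. The only difference is expository — you derive the annihilation $g^{(i)}|e_j\rangle=0$ from positivity plus vanishing diagonal matrix elements and state the stronger claim $q_j|c_i\rangle=\delta_{ij}|\mu_i\rangle$, whereas the paper simply asserts the weaker cross-term vanishing $q_jq_k|c_m\rangle=0$ for $j\neq k$ directly from the explicit forms of the states.
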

\begin{proof}
We first bound 
\[
	\gamma\left(
			H_{\text{triplet}}^{(2N)} \otimes \iii  
			+ \iii \otimes  H_{\text{unary}}^{(N)} 
			+ H_{\text{sync}}
		\right),
\]
 where $H_{\text{unary}}^{(N)}$ and $H_{\text{sync}}$ are defined
in equations \eqref{eq:H_unary_N} and \eqref{eq:H_sync}. 
We will use Corollary \ref{cor:F_corollary} with 
\[
	H_{A} = H_{\text{triplet}}^{(2N)} \otimes \iii  + \iii \otimes  H_{\text{unary}}^{(N)}, \qquad 
	H_{B} = H_{\text{sync}}.
\]
 Note that the two terms in $H_{A}$ act on different registers and
therefore commute. The second term has nonnegative integer eigenvalues
so 
\[
	\gamma(H_{A})\geq\gamma(H_{\text{triplet}}^{(2N)} \otimes \iii )\geq\frac{1}{48}
\]
 by Lemma \ref{lem:basiclemma}. $H_{B}$ is a sum of commuting projectors
and therefore satisfies $\gamma(H_{B}) = 1.$ The zero energy groundspace
$S$ of $H_{A}$ is spanned by states
\begin{equation}
	|s_{ij}\rangle = |c_{j}\rangle \ket{\underbrace{1\cdots 1}_{i} \underbrace{0\cdots 0}_{N-i}} \label{eq:basis_for_s}
\end{equation}
 for $i = 0,\dots,N$ and $j = 1,\dots2N$, where $\{|c_{j}\rangle\}$ are
the ground states of $H_{\text{triplet}}^{(2N)}$ given in  \eqref{eq:c1}-\eqref{eq:cR}. Using equations \eqref{eq:action_of_qi}
and \eqref{eq:alpha_i}, we see that $H_{B}|_{S}$ is diagonal in the
basis \eqref{eq:basis_for_s}, with $\gamma(H_{B}|_{S}) = \frac{1}{2}$.
Now consider $H_{B}^{2}.$ Noting that 
\[
	q_{j}q_{k}|c_{m}\rangle = 0\quad\text{whenever}\; j\neq k,
\]
 and $q_{j}^{2} = q_{j}$, we get 
\[
	H_{B}^{2}|s_{ij}\rangle
	 = \left(q_{1}\otimes\ket{1}\langle 1|_{1}
		 + \sum_{j = 2}^{2N-1}q_{j}\otimes\left(|0\rangle\langle 0|_{\left\lfloor \frac{j}{2}\right\rfloor }
		 + \ket{1}\langle 1|_{\left\lfloor \frac{j}{2}\right\rfloor  + 1}\right)^{2}
		 + q_{2N}\otimes|0\rangle\langle 0|_{N}\right)|s_{ij}\rangle.
\]
Thus $H_{B}^{2}|_{S}$ is diagonal in the basis \eqref{eq:basis_for_s}
and we can evaluate its diagonal matrix elements using the above expression.
The largest such matrix element is 
\[
	\left\Vert H_{B}^{2}|_{S}\right\Vert  = 1.
\]
 Applying Corollary \ref{cor:F_corollary} with $F = 1$ gives
\[
	\gamma\left(H_{\text{triplet}}^{(2N)}\otimes \iii
		 + \iii \otimes 	H_{\text{unary}}^{(N)}
		 + H_{\text{sync}}\right)\geq\frac{1}{2}\cdot\frac{1}{48}\cdot\frac{1}{4}
	 = \frac{1}{384}.
\]

As the final step, we apply Corollary \ref{cor:F_corollary} again,
now with
\[
	H_{A} = H_{\text{triplet}}^{(2N)}\otimes\iii  
			+  \iii\otimes H_{\text{unary}}^{(N)}
		 + H_{\text{sync}}, \qquad 
	H_{B} = \sum_{i = 1}^{N}g_{2i-1,2i}(\sigma_{i}^{x}),
\]
so that $H_{A} + H_{B} = H_{\text{clock}}^{(N)}.$ Recall the basis for
the groundspace of $H_{A}$ given by the states $|e_{j}\rangle$ from
\eqref{eq:coupled_clocks}. The restriction of $H_{B}$ to
the subspace $S$ spanned by these states is given in \eqref{eq:action_of_g_x}.
Looking at this expression and making use of the fact that $\beta_{j}\geq\frac{1}{4}$
for all $j$, we see that $\gamma\left(H_{B}|_{S}\right)\geq\frac{1}{2}$.
Looking at \eqref{eq:g(U)} we see that the summands in $H_{B}$
are mutually commuting projectors. Since it is also the case that
$g_{2i-1,2i}(\sigma_{i}^{x})|e_{k}\rangle = 0$ whenever $k\notin\{2i-1,2i\}$,
we see that
\[
	g_{2j-1,2j}(\sigma_{j}^{x})g_{2i-1,2i}(\sigma_{i}^{x})|e_{k}\rangle
	 =  \delta_{ij}g_{2i-1,2i}(\sigma_{i}^{x})|e_{k}\rangle,
\]
 and hence
\[
H_{B}^{2}|_{S} = H_{B}|_{S}.
\]
This means
\[
	F  =  \left\Vert H_{B}^{2}|_{S}\right\Vert 
	 =  \left\Vert H_{B}|_{S}\right\Vert \leq\frac{2}{3},
\]
 using \eqref{eq:action_of_g_x} and the fact that $\beta_{2j-1}\leq\frac{1}{3}$.
Now applying Corollary \ref{cor:F_corollary} we get
\[
	\gamma\left(H_{\text{clock}}^{(N)}\right)
	\geq \frac{1}{2}\cdot\frac{1}{384}\cdot\frac{1}{4}\cdot\frac{3}{2}
	 =  \frac{1}{2048}.
\]
\end{proof}

Now we are ready to prove a bound on the smallest nonzero eigenvalue
of $H_{\text{diag}}^{(M)}$ from \eqref{eq:H_diag}.
Recall that this Hamiltonian acts on the Hilbert space of two clock registers, 
and its ground states are represented in Figure~\ref{fig:groundstates}.
\begin{lem}
\label{lem:Hdiag_lemma}
\[
	\gamma\left(H_{\textnormal{diag}}^{(M)}\right) = \Omega\left(\frac{1}{M}\right).
\]
\end{lem}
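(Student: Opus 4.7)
The plan is to apply Corollary~\ref{cor:corollary_bound} in one shot, with the split
\[
H_{A}=H_{\text{clock}}^{(9M+3)}\otimes \iii+\iii\otimes H_{\text{clock}}^{(9M+3)},\qquad
H_{B}=\sum_{j=0}^{M-1}\!\bigl(S^{(9j+1,9j+3)}+S^{(9j+4,9j+6)}\bigr)+S^{(9M+1,9M+3)},
\]
so that $H_{\text{diag}}^{(M)}=H_{A}+H_{B}$. The two copies of $H_{\text{clock}}^{(9M+3)}$ in $H_{A}$ commute because they act on distinct registers, so Lemma~\ref{lem:clock_lemma} immediately gives $\gamma(H_{A})\geq 1/2048$.

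The key observation for $H_{B}$ is that all of its summands $S^{(k,k+2)}$ mutually commute. Inspecting \eqref{eq:h_i_explicit}, \eqref{eq:C_leq}, \eqref{eq:C_geq} and Definition \eqref{eq:S_k_k_2}, each $S^{(k,k+2)}$ acts nontrivially only on qubits $\{6k-1,6k,6(k+1)-1,6(k+1)\}$ of the first segment of each clock register and on qubits $\{k,k+1,k+2\}$ of the unary segment of each clock register; for our index set $k\in\{9j+1,9j+4:0\leq j\leq M-1\}\cup\{9M+1\}$, these qubit windows are pairwise disjoint. Hence the $S^{(k,k+2)}$ commute, and since each acts nontrivially only on a fixed-dimensional subspace via a sum of $6$ projectors, $\gamma(H_{B})=\min_{k}\gamma(S^{(k,k+2)})\geq c_{0}$ for some universal constant $c_{0}>0$. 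The same bound on the norm of each summand gives $\|H_{B}\|\leq (2M+1)\max_{k}\|S^{(k,k+2)}\|=O(M)$.

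The main step is to show $\gamma(H_{B}|_{S})=\Omega(1)$, where $S=\mathrm{span}\{|C_{i}\rangle|C_{j}\rangle:1\leq i,j\leq 9M+3\}$ is the nullspace of $H_{A}$. Using \eqref{eq:hi_iplus1}, \eqref{eq:1localproj}, and \eqref{eq:1localproj2}, each restricted summand $\Pi_{S}S^{(k,k+2)}\Pi_{S}$ can be written explicitly in the grid basis: the terms $C_{\leq k}\otimes C_{\geq k+2}$ and $C_{\geq k+2}\otimes C_{\leq k}$ contribute diagonal penalties of size $\Theta(1)$ on the two ``forbidden'' quadrants $\{i\leq k,\,j\geq k+2\}$ and $\{i\geq k+2,\,j\leq k\}$, while the four hopping terms contribute Laplacian-type couplings of weight $\Theta(1)$ along the edges of the L-shape complementary to those quadrants. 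Summing over all $k$ in our index set reconstructs precisely the graph in black in Figure~\ref{fig:groundstates}, whose connected components index an orthogonal basis of ground states (uniform superpositions). Since each connected component has at most a fixed number of vertices (at most $43$, from the two-qubit blocks) and each forbidden vertex accumulates a diagonal penalty bounded below by a constant, $H_{B}|_{S}$ decomposes (up to the large penalty piece on forbidden vertices) into a direct sum of bounded-dimensional local sub-Hamiltonians, one per ``tile'' of the grid, each of which has a spectral gap bounded below by a universal constant $c_{1}>0$.

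Substituting these four bounds into Corollary~\ref{cor:corollary_bound} gives
\[
\gamma\bigl(H_{\text{diag}}^{(M)}\bigr)\geq \min\{\gamma(H_{A}),\gamma(H_{B})\}\cdot\frac{\gamma(H_{B}|_{S})}{2\|H_{B}\|}\geq\frac{\min\{1/2048,\,c_{0}\}\cdot c_{1}}{2\cdot O(M)}=\Omega(1/M),
\]
as desired. The main obstacle is the rigorous justification of $\gamma(H_{B}|_{S})=\Omega(1)$: while the picture of bounded-size tiles glued by large diagonal penalties on forbidden vertices is conceptually clear, turning it into an honest lower bound requires a concrete case analysis showing that the quadrant penalties really do combine to at least a fixed constant on every forbidden vertex, and that the hopping edges of different $S^{(k,k+2)}$ respect the tile partition so that no tile interacts with another through the restricted dynamics.
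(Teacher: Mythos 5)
Your decomposition is genuinely different from the paper's. You apply Corollary~\ref{cor:corollary_bound} once with $H_A=H_{\text{clock}}^{(9M+3)}\otimes\iii+\iii\otimes H_{\text{clock}}^{(9M+3)}$ and $H_B=\sum_k S^{(k,k+2)}$, whereas the paper splits $H_B$ further into the diagonal quadrant-penalty part $H_\beta$ and the hopping part $H_\gamma$, handles $H_\alpha+H_\beta$ first with the $(1-\sqrt{d})$ form of Kitaev's lemma (Corollary~\ref{lem:Kitaev's-geometric-lemma}, computing $\|\Pi_\beta|_\Gamma\|\le 3/4$ explicitly from the unary structure of the clock states), and only then adds $H_\gamma$ via Corollary~\ref{cor:corollary_bound}. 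Arithmetically your one-shot version would also yield $\Omega(1/M)$, and the bounds $\gamma(H_A)\ge 1/2048$, $\gamma(H_B)\ge 1$ (commuting projectors), and $\|H_B\|=O(M)$ are all fine.

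The gap is exactly where you flagged it, and it is more serious than you let on. Your description of $\Pi_S S^{(k,k+2)}\Pi_S$ as contributing Laplacian couplings ``along the edges of the L-shape complementary to those quadrants'' is wrong: the term $h_{k,k+1}\otimes C_{\leq(k+1)}$ restricted to the grid basis couples $|C_k\rangle|C_j\rangle\leftrightarrow|C_{k+1}\rangle|C_j\rangle$ for \emph{every} $j\le k+1$, not only for $j$ near $k$, so these hopping edges run far into the forbidden quadrants created by the \emph{other} $S^{(k',k'+2)}$ terms. Consequently $H_B|_S$ does not decompose as (tile Hamiltonians) $\oplus$ (diagonal penalty on forbidden vertices); the hopping graph on the full grid has edges between forbidden vertices, and in general between allowed and forbidden tiles. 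To rescue your approach you would need to prove that the connected components of this full-grid hopping graph are still bounded (they are: each component is contained in the Cartesian product of a 3-wide column band $\{k,k+1,k+2\}$ with a 3-wide row band, since horizontal/vertical hops never leave those bands), and then argue block-by-block that every component with at least one forbidden vertex has smallest eigenvalue $\Omega(1)$, using the $\ge\frac14$ diagonal penalty on each forbidden vertex together with the $\Omega(1)$ hopping weights and the bounded component size. That is a legitimate but distinctly nontrivial piece of bookkeeping which is entirely absent from your proposal; the paper's two-stage split avoids it, because after killing the forbidden vertices with $H_\beta$ the hopping operator $H_\gamma$ restricted to the allowed subspace really is a direct sum of the bounded tiles in Figure~\ref{fig:groundstates}.
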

\begin{proof}
Looking at equations \eqref{eq:H_diag} and \eqref{eq:S_k_k_2}, write
\[
	H_{\text{diag}}^{(M)} = H_{\alpha} + H_{\beta} + H_{\gamma},
\]
where
\begin{align*}
	H_{\alpha} &  =  
		H_{\text{\text{clock}}}^{(9M + 3)}\otimes \iii  +  \iii \otimes H_{\text{clock}}^{(9M + 3)},\\
	H_{\beta} &  = 
		C_{\leq\left(9M + 1\right)}\otimes C_{\geq\left(9M + 3\right)}
		+ \sum_{k = 0}^{M-1}\left(
						C_{\leq\left(9k + 1\right)}	\otimes C_{\geq\left(9k + 3\right)}
				 +  C_{\leq\left(9k + 4\right)} \otimes C_{\geq\left(9k + 6\right)}
 		 	\right)\\
 &  + C_{\geq\left(9M + 3\right)}\otimes C_{\leq\left(9M + 1\right)} 
   +  \sum_{k = 0}^{M-1}\left(
   					C_{\geq\left(9k + 3\right)} \otimes C_{\leq\left(9k + 1\right)}
   				+ C_{\geq\left(9k + 6\right)}\otimes C_{\leq\left(9k + 4\right)}
   \right),\\
H_{\gamma} &  = 
	\sum_{k = 0}^{M-1}\left(
			h_{9k + 1,9k + 2}\otimes C_{\leq(9k + 2)}
	 +  h_{(9k + 2),(9k + 3)}\otimes C_{\geq(9k + 2)}
	 \right)\\
 &  + \sum_{k = 0}^{M-1}\left(
 		  C_{\leq(9k + 2)}\otimes h_{9k + 1,9k + 2}
 		+	C_{\geq(9k + 2)}\otimes h_{(9k + 2),(9k + 3)}
    \right)\\
 &  + \sum_{k = 0}^{M-1}\left(
       h_{9k + 4,9k + 5}\otimes C_{\leq(9k + 5)}
    +  h_{(9k + 5),(9k + 6)}\otimes C_{\geq(9k + 5)}
   \right)\\
 &  +	\sum_{k = 0}^{M-1}\left(
  			C_{\leq(9k + 5)}\otimes h_{9k + 4,9k + 5}
	   +  C_{\geq(9k + 5)}\otimes h_{(9k + 5),(9k + 6)}
	 \right)\\
 &  + h_{9M + 1,9M + 2}\otimes C_{\leq(9M + 2)}
    + h_{(9M + 2),(9M + 3)}\otimes C_{\geq(9M + 2)}\\
 &  + C_{\leq(9M + 2)}\otimes h_{9M + 1,9M + 2}
    + C_{\geq(9M + 2)}\otimes h_{(9M + 2),(9M + 3)}.
\end{align*}
Recall that the ground state subspace $S_{\alpha}$ of $H_{\alpha}$
is spanned by the states 
\begin{equation}
	\{|C_{i}\rangle|C_{j}\rangle\,:\,(i,j)\in[9M + 3]\otimes[9M + 3]\}.\label{eq:C_i_C_j}
\end{equation}
 $H_{\beta}$ satisfies $\gamma\left(H_{\beta}\right)\geq 1$ since
it is a sum of commuting projectors. Applying Corollary \ref{lem:Kitaev's-geometric-lemma}
and using the result of Lemma \ref{lem:clock_lemma} 
which bounds $\gamma\left(H_{\alpha}\right)$,
we get 
\begin{eqnarray}
	\gamma\left(H_{\alpha} + H_{\beta}\right) 
			& \geq & \frac{1}{2048}\cdot\left(1-\sqrt{d}\right),
	\label{eq:bound_alphaplusbeta}\\
	d &  =  & \left\Vert \Pi_{\beta}|_{\Gamma}\right\Vert, \nonumber 
\end{eqnarray}
 where $\Pi_{\beta}$ is the projector onto the groundspace of $H_{\beta}$,
and $\Gamma$ is the subspace of groundstates of $H_{\alpha}$ that
are orthogonal to the groundspace of $H_{\alpha} + H_{\beta}.$ Looking
at the form of $H_{\beta}$ we see that a spanning set of basis vectors
for the groundspace of $H_{\alpha} + H_{\beta}$ is given by a \emph{subse}t
of the basis vectors \eqref{eq:C_i_C_j}. The set $\Gamma$ is the complement of this subset, and in particular can be written as
\begin{eqnarray}
	\Gamma &  =  & \text{span}\{|C_{i}\rangle|C_{j}\rangle:\left(i,j\right)\in P\}\label{eq:basis_Gamma}
\end{eqnarray}
for some set $P\subset[9M + 3]\otimes[9M + 3]$ (we will not need 
an explicit expression for $P$).

We now consider $\Pi_{\beta}|_{\Gamma}$ in the basis \eqref{eq:basis_Gamma}
and bound the norm of the resulting matrix. Recall from 
\eqref{eq:C_leq} and \eqref{eq:C_geq} that the operators $C_{\leq i}$
and $C_{\geq i}$ act only on the last $9M + 3$ qubits of the clock
register, are mutually commuting, and are diagonal in the computational
basis for all $i\in[9M + 3]$. This implies that $\Pi_{\beta}$ has
the form 
\begin{equation}
	\Pi_{\beta} = \sum_{(z_{1},z_{2})\in\mathcal{A}}
		\iii \otimes \left(|z_{1}\rangle\langle z_{1}|\right)
		\otimes \iii \otimes \left(|z_{2}\rangle\langle z_{2}|\right)\label{eq:pi_beta},
\end{equation}
where the two identity operators act on the first $6(9M + 3)-3$ qubits
of the two clock registers and where $\mathcal{A\subseteq}\{0,1\}^{9M + 3}\otimes\{0,1\}^{9M + 3}$
is a set of pairs of $(9M + 3)$-bit 
strings (we will not require an explicit
expression for $\mathcal{A}$).

Now look at the expression for the states $|C_{i}\rangle$ from equation
\eqref{eq:C_i} 
with $N = 9M + 3$ (or inspect Figure~\ref{fig:clockstates}) 
and note that
\[
	\langle C_{i}|\left(\iii \otimes|z\rangle\langle z|\right)|C_{j}\rangle = 0
	\;\text{ whenever }i\neq j,
\]
where the tensor product separates the first 
$6N-3$ qubits
from the last 
$N$
and where $z$ is any 
$N$-bit
string. Using
this fact and \eqref{eq:pi_beta} we get 
\[
	\langle C_{j_{2}}|\langle C_{i_{2}}|\Pi_{\beta}|C_{i_{1}}\rangle|C_{j_{1}}\rangle
	 =  0\;\text{ whenever (}i_{1},j_{1})\neq(i_{2},j_{2}).
\]
 We evaluate the diagonal matrix elements of $\Pi_{\beta}$ using
\eqref{eq:C_i}: 
\[
	\langle C_{j}|\langle C_{i}|\Pi_{\beta}|C_{i}\rangle|C_{j}\rangle
	 =  \frac{1}{4}\left(\delta_{(u(i-1),u(j-1))\in\mathcal{A}}
	 +  \delta_{(u(i),u(j-1))\in\mathcal{A}}
	 +  \delta_{(u(i-1),u(j))\in\mathcal{A}}
	 +  \delta_{(u(i-1),u(j-1))\in\mathcal{A}}\right),
\]
 where $u(i)$ is the unary representation of $i$ with 
 $N$ bits
(the bit string with $i$ ones followed by 
$N-i$ zeros),
and $\delta_{(x,y)\in\mathcal{A}}$
is equal to one if $(x,y)\in\mathcal{A}$ and zero otherwise. Since
$\Pi_{\beta}|_{\Gamma}$ is diagonal in this basis we have 
\begin{align*}
	\left\Vert \Pi_{\beta}|_{\Gamma}\right\Vert  
  &  = \max_{(i,j)\in P}\langle C_{j}|\langle C_{i}|\Pi_{\beta}|C_{i}\rangle|C_{j}\rangle\\
  &  = \max_{(i,j)\in P}\frac{1}{4}\left(\delta_{(u(i-1),u(j-1))\in\mathcal{A}}
    +  \delta_{(u(i),u(j-1))\in\mathcal{A}} 
    +  \delta_{(u(i-1),u(j))\in\mathcal{A}}
    +  \delta_{(u(i-1),u(j-1))\in\mathcal{A}}\right).
\end{align*}
We know that $\left\Vert \Pi_{\beta}|_{\Gamma}\right\Vert $ is not
equal to $1$ since the space $\Gamma$ does not contain any states
which have zero energy for $H_{\beta}$ (by definition). Looking at
the above expression for this quantity we see that this implies 
\[
	\left\Vert \Pi_{\beta}|_{\Gamma}\right\Vert \leq\frac{3}{4}.
\]
Now using this bound in \eqref{eq:bound_alphaplusbeta} we
get
\begin{eqnarray*}
	\gamma\left(H_{\alpha} + H_{\beta}\right) &  =  & \Omega(1).
\end{eqnarray*}
For
the final step in the proof of 
Lemma~\ref{lem:Hdiag_lemma},
we use Corollary \ref{cor:corollary_bound}
with $H_{A} = H_{\alpha} + H_{\beta}$ and $H_{B} = H_{\gamma}.$ As discussed
above, the ground space of $H_{A}$ is spanned by 
the states $|C_{i}\rangle|C_{j}\rangle$
with
$(i,j)\notin P$. $H_{\gamma}$ is block-diagonal in this basis,
with a block for each connected component of the graph in Figure~\ref{fig:groundstates}.
From this Figure we 
also see that blocks have maximum size equal to 7.
Within each block there is a zero energy state given by the uniform
superposition over states $|C_{i}\rangle|C_{j}\rangle$ where $(i,j)$
are in the associated connected component. There are also higher energy
states. The lowest nonzero eigenvalue of $H_{\gamma}$ within the
subspace spanned by ground states of $H_{\alpha} + H_{\beta}$ is equal
to the smallest nonzero eigenvalue for one of these constant-size
blocks, which is a constant independent of $M$. The norm of $H_{B}$
is upper bounded as $\left\Vert H_{B}\right\Vert  = \mathcal{O}(M)$
since it is a sum of $O(M)$ projectors. Furthermore, looking at equations
\eqref{eq:C_leq}, \eqref{eq:C_geq} and \eqref{eq:h_i_explicit}, we see that 
\begin{align*}
	\{C_{\leq j}\}, \qquad
	\{C_{\geq k}\}, \qquad \text{and} \qquad
	\{h_{i,i + 1}\}
\end{align*}
are mutually commuting. 
Thus, $H_{B}$ is a sum of \emph{commuting} projectors
which implies that its eigenvalues are integers, and in particular
$\gamma(H_{B})\geq1$. Now applying Corollary \ref{cor:corollary_bound}
we get 
\[
	\gamma\left(H_{\alpha} + H_{\beta} + H_{\gamma}\right) = \Omega\left(\frac{1}{M}\right),
\]
 which completes the proof.
\end{proof}

The next Lemma bounds the smallest nonzero eigenvalue of $\sum_{j = 0}^{M-1}H_{U}^{j}$
where $H_{U}^{j}$ is defined in \eqref{eq:H_Uj_def}. 
The role of this Hamiltonian is to 
check the application of single-qubit gates 
$U^j$ within the ground state subspace of $H_{\text{diag}}^M$.
\begin{lem}
\label{lem:sumHus} 
\[
	\gamma\left(\sum_{j = 0}^{M-1}H_{U}^{j}\right) = \Omega(1).
\]
\end{lem}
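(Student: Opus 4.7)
The plan is to split the sum into two pieces on disjoint registers: write $\sum_{j=0}^{M-1}H_U^j = A + B$, where $A = \sum_j h_{9j+3,9j+4}(U^j)\otimes \iii$ is supported on the computational register together with the first clock register, and $B = \sum_j \iii\otimes \iii\otimes h_{9j+3,9j+4}$ is supported on the second clock register. Since $[A,B]=0$, a common eigenbasis argument gives that every nonzero eigenvalue of $A+B$ is at least $\min\{\gamma(A),\gamma(B)\}$, so it suffices to prove $\gamma(A),\gamma(B)\geq 1$. The bound $\gamma(B)\geq 1$ is immediate from \eqref{eq:h_i_explicit}: for different values of $j$ the summands of $B$ are projectors supported on disjoint pairs of clock qubits of the second clock register, so $B$ is a sum of pairwise commuting projectors with integer spectrum.

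The only real work is bounding $\gamma(A)$. Different summands of $A$ need not commute, because the single-qubit unitaries $U^j$ and $U^{j'}$ may act on the same computational qubit. I would remove this non-commutation by a unitary conjugation. For each $j$ let
\[
\hat{W}_j = \bigl(\iii - |01\rangle\langle 01|_j\bigr)\otimes \iii + |01\rangle\langle 01|_j\otimes U^j,
\]
a unitary acting on the $j$-th clock pair together with the computational register (it applies $U^j$ to the computational register controlled on the $j$-th clock pair being in the state $|01\rangle$, and is the identity otherwise). A direct computation on the four-dimensional subspace spanned by $|\phi\rangle\otimes|01\rangle$ and $|\phi\rangle\otimes|10\rangle$ ($|\phi\rangle\in\mathbb{C}^2$) verifies
\[
\hat{W}_j\bigl(h_{9j+3,9j+4}(\iii)\otimes \iii\bigr)\hat{W}_j^\dagger = h_{9j+3,9j+4}(U^j)\otimes \iii.
\]

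The key observation is that for $j'\neq j$, the unitary $\hat{W}_{j'}$ commutes with $h_{9j+3,9j+4}(\iii)\otimes \iii$: the clock pairs $j$ and $j'$ are disjoint, and the identity-transition operator $h_{9j+3,9j+4}(\iii)$ acts trivially on the computational register. Setting $\hat{W}:=\hat{W}_0\hat{W}_1\cdots\hat{W}_{M-1}$ (a unitary, even though its factors need not commute pairwise), it follows that $\hat{W}\bigl(\sum_j h_{9j+3,9j+4}(\iii)\otimes \iii\bigr)\hat{W}^\dagger = A$. The operator in parentheses is a sum of projectors with pairwise disjoint support on the first clock register and hence a sum of commuting projectors, so its smallest nonzero eigenvalue is at least $1$; by unitary invariance $\gamma(A)\geq 1$. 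Combining yields $\gamma\bigl(\sum_j H_U^j\bigr)\geq 1 = \Omega(1)$. The only step that requires careful verification is the commutation claim $[\hat{W}_{j'},h_{9j+3,9j+4}(\iii)\otimes \iii]=0$ for $j'\neq j$, which crucially uses that the identity-transition operator acts as the identity on the computational register.
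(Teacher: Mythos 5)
The decomposition $\sum_j H_U^j = A+B$ on disjoint registers, the identity $\gamma(A+B)\ge\min\{\gamma(A),\gamma(B)\}$ for commuting positive-semidefinite operators, and the bound $\gamma(B)\ge 1$ are all fine. The gap is in the claim that $\hat{W}\bigl(\sum_j h_{9j+3,9j+4}(\iii)\otimes\iii\bigr)\hat{W}^{\dagger}=A$. Your commutation observation is correct as stated -- $\hat{W}_{j'}$ commutes with $h_{9j+3,9j+4}(\iii)\otimes\iii$ for $j'\neq j$ because the latter is trivial on the computational register -- but it does not yield the conjugation identity. Writing $\hat{W}=\hat{W}_0\cdots\hat{W}_{M-1}$ and conjugating $h_j(\iii):=h_{9j+3,9j+4}(\iii)\otimes\iii$ from the inside out: the factors $\hat{W}_{j+1},\dots,\hat{W}_{M-1}$ leave $h_j(\iii)$ fixed, then $\hat{W}_j$ turns it into $h_j(U^j)$, but the remaining outer conjugations by $\hat{W}_0,\dots,\hat{W}_{j-1}$ act on $h_j(U^j)$, which is \emph{not} trivial on the computational register. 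If $U^{j'}$ and $U^j$ act on the same computational qubit and do not commute, then $\hat{W}_{j'}h_j(U^j)\hat{W}_{j'}^{\dagger}\neq h_j(U^j)$. For a minimal counterexample, take $M=2$ with $U^0=\widehat{H}$ and $U^1=T$ on the same qubit: then
\[
\hat{W}_0\hat{W}_1\bigl(h_0(\iii)+h_1(\iii)\bigr)\hat{W}_1^{\dagger}\hat{W}_0^{\dagger}
 = h_0(\widehat{H}) + \hat{W}_0\,h_1(T)\,\hat{W}_0^{\dagger}
 \neq h_0(\widehat{H}) + h_1(T),
\]
since $[\hat{W}_0,h_1(T)]\neq 0$. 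No choice of ordering of the $\hat{W}_j$'s fixes this, and in fact $A$ cannot be unitarily equivalent to a sum of commuting projectors: the paper shows via \eqref{eq:8-d operator} that the relevant restriction has smallest nonzero eigenvalue $0.00937\ldots$, which is not an integer. The case of a Hadamard and a $T$ gate acting on the same qubit -- the source of non-commutativity your conjugation cannot remove -- is precisely what the paper's proof is designed to control: it partitions eigenstates into those whose ``remaining operators'' all commute (giving integer eigenvalues) and those with a non-commuting $\widehat{H},T$ pair (giving the explicit $0.00937$ bound). That case analysis is the actual content of the Lemma, and your proposal skips it.
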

\begin{proof}
Recall that the operator 
\begin{equation}
	\sum_{j = 0}^{M-1}H_{U}^{j} = \sum_{j = 0}^{M-1}\left(h_{9j + 3,9j + 4}(U^{j}) \otimes \iii 
	 +  \iii \otimes \iii \otimes h_{9j + 3,9j + 4}\right)
	 \label{eq:sum_Hu}
\end{equation}
acts on a Hilbert space with an $(n + n_{a})$-qubit computational register
and two clock registers 
\[
	\left(\mathbb{C}^{2}\right)^{(n + n_{a})} \otimes
	\left(\mathbb{C}^{2}\right)^{7N-3} \otimes
	\left(\mathbb{C}^{2}\right)^{7N-3},
\]
 where $N = 9M + 3$ is the number of states for each clock. Each $U^{j}$
is either the Hadamard gate, the $T$ gate, or the identity gate acting
on one of the qubits of the computational register. In \eqref{eq:sum_Hu},
$h_{9j + 3,9j + 4}(U^{j})$ acts nontrivially on a single qubit of the
computational register as well as qubits $6(9j + 3)-1$ and $6(9j + 3)$
of the first clock register (to see this, look at \eqref{eq:h_i_explicit}). 

Define operators 
\begin{equation}
	P_{i,i + 1} = |01\rangle\langle 01|_{6i-1,6i} + |10\rangle\langle 10|_{6i-1,6i}\label{eq:P_i,i + 1}
\end{equation}
 for $i = 1,\dots,N-1$ which act on a single clock register. Note that
we can simultaneously diagonalize \eqref{eq:sum_Hu} along with 
\begin{eqnarray}
	\{\iii \otimes  P_{9j + 3,9j + 4} \otimes \iii \}\quad & \text{and}\quad & \{\iii \otimes \iii \otimes h_{9j + 3,9j + 4}\}\label{eq:P_and_h}
\end{eqnarray}
for $j = 0,\dots,M-1$, since all of these operators are mutually commuting.
We choose to work in a basis of simultaneous eigenvectors for these
operators.

Any eigenstate $\ket{\psi}$ of \eqref{eq:sum_Hu}, which is also a
zero eigenvector of 
\begin{equation}
	\iii \otimes  P_{9j + 3,9j + 4} \otimes \iii \label{eq:P_9j + 3},
\end{equation}
satisfies 
\begin{equation}
	\left( h_{9j + 3,9j + 4}(U^{j}) \otimes \iii \right) \ket{\psi} = 0.
	\label{eq:annihilate_Pi}
\end{equation}
 Likewise, there are eigenstates $|\psi'\rangle$ of \eqref{eq:sum_Hu}
which satisfy 
\begin{equation}
	\left( \iii \otimes \iii \otimes h_{9j + 3,9j + 4} \right) \ket{\psi'} = 0.
	\label{eq:annihilate_secondtyp}
\end{equation}
We are working in a basis where each eigenstate of \eqref{eq:sum_Hu}
has a set of simultaneous eigenvalues for the projectors 
from \eqref{eq:P_and_h}. An eigenstate $|\kappa\rangle$ in this basis has
eigenvalue $0$ or $1$ for each of these projectors. Let us fix $|\kappa\rangle$
and consider the projectors which have eigenvalue zero. Looking at
equations \eqref{eq:annihilate_Pi} and \eqref{eq:annihilate_secondtyp}
we see that for each such projector there is a corresponding term
in \eqref{eq:sum_Hu} that annihilates $|\kappa\rangle.$ Let
us now focus our attention on the other terms in \eqref{eq:sum_Hu}.
Specifically, for a given eigenstate $|\kappa\rangle$ of \eqref{eq:sum_Hu},
consider the terms 
\[
	h_{9j + 3,9j + 4}(U^{j}) \otimes \iii, 
\]
 where 
 $\left( \iii \otimes  P_{9j + 3,9j + 4} \otimes \iii \right) |\kappa\rangle = |\kappa\rangle$,
and the terms 
\[
	\iii \otimes \iii \otimes h_{9k + 3,9k + 4},
\]
 where $\left(\iii \otimes \iii \otimes h_{9k + 3,9k + 4}\right) |\kappa\rangle = |\kappa\rangle$. 
 We'll call these terms the \emph{remaining operators} for the state
$|\kappa\rangle$ (they remain after removing some terms in equation
\eqref{eq:sum_Hu} which have eigenvalue zero).

We classify the eigenstates of \eqref{eq:sum_Hu} into two types. 
The first type of eigenstate is defined by the property that all of
its remaining operators commute. Note that an eigenstate of \eqref{eq:sum_Hu}
is also an eigenstate of the sum of its remaining operators. In this
case the remaining operators are commuting projectors, and so each
eigenvalue of \eqref{eq:sum_Hu} associated with such an eigenstate
is a nonnegative integer.

The second type of eigenstate has at least two remaining operators
which do not commute. Looking at \eqref{eq:sum_Hu} we see
that these two operators must be 
\[
	h_{9j + 3,9j + 4}(U^{j}), \qquad  \text{and} \qquad
	h_{9k + 3,9k + 4}(U^{k}),
\]
 for some $j$ and $k$ with $[U^{j},U^{k}]\neq0.$ Recall that both
$U^{j}$ and $U^{k}$ are either Hadamard, $T$ or the identity acting
on one of the qubits. The condition that $[U^{j},U^{k}]\neq0$ means
that the two gates must act on the same qubit, and that one of them
is the Hadamard gate and the other is the $T$ gate. We now show that
any eigenstate of this second type has eigenvalue $e$ bounded below
by a positive constant. Since each term in \eqref{eq:sum_Hu} is a projector,
we have 
\begin{equation}
	e_{\text{}}
	\geq
	\min_{\ket{\psi}\in S}\langle\psi|h_{9j + 3,9j + 4}(U^{j})
	 +  h_{9k + 3,9k + 4}(U^{k})\ket{\psi}\label{eq:e_bound},
\end{equation}
where $S$ is the subspace of states satisfying 
\[
	\left( \iii \otimes  P_{9j + 3,9j + 4} \otimes \iii \right) \ket{\psi}
	 =  \ket{\psi}
	 \quad\text{and}\quad
	 \left(\iii \otimes  P_{9k + 3,9k + 4} \otimes \iii \right) \ket{\psi}
	 = \ket{\psi}.
\]
 For states $\ket{\psi}\in S$, the Hamiltonian 
\begin{equation}
	h_{9j + 3,9j + 4}(U^{j}) + h_{9k + 3,9k + 4}(U^{k})\label{eq:two_remaining_ops}
\end{equation}
 acts nontrivially within an $8$-dimensional space that is the tensor
product of three 2-dimensional spaces. These are: the states of the
single computational qubit on which the unitaries $U^{j}$ and $U^{k}$
act, the states $|01\rangle,|10\rangle$ on qubits $6(9j + 3)-1,6(9j + 3)$
of the first clock register, and the states $|01\rangle,|10\rangle$
on qubits $6(9k + 3)-1,6(9k + 3)$ of the first clock register (note $j\neq k$).
Identifying this 8-dimensional space with the space of $3$ qubits,
we can write the action of \eqref{eq:two_remaining_ops} as the Hamiltonian
\begin{align}
	&\frac{1}{2}\left(
		\iii \otimes \iii  \otimes \iii 
		-\widehat{H}\otimes\ket{1}\langle 0| \otimes \iii 
		-\widehat{H}\otimes\ket{1}\langle 0| \otimes \iii 
	\right)
	\label{eq:8-d operator}\\
 + &\frac{1}{2}\left(
 		\iii \otimes \iii  \otimes \iii 
 		- T^{\dagger} \otimes \iii \otimes\ket{1}\langle 0|
 		- T \otimes \iii \otimes|0\rangle\langle 1|
 	\right).
 	\nonumber 
\end{align}
 Here we used the fact that $U^{j}$ and $U^{k}$ are the Hadamard
and $T$ gates respectively. Equation \eqref{eq:e_bound} says that
$e_{\text{}}$ is lower bounded by the smallest eigenvalue of this
operator, which is a positive constant equal to $0.00937\dots$. 

Let us now summarize what we have shown regarding the eigenstates
of \eqref{eq:sum_Hu}. We have classified each eigenstate as either
type 1 or type 2 according to the form of its remaining operators.
We showed that every eigenstate of the first type has a nonnegative
integer eigenvalue, and that every eigenstate of the second type has
eigenvalue at least $0.00937\dots$. From this we see that 
\[
	\gamma\left(\sum_{j = 0}^{M-1}H_{U}^{j}\right)
\]
is at least $0.00937\dots$, which completes the proof.
\end{proof}


We now prove a lower bound on the smallest nonzero eigenvalue of $\sum_{j = 0}^{M-1}H_{V}^{j}$
where $H_{V}^{j}$ is defined in \eqref{eq:H_Vj_def}. 
The role of this Hamiltonian is to 
check each of the $M$ applications of the two-qubit gate
$V$ within the ground state subspace of $H_{\text{diag}}^M$.
\begin{lem}
\label{lem:sumofHv}
\[
	\gamma\left(\sum_{j = 0}^{M-1}H_{V}^{j}\right) = \Omega(1).
\]
\end{lem}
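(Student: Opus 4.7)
My plan is to adapt the strategy of the proof of Lemma \ref{lem:sumHus}. The operator $\sum_{j=0}^{M-1} H_V^j$ is a sum of $14M$ projectors; I will simultaneously diagonalize it along with a family of mutually commuting clock-register projectors. The family consists of the 2-local projectors $\iii \otimes P_{i,i+1} \otimes \iii$ and $\iii \otimes \iii \otimes P_{i,i+1}$ for all relevant $i$, with $P_{i,i+1}$ as in \eqref{eq:P_i,i + 1}, together with the 1-local projectors $C_{\geq k}$ and $C_{\leq k}$ on both clock registers. These mutually commute and commute with every projector term in every $H_V^j$, since each $h_{i,i+1}(U)$ is supported inside its corresponding $P_{i,i+1}$, and the $C_{\geq k}, C_{\leq k}$ live on the single-qubit unary part of each clock, disjoint from all $h$-operator qubits and from the computational register.

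In the simultaneous eigenbasis, each eigenstate $|\kappa\rangle$ of $\sum_j H_V^j$ carries a $0/1$ pattern of eigenvalues for this commuting family. Any term of $H_V^j$ whose clock-projector factor evaluates to $0$ on $|\kappa\rangle$ automatically annihilates $|\kappa\rangle$; the surviving terms I will call the \emph{active operators} of $|\kappa\rangle$, in exact analogy with the ``remaining operators'' of Lemma \ref{lem:sumHus}. I classify $|\kappa\rangle$ by whether all its active operators (across all $j$) mutually commute: if they all commute, they are mutually commuting projectors, so the eigenvalue of $|\kappa\rangle$ is a nonnegative integer, hence either $0$ or at least $1$.

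If some pair of active operators does not commute, the non-commutation must come from the computational register. Within a single $H_V^j$ (and using $a_j \neq b_j$, which always holds since $V_{a_j b_j}$ is a two-qubit gate), the only non-commuting pair among the $14$ terms is $(h_{9j+7,9j+8}(B_{b_j}) \otimes \iii,\; h_{9j+7,9j+8}(\sigma^z_{b_j}))$, stemming from $[B,\sigma^z] \neq 0$ on qubit $b_j$; the other mixed pairs within one $H_V^j$ all commute because the single-qubit operators $\{\iii, |0\rangle\langle 0|, |1\rangle\langle 1|\}$ on $a_j$ act on a qubit different from $b_j$. Across distinct indices $j \neq k$, further non-commutations can only arise when the target qubit $b_j$ of the $B$-type term in $H_V^j$ coincides with one of $a_k, b_k$. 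In each such configuration the sum of the two non-commuting active operators preserves an invariant subspace of constant dimension and restricts there to an explicit Hamiltonian independent of $M$; I will verify by direct computation (exactly as in the final paragraph of the proof of Lemma \ref{lem:sumHus}) that the smallest non-zero eigenvalue of this small matrix is at least a positive absolute constant.

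Combining the two cases yields $\gamma(\sum_j H_V^j) = \Omega(1)$. The main obstacle will be the case analysis of the non-commuting configurations, which is richer than in Lemma \ref{lem:sumHus} because $H_V^j$ contains both a $B$-type and a $\sigma^z$-type operator on the computational register; nevertheless, each individual case reduces to diagonalizing a fixed small matrix with an $M$-independent positive gap.
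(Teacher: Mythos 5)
Your high-level plan (commuting family, "active operator" classification into integer-eigenvalue vs.\ constant-gap cases) is the same as the paper's, and your enumeration of which pairs of terms fail to commute (the $B$-vs-$\sigma^z$ pair within a single $H_V^j$, and the cross-$j$ pairs whenever $b_j$ meets $a_k$ or $b_k$) is correct. The gap is in your choice of commuting family: using the rank-$2$ projectors $P_{i,i+1}$ on \emph{all} transition pairs is too coarse, and the type-$2$ lower bound collapses.

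Concretely, take the non-commuting active pair $A=\ket{1}\bra{1}_{a_j}\otimes\iii\otimes h_{9j+6,9j+7}$ and $B=h_{9k+7,9k+8}(B_{b_k})\otimes\iii$ with $a_j=b_k$. Your family pins down $\iii\otimes\iii\otimes P_{9j+6,9j+7}=1$ and $\iii\otimes P_{9k+7,9k+8}\otimes\iii=1$, so the block on which you would restrict $A+B$ still contains the $\ket{\theta^+}=\tfrac{1}{\sqrt2}(\ket{01}+\ket{10})$ direction on the second-clock pair $\bigl(6(9j+6)-1,6(9j+6)\bigr)$. On that direction $h_{9j+6,9j+7}=0$, so $A$ vanishes and $A+B$ reduces to the single projector $B$, whose restricted minimum eigenvalue is $0$. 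Hence $\lambda_{\min}\bigl((A+B)|_{S'}\bigr)=0$ and your bound $e\ge\lambda_{\min}$ is vacuous; the "smallest non-zero eigenvalue of this small matrix" is the wrong quantity, since $\ket{\kappa}$ need not be an eigenvector of $A+B$ and only $e\ge\lambda_{\min}$ is guaranteed. This is also why you cannot replace $P_{9j+7,9j+8}$ by $h_{9j+7,9j+8}$ to restore rank-$1$ everywhere: $h_{9j+7,9j+8}$ does \emph{not} commute with $h_{9j+7,9j+8}(B_{b_j})$ when $B\ne B^\dagger$, so it cannot sit in the family.

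The fix is exactly the paper's: for the transition indices $9j+6$, $9j+8$, $9j+9$ use the \emph{rank-one} idle-transition projectors $\iii\otimes h_{i,i+1}\otimes\iii$ and $\iii\otimes\iii\otimes h_{i,i+1}$ (together with the products $h_{i,i+1}\otimes C_{\gtrless k}$ matching the actual clock factors), and reserve the coarser $P_{9j+7,9j+8}$ only for the index carrying the $B$- and $\sigma^z$-type transition. With these in the family, the restricted subspace forces $h=1$ (state $\ket{\theta^-}$) on the relevant second-clock pair, the operator $A$ no longer vanishes there, and the restriction of $A+B$ to the block is a fixed $4\times4$ (or $8\times8$ in the $B$-vs-$\sigma^z$ case) matrix with strictly positive smallest eigenvalue $\approx 0.076$, giving the desired $\Omega(1)$ bound.
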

\begin{proof}
We use a similar strategy to that used in Lemma \ref{lem:sumHus}.
First, we define a set of mutually commuting projectors which also
commute with 
\begin{equation}
\sum_{j = 0}^{M-1}H_{V}^{j} = \sum_{j = 0}^{M-1}\left(H_{\text{horizontal}}^{j} + H_{\text{vertical}}^{j}\right).\label{eq:sum_HVS}
\end{equation}
 The specific set of projectors reflects the structure of the terms
appearing in \eqref{eq:sum_HVS} (look at equations \eqref{eq:H_horizontalj}
and \eqref{eq:H_verticalj}). In particular, consider the set of projectors
\begin{eqnarray}
	\iii \otimes  h_{9j + 6,9j + 7} \otimes \iii  
		&  & \iii \otimes \iii \otimes h_{9j + 6,9j + 7}
		\label{eq:proj_line1}\\
	\iii \otimes  h_{9j + 6,9j + 7} \otimes C_{\geq 9j + 10} 
		&  & \iii \otimes  C_{\geq 9j + 10} \otimes h_{9j + 6,9j + 7}\\
		\iii \otimes  h_{9j + 8,9j + 9} \otimes C_{\leq 9j + 6} 
		&  & \iii \otimes  C_{\leq 9j + 6} \otimes h_{9j + 8,9j + 9}\\
	\iii \otimes  h_{9j + 9,9j + 10} \otimes \iii  
		&  & \iii \otimes \iii \otimes h_{9j + 9,9j + 10}\\
	\iii \otimes  h_{9j + 9,9j + 10} \otimes C_{\geq 9j + 10} 
		&  & \iii \otimes  C_{\geq 9j + 10} \otimes h_{9j + 9,9j + 10}\\
	\iii \otimes  h_{9j + 8,9j + 9} \otimes C_{\geq 9j + 10} 
		&  & \iii \otimes  C_{\geq 9j + 10} \otimes h_{9j + 8,9j + 9}\\
	\iii \otimes  P_{9j + 7,9j + 8} \otimes \iii  
		&  & \iii \otimes \iii \otimes P_{9j + 7,9j + 8}
	\label{eq:proj_last_line}
\end{eqnarray}
 for $j = 0,\dots,M-1$, where $P_{i,i + 1}$ is defined in \eqref{eq:P_i,i + 1}.
Recall that the three registers separated by tensor products are the
$(n + n_{a})$-qubit computational register, the first clock register
and the second clock register. Using the definitions \eqref{eq:h_i_explicit},
\eqref{eq:C_leq} and \eqref{eq:C_geq} one can check that these projectors
are mutually commuting and that they commute with \eqref{eq:sum_HVS}.
We can therefore simultaneously diagonalize \eqref{eq:sum_HVS} along
with the projectors listed above. We choose to work in a basis of
eigenstates of \eqref{eq:sum_HVS} in which all of these operators are
diagonal.

For a given eigenstate $|\kappa\rangle$ of \eqref{eq:sum_HVS}, each
of the projectors in equations \eqref{eq:proj_line1}-\eqref{eq:proj_last_line}
can have eigenvalue either $0$ or $1$. We defined the projectors
above in such a way that, if one of them has eigenvalue zero then
one of the terms in \eqref{eq:sum_HVS} annihilates $|\kappa\rangle$.
For example, if
\[
	\left( \iii \otimes  P_{7,8} \otimes \iii \right) |\kappa\rangle = 0
\]
 (this is the seventh projector from the top in the left column of equations
\eqref{eq:proj_line1}-\eqref{eq:proj_last_line} with $j = 0$), then 
\[
	\left( h_{7,8}(B_{b_{0}}) \otimes \iii \right) |\kappa\rangle = 0
\]
 (this is the seventh term in \eqref{eq:H_horizontalj} with
$j = 0$). In this way, each of the $7$ projectors on the left-hand
(right-hand) column of equations \eqref{eq:proj_line1}-\eqref{eq:proj_last_line}
is associated with one of the $7$ terms in $H_{\text{horizontal}}^{j}$ ($H_{\text{vertical}}^{j}).$
To see this, compare equations \eqref{eq:proj_line1}-\eqref{eq:proj_last_line}
with equations \eqref{eq:H_horizontalj} and \eqref{eq:H_verticalj}.
Now let us fix an eigenstate $|\kappa\rangle$ of \eqref{eq:sum_HVS},
which has a specific set of eigenvalues for the projectors in equations
\eqref{eq:proj_line1}-\eqref{eq:proj_last_line}. For each of the projectors
(from the list \eqref{eq:proj_line1}-\eqref{eq:proj_last_line} ) for
which $|\kappa\rangle$ has eigenvalue $1$, consider the corresponding
term from equations \eqref{eq:H_horizontalj} and \eqref{eq:H_verticalj}.
We call these terms the \emph{remaining operators} for $|\kappa\rangle$
and we classify eigenstates into two types based on these terms. 

The first type of eigenstate has the property that all of its remaining
operators commute. Any eigenvector of \eqref{eq:sum_HVS} of this type
is an eigenvector of a sum of commuting projectors and therefore has
an integer eigenvalue. 
Therefore, the lowest nonzero eigenvalue for this first
type of eigenstate is at least $1.$ 

Now consider a second type of eigenstate where at least two of its
remaining operators do not commute. Looking at equations \eqref{eq:H_horizontalj}
and \eqref{eq:H_verticalj} we see that possible pairs of noncommuting
remaining operators are\smallskip{}

(1): $\ket{1}\langle 1|_{a_{j}} \otimes \iii \otimes h_{9j + 6,9j + 7}$ and
$h_{9k + 7,9k + 8}(B_{b_{k}})$ with $a_{j} = b_{k}$ (and hence $j\neq k$),
or\smallskip{}

(2): $|0\rangle\langle 0|_{a_{j}} \otimes h_{9j + 6,9j + 7} \otimes \iii $ and
$h_{9k + 7,9k + 8}(B_{b_{k}})$ with $a_{j} = b_{k}$ (and hence $j\neq k$),
or\smallskip{}

(3): $h_{9j + 7,9j + 8}(B_{b_{j}}) \otimes \iii $ and $h_{9k + 7,9k + 8}(\sigma_{b_{k}})$
with $b_{k} = b_{j}$ (it may be that $j = k$). \smallskip{}

We claim that any eigenstate $|\kappa\rangle$ of \eqref{eq:sum_HVS}
of this second type has eigenvalue bounded below by a positive constant.
To see this, first suppose two of the remaining operators are of the
form (1). For these to be remaining operators it must be the case
that $|\kappa\rangle$ is a $ + 1$ eigenvector of both 
\begin{equation}
\iii \otimes \iii \otimes h_{9j + 6,9j + 7}\qquad\text{and}\qquad\iii \otimes \iii \otimes P_{9k + 7,9k + 8}.\label{eq:remain_ops}
\end{equation}
The eigenvalue associated with this state is bounded below by the
smallest eigenvalue of 
\begin{equation}
\ket{1}\langle 1|_{a_{j}} \otimes \iii \otimes h_{9j + 6,9j + 7} + h_{9k + 7,9k + 8}(B_{b_{k}})\label{eq:sum_case1}
\end{equation}
within the joint $ + 1$ eigenspace of the two projectors \eqref{eq:remain_ops}.
Within this space, \eqref{eq:sum_case1} acts nontrivially on an 4-dimensional
space that is the tensor product of two dimensional spaces: the single
qubit space for qubit $a_{j} = b_{j}$ of the computational register
and the space spanned by the two basis states $|01\rangle,|10\rangle$
of qubits $6(9j + 7)-1$ 
and $6(9j + 7)$ of the second clock register.
Within this space \eqref{eq:sum_case1} has the form (identifying the
4-dimensional space with two qubits): 
\[
	\ket{1}\langle 1| \otimes \iii  
	+ \frac{1}{2}\left(
			\iii \otimes \iii 
			- B^{\dagger} \otimes \ket{1}\langle 0|
			- B\otimes |0\rangle\langle 1|
	\right),
\]
 which has smallest eigenvalue $\thickapprox 0.076$. Any eigenstate
$|\kappa\rangle$ which has two remaining operators of the form (1)
has eigenvalue bounded below by this positive constant. In cases (2)
and (3) we can follow the same steps (as in case (1) above) to lower
bound the eigenvalue. In case (2) we bound it by the smallest eigenvalue
of the 2-qubit operator 
\[
	|0\rangle\langle 0| \otimes \iii  
	+ \frac{1}{2}\left(
		\iii \otimes \iii
		- B^{\dagger} \otimes \ket{1}\langle 0|
		- B\otimes |0\rangle\langle 1|
	\right)
\]
 (which is $\thickapprox0.076$) and in case (3) we get a lower bound
from the smallest eigenvalue of the $3$-qubit operator 
\[
	\frac{1}{2}\left(
		\iii \otimes \iii \otimes \iii
		-B^{\dagger} \otimes \ket{1}\langle 0| \otimes \iii 
		-B\otimes |0\rangle\langle 1| \otimes \iii 
	\right) 
		+ \frac{1}{2}\left(
			\iii \otimes \iii \otimes \iii
			-\sigma^{z} \otimes \iii \otimes \ket{1}\langle 0|
			-\sigma^{z} \otimes \iii |\otimes |0\rangle\langle 1|
		\right)
\]
 (it is also $\thickapprox 0.076$).

We have thus shown that any eigenstate (of \eqref{eq:sum_HVS}) of the
first type has a nonnegative integer eigenvalue and any eigenstate
of the second type has smallest eigenvalue bounded below by a positive
constant. This completes the proof. 
\end{proof}
We now add the two-qubit gate terms to $\iii \otimes  H_{\text{diag}}^{(M)}$
and we lower bound the smallest nonzero eigenvalue of the resulting
Hamiltonian.
\begin{lem}
\label{lem:lemdiagplusHv}
\[
	\gamma\left(\iii \otimes  H_{\text{diag}}^{(M)} + \sum_{j = 0}^{M-1}H_{V}^{j}\right) 
	= \Omega\left(\frac{1}{M^{2}}\right).
\]
\end{lem}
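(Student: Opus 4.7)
The plan is to apply Corollary \ref{cor:corollary_bound} with
\[
	H_A = \iii \otimes H_{\text{diag}}^{(M)},
	\qquad
	H_B = \sum_{j=0}^{M-1} H_V^j,
\]
so the four quantities to control are $\gamma(H_A)$, $\gamma(H_B)$, $\|H_B\|$, and $\gamma(H_B|_S)$ where $S$ is the nullspace of $H_A$. Three of these are essentially immediate. By Lemma \ref{lem:Hdiag_lemma}, $\gamma(H_A) = \gamma(H_{\text{diag}}^{(M)}) = \Omega(1/M)$. By Lemma \ref{lem:sumofHv}, $\gamma(H_B) = \Omega(1)$. Since $H_B$ is a sum of $\Theta(M)$ projectors (fourteen terms for each $j$, from \eqref{eq:H_horizontalj}--\eqref{eq:H_verticalj}), the triangle inequality gives $\|H_B\| = O(M)$.

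The remaining task, which will be the main technical step, is to prove $\gamma(H_B|_S) = \Omega(1)$. I will exploit the block-diagonal structure of $H_B|_S$ in the basis $|z\rangle|\mathcal{J}\rangle$ used in Section \ref{sub:The-zero-energy}, where $z \in \{0,1\}^{n+n_a}$ and $\mathcal{J}$ ranges over the connected components of the graph in Figure~\ref{fig:groundstates}. Observation \eqref{eq:H_vj_constraint} shows that each $H_V^j$ connects only states whose clock component $\mathcal{J}$ is one of the $25$ components inside the $j$-th copy of the subgraph from Figure~\ref{fig:colorset_defn}. Consequently $H_B|_S$ splits as: (i) a direct sum of one-dimensional zero blocks corresponding to the ``bridge'' components $\mathcal{K}^0, \mathcal{L}^j, \mathcal{M}^j$; and (ii) for each $j = 0,\dots,M-1$ and each choice of the $n+n_a - 2$ computational qubits outside $\{a_j,b_j\}$, a copy of the $100 \times 100$ matrix whose entries are given by \eqref{eq:mat_HV}.

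The block from (ii) is precisely the matrix analyzed in Section~\ref{sec:Hamiltonians-with-two}. By Lemma~\ref{lem:H_V}, its kernel is exactly four-dimensional, and (as noted there) numerical diagonalization of this fixed finite matrix certifies that its smallest nonzero eigenvalue equals some positive constant $c_0$ independent of $M$. Since every non-zero block of $H_B|_S$ is unitarily equivalent to this one matrix, we conclude $\gamma(H_B|_S) \geq c_0 = \Omega(1)$.

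Putting these four bounds into Corollary \ref{cor:corollary_bound} yields
\[
	\gamma\!\left(\iii\otimes H_{\text{diag}}^{(M)} + \sum_{j=0}^{M-1} H_V^j\right)
	\geq
	\min\!\left\{\Omega\!\left(\tfrac{1}{M}\right),\, \Omega(1)\right\}
	\cdot \frac{\Omega(1)}{2 \cdot O(M)}
	= \Omega\!\left(\tfrac{1}{M^2}\right),
\]
as required. The only nontrivial step is verifying the constant $c_0 > 0$, which reduces to the finite-dimensional spectral problem already settled by Lemma~\ref{lem:H_V}; the rest is bookkeeping.
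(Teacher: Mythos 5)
Your proposal is correct and follows essentially the same route as the paper: it applies Corollary \ref{cor:corollary_bound} with $H_A = \iii\otimes H_{\text{diag}}^{(M)}$ and $H_B = \sum_j H_V^j$, invokes Lemmas \ref{lem:Hdiag_lemma} and \ref{lem:sumofHv} for $\gamma(H_A)$ and $\gamma(H_B)$, notes $\|H_B\| = O(M)$ by counting projectors, and reduces $\gamma(H_B|_S)=\Omega(1)$ to the fixed $100\times 100$ block matrix from Lemma~\ref{lem:H_V}. (One cosmetic nitpick: the blocks written in the unnormalized basis \eqref{eq:phi_J-1} are related to the true orthonormal-basis blocks by conjugation with a fixed, $M$-independent, positive diagonal matrix rather than a unitary, but since that diagonal is constant this does not affect the $\Omega(1)$ conclusion, and the paper glosses over the same point.)
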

\begin{proof}
We apply Corollary \ref{cor:corollary_bound} with 
\begin{eqnarray*}
	H_{A} &  =  & \iii \otimes  H_{\text{diag}}^{(M)},\\
	H_{B} &  =  & \sum_{j = 0}^{M-1}H_{V}^{j}.
\end{eqnarray*}
However, we first need 
to understand the action of $H_{B}$
on the groundspace $S$ of $H_{A}$. We showed in Section \ref{sub:Completeness_Section}
that $H_{B}|_{S}$, written in the unnormalized basis \eqref{eq:phi_J-1}, is block
diagonal. Of course, the matrix has the same block diagonal structure when written in a normalized basis. Furthermore, each nonzero block is identical, equal to a specific $100\times100$ matrix (independent of $M$ or the number
of qubits). The smallest nonzero eigenvalue of $\gamma(H_{B}|_{S})$
is therefore constant, equal to the smallest nonzero eigenvalue of
this matrix. Plugging this into Corollary \ref{cor:corollary_bound}
and using the facts that $\left\Vert H_{B}\right\Vert  = \mathcal{O}(M)$,
$\gamma(H_{A}) = \Omega\left(\frac{1}{M}\right)$ (from Lemma \ref{lem:Hdiag_lemma})
and $\gamma(H_{B}) = \Omega(1)$ (from Lemma \ref{lem:sumofHv}) we
get 
\[
	\gamma\left(\iii \otimes  H_{\text{diag}}^{(M)}
	 +  \sum_{j = 0}^{M-1}H_{V}^{j}\right)
	 =  \Omega\left(\frac{1}{M^{2}}\right).
\]

\end{proof}


\subsection{Proof of Lemma \ref{lem:complicated_gap_bound}}

We are now 
ready to finish the proof of Lemma \ref{lem:complicated_gap_bound}, 
the main result of this Appendix.
We add in the single-qubit gate terms $H_{U}^{j}$ and we use the
bound from Lemma \ref{lem:sumHus}. We prove

\[
	\gamma\left(\iii \otimes 
	H_{\text{diag}}^{(M)}
	 +  \sum_{j = 0}^{M-1}H_{V}^{j}
	 +  \sum_{j = 0}^{M-1}H_{U}^{j}\right)
	 =  \Omega\left(\frac{1}{M^{5}}\right).
\]

\begin{proof}
Let 
\begin{align*}
	H_{A} &  = \iii \otimes  H_{\text{diag}}^{(M)} + \sum_{j = 0}^{M-1}H_{V}^{j}, \\
	H_{B} &  = \sum_{j}H_{U}^{j},
\end{align*}
 and consider the action of $H_{B}$ on the groundspace $S$ of $H_{A}.$
Recall the (unnormalized) basis for the groundspace of $H_{A}$ given
in equations \eqref{eq:Basis_K0} and \eqref{eq:psiV}, where $|\phi\rangle$
ranges over a complete orthonormal basis $\Lambda$ of the $n + n_{a}$-qubit
computational register. As noted in Section \ref{sec:Circuit-to-Hamiltonian-mapping},
$H_{B}|_{S}$ is block diagonal in this basis with $2^{n + n_{a}}$
blocks because $H_{B}$ only connects states associated with the same
$|\phi\rangle\in\Lambda.$ Furthermore, each block of $H_{B}|_{S}$
in this unnormalized basis is equal to $\frac{1}{4}L$ where $L$
is the Laplacian of the graph in Figure~\ref{fig:laplaciangraph}. 
We can write each block of $H_{B}|_{S}$ in an orthonormal basis as
\[
	\frac{1}{4}DLD,
\]
 where 
\[
	D  =  	\text{diag}\left( \frac{1}{\sqrt{7}},\frac{1}{\sqrt{4}},\frac{1}{\sqrt{4}},
	\frac{1}{\sqrt{43}},\frac{1}{\sqrt{4}},\frac{1}{\sqrt{4}},
	\frac{1}{\sqrt{43}},\dots,\frac{1}{\sqrt{4}},\frac{1}{\sqrt{4}},\frac{1}{\sqrt{43}} \right)
\]
 takes care of the normalizations (as per \eqref{eq:norm1}).
Thus, $\gamma(H_{B}|_{S})$ is equal to $\gamma\left(\frac{1}{4}DLD\right)$,
which we now bound. First, note that
\[
	\gamma(L) = \Omega\left(\frac{1}{M^{2}}\right),
\]
 which follows from known bounds on the eigenvalue gap of a Laplacian
matrix (for example from Theorem 4.2 of reference \cite{laplacianeigs}).
To finish the proof we relate $\gamma(L)$ to $\gamma\left(\frac{1}{4}DLD\right).$
Let $|w_{0}\rangle$ be the normalized all ones vector, the unique
zero energy vector for $L$. Then, up to normalization, $\frac{1}{D}|w_{0}\rangle$
is the unique zero energy state for $\frac{1}{4}DLD$, and for some
normalized vector $|v\rangle$ satisfying
\begin{align}
	\langle w_{0}|D^{-1}|v\rangle = 0, \label{eq:orthog}
\end{align}
we have 
\begin{align}
	\gamma\left(\frac{1}{4}DLD\right) &  = \frac{1}{4}\langle v|DLD|v\rangle\nonumber \\
 &  = \frac{1}{4}\frac{\langle q|L|q\rangle}{\langle q|q\rangle}\langle v|D^{2}|v\rangle\nonumber \\
 & \geq\frac{1}{4}\frac{\langle q|L|q\rangle}{\langle q|q\rangle}\frac{1}{43},
 \label{eq:gamma_dld}
\end{align}
 where $|q\rangle = D|v\rangle$ and we used the fact that the smallest
eigenvalue of $D^{2}$ is $\frac{1}{43}$. Write 
\[
	|q\rangle = \alpha|w_{0}\rangle + \beta|w_{\perp}\rangle,
\]
 where $\langle w_{\perp}|w_{0}\rangle = 0$ and $\langle w_{0}|w_{0}\rangle = \langle w_{\perp}|w_{\perp}\rangle = 1$
and note that 
\[
	\langle w_{0}|D^{-1}|v\rangle 
	= \langle w_{0}|D^{-2}|q\rangle 
	= \alpha\langle w_{0}|D^{-2}|w_{0}\rangle 
	+ \beta\langle w_{0}|D^{-2}|w_{\perp}\rangle = 0
\]
by \eqref{eq:orthog} and hence
\[
	\left|\frac{\alpha}{\beta}\right|
	 =  \frac{
			\left|\langle w_{0}|D^{-2}|w_{\perp}\rangle\right|
			}{
			\langle w_{0}|D^{-2}|w_{0}\rangle
		}
	\leq \frac{43}{4},
\]
 where we used the facts that the smallest eigenvalue of $D^{-2}$
is $4$ and its largest eigenvalue is $43.$ Noting that 
\begin{align*}
	\frac{\langle q|L|q\rangle}{\langle q|q\rangle} 
		&  = \frac{\beta^{2}}{\alpha^{2} + \beta^{2}}\langle w_{\perp}|L|w_{\perp}\rangle\\
 & \geq\frac{1}{\left(\left(\frac{43}{4}\right)^{2} + 1\right)}\gamma(L)\\
 &  = \Omega\left(\frac{1}{M^{2}}\right),
\end{align*}
we get 
\[
	\gamma(H_{B}|_{S})  =  \gamma\left(\frac{1}{4}DLD\right)
	  =  \Omega\left(\frac{1}{M^{2}}\right)
\]
 by plugging into \eqref{eq:gamma_dld}. Now applying Corollary
\ref{cor:corollary_bound}, using Lemmas \ref{lem:sumHus} and \ref{lem:lemdiagplusHv}
and using the bound $\left\Vert H_{B}\right\Vert  = \mathcal{O}(M)$,
we finally get the desired result. \end{proof}

\end{document}